\crefname{section}{\textsection}{\textsection}
\crefname{subsection}{\textsection}{\textsection}
\crefname{subsubsection}{\textsection}{\textsection}
\DeclareMathAlphabet{\mathpzc}{OT1}{pzc}{m}{it}
\renewcommand*{\@textcolor}[3]{%
  \protect\leavevmode
  \begingroup
    \color#1{#2}#3%
  \endgroup
}
\numberwithin{equation}{section}
\newcommand{\bdm}{\begin{displaymath}}
\newcommand{\edm}{\end{displaymath}}
\newcommand{\bay}{\begin{array}{c}}
\newcommand{\eay}{\end{array}}
\newcommand{\ben}{\begin{enumerate}}
\newcommand{\een}{\end{enumerate}}
\newcommand{\beq}{\begin{equation}}
\newcommand{\eeq}{\end{equation}}
\newcommand{\beqn}{\begin{eqnarray}}
\newcommand{\eeqn}{\end{eqnarray}}
\newcommand{\bml}[1]{\begin{multline} #1 \end{multline}}
\newcommand{\bmln}[1]{\begin{multline*} #1 \end{multline*}}
\newcommand{\lf}{\left}
\newcommand{\ri}{\right}
\newcommand{\xv}{\mathbf{x}}
\newcommand{\rv}{\mathbf{r}}
\newcommand{\nv}{\mathbf{n}}
\newcommand{\fv}{\mathbf{F}}
\newcommand{\ev}{\mathbf{e}}
\newcommand{\gamv}{\bm{\gamma}}
\newcommand{\diff}{\mathrm{d}}
\newcommand{\eps}{\varepsilon}
\newcommand{\dist}{\mathrm{dist}}
\newcommand{\as}{\alpha_{\star}}
\newcommand{\nuv}{\bm{\nu}}
\newcommand{\tav}{\bm{\tau}}
\renewcommand{\ss}{\mathsf{s}}
\renewcommand{\tt}{\mathsf{t}}
\newcommand{\glm}{\psi^{\mathrm{GL}}}
\newcommand{\aav}{\mathbf{A}}
\newcommand{\theo}{\Theta_0}
\newcommand{\dom}{\mathscr{D}}
\newcommand{\eones}{E^{\mathrm{1D}}_{\star}}
\newcommand{\fol}{f_{0}}
\newcommand{\Fol}{F_{0}}
\newcommand{\kol}{K_{0,\ell}}
\newcommand{\fs}{f_{\star}}
\newcommand{\onedom}{\mathscr{D}^{\mathrm{1D}}}
\newcommand{\curl}{\mathrm{curl}}
\newcommand{\ann}{\mathcal{A}}
\newcommand{\anne}{\mathcal{A}_{\eps}}
\newcommand{\trial}{\psi_{\mathrm{trial}}}
\newcommand{\disp}{\displaystyle}
\newcommand{\tx}{\textstyle}
\newcommand{\R}{\mathbb{R}}
\newcommand{\E}{\mathcal{E}}
\newcommand{\OO}{\mathcal{O}}
\newcommand{\al}{\alpha}
\newcommand{\Om}{\Omega}
\newcommand{\one}{\mathds{1}}
\newcommand{\Hcc}{H_{\mathrm{c}2}}
\newcommand{\Hccc}{H_{\mathrm{c}3}}
\newcommand{\Hstar}{H_{\mathrm{corner}}}
\newtheorem{teo}{Theorem}[section]
\newtheorem{lem}{Lemma}[section]
\newtheorem{pro}{Proposition}[section]
\newtheorem{cor}{Corollary}[section]
\newtheorem{asum}{Assumption}
\newtheorem{conj}{Conjecture}
\theoremstyle{remark}
\newtheorem{remark}{Remark}[section]
\newcommand{\annol}{I_{\bar{\ell}}}
\newcommand{\fone}{\E^{\mathrm{1D}}}
\newcommand{\eone}{E^{\mathrm{1D}}}
\newcommand{\eoneo}{E ^{\rm 1D}_{0}}
\newcommand{\alO}{\alpha_0}
\newcommand{\jv}{\mathbf{j}}
\renewcommand{\leq}{\leqslant}
\renewcommand{\geq}{\geqslant}
\renewcommand{\Im}{\mathrm{Im}}
\newcommand{\corner}{\Gamma_{\beta}(L,\ell)}
\newcommand{\bdo}{\partial \Gamma_{\mathrm{out}}}
\newcommand{\bdi}{\partial \Gamma_{\mathrm{in}}}
\newcommand{\bdbd}{\partial \Gamma_{\mathrm{bd}}}
\newcommand{\gpm}{\Gamma^{\pm}}
\newcommand{\gcomp}{\Gamma_{\mathrm{c}}}
\newcommand{\ggam}{\Gamma_{\gamma}}
\newcommand{\gcompm}{\Gamma_{\mathrm{c}}^{\pm}}
\newcommand{\ggampm}{\Gamma_{\gamma}^{\pm}}
\newcommand{\thebis}{\vartheta_{\mathrm{bis}}}
\newcommand{\ecorr}{E_{\mathrm{corr}}}
\newcommand{\ecorn}{E_{\mathrm{corner},\beta}}
\newcommand{\exl}{\OO(\ell^{-\infty})}
\begin{document}

\title{Almost Flat Angles in Surface Superconductivity}

\author[M. Correggi]{Michele Correggi}
\address{Dipartimento di Matematica, Politecnico di Milano, P.zza Leonardo da Vinci, 32, 20133, Milano, Italy.}
\email{michele.correggi@gmail.com}
\urladdr{https://sites.google.com/view/michele-correggi}

\author[E.L. Giacomelli]{Emanuela L. Giacomelli}
\address{Department of Mathematics, Ludwig-Maximilians Universit\"at M\"unchen, Theresienstr., 39, 80333, M\"unchen, Germany.}
\email{emanuela.giacomelli.elg@gmail.com}

\date{\today}

\begin{abstract}
	Type-II superconductivity is known to persist close to the sample surface in presence of a strong magnetic field. As a consequence, the ground state energy in the Ginzburg-Landau theory is approximated by an effective one-dimensional model. As shown in \cite{CG2}, the presence of corners on the surface affects the energy of the sample with a non-trivial contribution. In \cite{CG2}, the two-dimensional model problem providing the corner energy is implicitly identified and, although no explicit dependence of the energy on the corner opening angle is derived, a conjecture about its form is proposed. We study here such a conjecture and confirm it, at least to leading order, for corners with almost flat opening angle.
\end{abstract}

\maketitle

\tableofcontents

\section{Introduction}\label{sec:intro}
The phenomenon of {\it superconductivity} was first discovered in 1911 by H.\ Kamerlingh Onnes and it is nowadays well understood: the electrical resistance of various materials, such as mercury, drops down dramatically below a critical temperature $T_c$, whose value depends on the material. Its microscopic explanation relies on BCS theory \cite{BCS}, which describes superconductivity as a quantum critical phenomenon in which the conducting electrons arrange in weakly bound pairs (Cooper pairs) and, once they are created, the pairs exhibit a collective behavior, which is similar to the one appearing in Bose-Einstein condensate and which is responsible of the sudden drop of resistivity.

When a superconductor is immersed in a magnetic field, however, an even richer physics may emerge: for instance, if the intensity of the external field is small enough, the material behaves like a perfect superconductor. Hence, no effect of the field is observed in the sample and the field is expelled from the material, giving rise to the famous {\it Meissner effect}. However, if the field is strong enough, the material loses all its superconducting properties. According to how this breakdown  occurs, one can distinguish between two different kinds of superconductors: for type-I superconductors, there is only a first order phase transition, while type-II superconductors exhibit a more complex behavior and the normal and superconducting phase may coexist (mixed state).

Close enough to the critical temperature, the response of the superconductor to the external magnetic field is very well described through a macroscopic model proposed in 1950 by V.L. Ginzburg and L.D. Landau \cite{GL} -- the Ginzburg-Landau (GL) theory --, where all the information is encoded in a one-particle wave function known as {\it order parameter} together with the magnetic potential generating the observed magnetic field. Such a theory is clearly much simpler than the microscopic description provided by BCS theory, but nevertheless its predictions are extremely accurate: the only external parameters are indeed the intensity and direction of the applied field and a length ({\it London penetration depth}), which is characteristic of each material and appears in the theory through a parameter $ \kappa > 0 $, whose threshold value $ \kappa = 1 $ conventionally separates type-I from type-II superconductors.

\subsection {Ginzburg-Landau Theory} The GL free energy of a type-II superconducting sample made of an infinite wire of constant cross section $\Om$ is given by
\beq\label{eq: GL funct k}
	\mathcal{G}^{\mathrm{GL}}_\kappa[\psi,\aav] = \int_\Om\, \diff\rv\, \lf\{|(\nabla + i h_{\mathrm{ex}}\aav)\psi|^2 - \kappa^2|\psi|^2 + \tx\frac{1}{2}\kappa^2|\psi|^4\ri\} + h_{\mathrm{ex}}^2\int_{\mathbb{R}^2}\,\diff\rv\, |\mathrm{curl}\aav-1|^2.
\eeq
where we assumed that the applied magnetic field is of uniform intensity $h_{\mathrm{ex}}$ along the superconducting wire and that it is perpendicular to $\Om$; $h_{\mathrm{ex}}\aav: \mathbb{R}^2\rightarrow \mathbb{R}^2$ is the vector potential generating the induced magnetic field $h_{\mathrm{ex}}\mathrm{curl}\aav = h_{\mathrm{ex}}(\partial_1 \aav_2 - \partial_2\aav_1)$; $\psi: \Om\rightarrow \mathbb{C}$ is the order parameter, i.e., the center of mass wave function of the Cooper pairs. The modulus $|\psi|$ is a measure of the relative density of the superconducting Cooper pairs, i.e., $0\leq |\psi|\leq 1$ and, wherever $|\psi| = 0$, there are no Cooper pairs (loss of superconductivity), whereas, if $|\psi| = 1$ somewhere, then all the electrons are superconducting. The phase of $\psi$ encodes the information about the stationary current flowing in the superconductor, i.e.,
\beq\label{eq: sup current}
	\textbf{j}[\psi]:= \tx\frac{i}{2}(\psi\nabla\psi^\ast - \psi^\ast\nabla\psi ) = \Im(\psi^\ast\nabla\psi).
\eeq
We are interested in studying the equilibrium state of the sample, which is obtained by minimizing \eqref{eq: GL funct k} w.r.t. $\psi$ and $\aav$. Moreover, we focus on (extreme) type-II superconductors: indeed, we study the limit $\kappa\rightarrow \infty$, also known as London limit.

\subsection{Sample with corners} Before proceeding further, we specify the assumption we make on $\Om$: we consider a bounded and simply connected domain $\Om$ with piecewise smooth boundary, such that the unit inward normal $\nuv$ to the boundary is well defined everywhere but in a {\it finite} number of points. We call these points the \textit{corners} of $\Om$. We now state our assumption in a more precise way (see also \cite{Gr} for a detailed discussion of domains with non-smooth boundaries).

\begin{asum}[Piecewise smooth boundary]\label{asum: 1}
	\mbox{}	\\
	Let $\Om$ be a bounded and simply connected open set of $\mathbb{R}^2$. We assume that $\partial \Om$ is a smooth curvilinear polygon, i.e., for every $\rv\in \partial\Om$ there exists a neighborhood $U$ of $\rv$ and a map $\Phi: U\rightarrow \mathbb{R}^2$, such that
\begin{itemize}
	\item  $\Phi$ is injective;
	\item  $\Phi$ together with $\Phi^{-1}$ (defined from $\Psi(U))$ are smooth;
	\item  the region $\Om\cap U$ coincides with either $\{\rv\in \Omega\cap U\, |\, (\Phi(r))_1<0\}$ or $\{\rv\in \Om\cap U\, |\, (\Phi(\rv))_2 <0\}$ or $\{\rv\in \Om \cap U\, |\, (\Phi(\rv))_1 <0, (\Phi(\rv))_2<0\}$, where $(\Phi)_j$ stands for the j-th component of $\Phi$.
\end{itemize}
\end{asum}

\begin{asum}[Boundary with corners]\label{asum: 2}
	\mbox{}	\\
We assume that the set $\Sigma$ of corners of $\partial\Om$ is non-empty but finite and we denote by $\beta_j$ the angle of the $j-th$ corner (measured towards the interior).
\end{asum}

\subsection{Critical fields} We now investigate the response of the superconductor to the external magnetic field. It is well known that the material exhibits a different behavior depending on the intensity of the field. Let us first describe what occurs for samples with {\it smooth} cross sections $\Om$: one can identify three different critical values of the applied field marking phase transitions of the material. More precisely, the first critical value $H_{c1}$ is such that the minimizing order parameter has at least one vortex as soon as $h_{\mathrm{ex}}>H_{c1}$, i.e., superconductivity is lost at isolated points; the second critical field $H_{c2}$ is related to the transition form bulk to boundary behavior, meaning that if $h_{\mathrm{ex}}> H_{c2}$, superconductivity survives only near the boundary of the sample; the last critical field $H_{c3}$ is such that above it the material behaves as a normal conductor. Asymptotically, as $ \kappa \to + \infty $,
\beq\label{eq: critical fields}
	H_{c1} \sim C_\Om\log\kappa, \qquad H_{c2}\sim \kappa^2,\qquad H_{c3}\sim \frac{1}{\Theta_0}\kappa^2 
\eeq
where $C_\Om > 0 $ depends only on the domain $\Om$ and $\Theta_0\simeq 0.59$ is a universal constant. For more details about the critical fields for smooth domains, we refer respectively to \cite[\textsection 2]{SS} and \cite[\textsection\textsection 10.6 \& 13]{FH1}.

Let us now underline the expected differences in presence of corners. The first one is a possible change of the asymptotic value of the third critical field \cite{BNF}. More precisely, one observes the transition to the normal state only for fields whose intensity is larger than
\beq\label{eq: 3rd critical field corner}
 H_{c3} = \frac{1}{\mu(\beta)}\kappa^2,
\eeq
where $\mu(\beta)$ stands for the ground state energy of a Schr\"odinger operator with uniform magnetic field in the infinite sector $W_\beta$ with opening angle $\beta$, i.e.,
\[
	\mu(\beta) := \inf\mathrm{spec}_{L^2(W_\beta)} \lf(- \lf(\nabla + \tx\frac{1}{2}i\mathrm{x}^\perp \ri)^2 \ri)
\]
Note that the universal constant $\Theta_0$ related to the third critical field for smooth domains (see \eqref{eq: critical fields}) is nothing but the ground state energy of the same Schr\"odinger operator on the half-plane (i.e., for $ \beta = \pi $).

The shift of the third critical field occurs then whenever there is a corner with angle $ \beta $ such that $ \mu(\beta) < \theo $. This is proven for $ \beta \leq \frac{\pi}{2} + \epsilon$ \cite{Bo,Ja,ELP} but, based on numerical experiments, conjectured to be true \cite{Bo, ELP} for any {\it acute} angle $ 0 < \beta < \pi $. Moreover, as the applied field gets closer to \eqref{eq: 3rd critical field corner} from below, the order parameter concentrates around the corner with smallest opening angle and decays exponentially on a scale $ \kappa^{-1} $ far from it \cite[Thm. 1.6]{BNF}. Hence, before disappearing, superconductivity survives only close to the corner(s). In \cite{CG,CG2}, we proved that, if 
\beq
	\kappa^2< h_{\mathrm{ex}}< \frac{1}{\Theta_0} \kappa^2,
\eeq
 superconductivity is however uniformly distributed (in $L^2$ sense) along the boundary and exponentially small in the bulk, so recovering the same behavior as in smooth domains. This strongly suggests the emergence of a new critical field 
\beq
	\Hstar \sim \frac{1}{\Theta_0} \kappa^2
\eeq
marking the transition from surface to corner superconductivity.


\bigskip

\begin{small}
\noindent\textbf{Acknowledgments.} The authors are thankful to \textsc{S. Fournais} and \textsc{N. Rougerie} for useful comments and remarks about this work. The support of the National Group of Mathematical Physics (GNFM) of INdAM through Progetto Giovani 2016 ``Superfluidity and Superconductivity'' and Progetto Giovani 2018 ``Two-dimensional Phases'' is also acknowledged. 
\end{small}

\section{Main Results}
\label{sec: main result}

As anticipated, we aim at studying a model sample in the surface superconductivity regime, which is identified by an intensity of the external field 
 \beq\label{eq: surf sup regime}
 		h_{\mathrm{ex}} := b\kappa^2, \qquad 1< b <\Theta_0^{-1},
 \eeq
since, in this parameter windows, superconductivity survives only near the boundary of the sample. More precisely, one can prove that, if \eqref{eq: surf sup regime} holds, the minimizing order parameter exponentially decays in the distance from the boundary $\partial \Om$. We stress that the presence of corners does not affect this behavior \cite[\textsection 15.3.1]{FH1}. 

\subsection{Surface superconductivity in domains with corners}
Before discussing further the properties of any minimizing configuration, we perform a change of units, which proves to be very convenient in the surface regime given by \eqref{eq: surf sup regime} above. We introduce a new parameter
\begin{equation}
	\eps := b^{-\frac{1}{2}}\kappa^{-1}\ll 1,
\end{equation}
so that the GL functional becomes
\begin{equation}\label{eq: GL functional eps}
	\mathcal{E}^{\mathrm{GL}}_\eps[\psi, \aav;\Omega] := \int_\Omega\, \diff\rv\, \left\{\left|\left(\nabla + i \tx\frac{\aav}{\eps^2}\right)\psi\right|^2 - \disp\frac{1}{2b\eps}(2|\psi|^2 - |\psi|^4)\right\} + \frac{1}{\eps^4}\int_{\mathbb{R}^2}\, \diff\rv\, |\curl\aav - 1|^2, 
\end{equation}
We then set $E^{\mathrm{GL}}_\eps := \min_{(\psi, \aav)\in \dom^{\mathrm{GL}}}\mathcal{E}^{\mathrm{GL}}_\eps [\psi, \aav; \Omega]$,
where the minimization domain is given by 
\begin{equation}
	\dom^{\mathrm{GL}} := \lf\{(\psi, \aav)\in H^1(\Omega)\times H^1_{\mathrm{loc}}(\mathbb{R}^2; \mathbb{R}^2)\, \big|\, \mathrm{curl}\aav - 1\in L^2(\mathbb{R}^2) \ri\},
\end{equation}
and we denote by $(\psi^{\mathrm{GL}}, \aav^{\mathrm{GL}})$ any corresponding minimizing configuration.

The precise statement of the order parameter decay in this setting takes the form of the so-called Agmon estimates:

 \begin{teo}[Agmon estimates \mbox{\cite[\textsection 15.3.1]{FH1}}]
 	\label{teo: agmon}
 	\mbox{}	\\
 Let $\Om\subset\mathbb{R}^2$ be a bounded and simply connected domain, if $b > 1$. Then, any critical point $(\psi, \aav)$ of the GL functional satisfies\footnote{We denote by $ C $ a positive finite constant, whose value may change from line to line.}
 	\begin{equation}
		\int_{\Om}\diff\rv\, e^{\frac{c(b) \dist(\rv,\Om)}{\eps}}\bigg\{|\psi|^2 + \eps^2\bigg|\bigg(\nabla + i\frac{\aav}{\eps^2}\bigg)\psi\bigg|^2\bigg\}\leq C \int_{\{\dist(\rv, \partial\Om)\leq \eps\}}\diff\rv\, |\psi|^2
 	\end{equation}
 	for some $ c(b) > 0 $.
 \end{teo}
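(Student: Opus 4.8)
The plan is to run a magnetic Agmon argument based on the first Ginzburg--Landau (variational) equation rather than on the energy directly. The starting point is the Euler--Lagrange equation for the order parameter, written here with the abbreviation $A:=\aav/\eps^2$,
\[
	-\lf(\nabla + i\tx\frac{\aav}{\eps^2}\ri)^2\psi = \tx\frac{1}{b\eps}\lf(1 - |\psi|^2\ri)\psi \quad\text{in }\Om,
	\qquad \nuv\cdot\lf(\nabla + i\tx\frac{\aav}{\eps^2}\ri)\psi = 0 \quad\text{on }\partial\Om,
\]
together with the a priori bound $\|\psi\|_{L^\infty(\Om)}\leq 1$, which follows from the maximum principle (Kato's inequality) applied to the same equation. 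I would also record that the field-energy term in \eqref{eq: GL functional eps} forces $\curl\aav = 1 + o(1)$ in a sufficiently strong norm for any critical point of bounded energy, so that the effective field seen by $\psi$ is $\curl A = \eps^{-2}(1+o(1))$.

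Next I would introduce a Lipschitz Agmon weight $\Phi_\eps(\rv) = \tx\frac{c(b)}{\eps}\, d(\rv)$, where $d$ is a regularization of $\dist(\rv,\partial\Om)$, truncated at a large value so that $e^{\Phi_\eps}$ is bounded; the truncation is removed at the end by monotone convergence. Testing the equation against $e^{2\Phi_\eps}\psi$, taking real parts and integrating by parts, the boundary term vanishes thanks to the Neumann condition, and the magnetic IMS localization identity yields
\[
	\int_\Om \lf|e^{\Phi_\eps}\lf(\nabla+i\tx\frac{\aav}{\eps^2}\ri)\psi\ri|^2 - \int_\Om |\nabla\Phi_\eps|^2 e^{2\Phi_\eps}|\psi|^2 = \tx\frac{1}{b\eps}\int_\Om e^{2\Phi_\eps}\lf(1-|\psi|^2\ri)|\psi|^2 \leq \tx\frac{1}{b\eps}\int_\Om e^{2\Phi_\eps}|\psi|^2 .
\]
The core of the argument is then a magnetic lower bound on the kinetic term. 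Away from the boundary the relevant operator is the full-plane Landau Hamiltonian with field $\eps^{-2}$, for which $\int|(\nabla+iA)v|^2\geq \eps^{-2}\int|v|^2$; localizing this bound with a partition of unity that separates the bulk from the $O(\eps)$ boundary layer gives $\int_\Om|e^{\Phi_\eps}(\nabla+iA)\psi|^2 \geq \eps^{-2}\int_{\{d\geq C\eps\}}e^{2\Phi_\eps}|\psi|^2 - R$, where the localization error $R$ has density $O(\eps^{-2})$ but is supported in $\{d\leq C\eps\}$, where $e^{\Phi_\eps}=O(1)$. Since $|\nabla\Phi_\eps|^2 = c(b)^2/\eps^2$ and $\tx\frac{1}{b\eps}=o(\eps^{-2})$, choosing $c(b)$ small enough that $\eps^{-2}-|\nabla\Phi_\eps|^2-\tx\frac{1}{b\eps}\geq \tx\frac14\eps^{-2}$ lets the weighted bulk mass be absorbed into the kinetic term.

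Collecting the surviving terms, all of which live in the boundary layer where the weight is bounded, produces $\int_\Om e^{2\Phi_\eps}|\psi|^2\leq C\int_{\{\dist(\rv,\partial\Om)\leq\eps\}}|\psi|^2$; feeding this back into the IMS identity and bounding $|\nabla\Phi_\eps|^2+\tx\frac1{b\eps}\leq C\eps^{-2}$ controls $\eps^2\int e^{2\Phi_\eps}|(\nabla+iA)\psi|^2$ by the same right-hand side, giving
\[
	\int_\Om e^{2\Phi_\eps}\lf(|\psi|^2 + \eps^2\lf|\lf(\nabla+i\tx\frac{\aav}{\eps^2}\ri)\psi\ri|^2\ri) \leq C\int_{\{\dist(\rv,\partial\Om)\leq \eps\}}|\psi|^2,
\]
which is the claim after renaming the constant in the exponent. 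The main obstacle I expect is the non-smoothness of $\partial\Om$ at the corners: there $\dist(\cdot,\partial\Om)$ is only Lipschitz, so the regularization $d$ and the justification of the integration by parts require a density/approximation argument, and the clean magnetic Neumann lower bound is more delicate near a corner than along a smooth arc. Fortunately these difficulties are confined to the $O(\eps)$ neighborhood of the corner set $\Sigma$, i.e.\ precisely the boundary layer that already feeds the right-hand side, so one only needs the qualitative fact that the corner contributions have density $O(\eps^{-2})$ and are supported in $\{d\leq C\eps\}$; the quantitative corner analysis, which is the actual subject of the paper, is not needed here.
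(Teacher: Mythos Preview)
The paper does not prove this theorem at all: it is quoted verbatim from \cite[\textsection 15.3.1]{FH1} and used as a black box. Your sketch is precisely the standard Agmon argument carried out in that reference, so there is nothing to compare against in the present paper.

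One substantive remark on your write-up: you reproduced the coefficient $\tx\frac{1}{2b\eps}$ from the displayed functional \eqref{eq: GL functional eps}, but that is a typo in the paper --- the correct coefficient coming from the rescaling $\eps = b^{-1/2}\kappa^{-1}$ of \eqref{eq: GL funct k} is $\tx\frac{1}{2b\eps^{2}}$, and the Euler--Lagrange equation reads $-(\nabla+i\aav/\eps^{2})^{2}\psi = \tx\frac{1}{b\eps^{2}}(1-|\psi|^{2})\psi$. With the correct power the potential term is \emph{not} $o(\eps^{-2})$ but exactly of order $\eps^{-2}$, and the inequality you need to close the bulk absorption becomes $1 - \tx\frac{1}{b} - c(b)^{2} > 0$; this is where the hypothesis $b>1$ actually enters and determines the admissible range of $c(b)$. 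With the $\eps^{-1}$ you wrote, the argument would go through for any $b>0$, which is false. Otherwise the structure of your argument (IMS localization, lowest-Landau-level lower bound $\eps^{-2}$ in the bulk, partition of unity isolating the $O(\eps)$ layer, feedback to control the weighted kinetic term) is correct and matches the reference.
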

 
Thanks to the above result, one can restrict the analysis to a boundary layer of thickness of order $ \eps $, since the energy contribution of the rest of the sample is small: for instance, if we restrict to the region
\beq
	\anne : = \lf\{ \rv \in \Omega \: \big| \: \dist(\rv,\partial \Omega) \leq c_0 \eps |\log\eps| \ri\},
\eeq
for a (arbitrarily large) constant $ c_0 > 0 $, the above \cref{teo: agmon} guarantees that the energy in $ \Omega \setminus \anne $ is smaller than a (arbitrarily large) power of $ \eps $, which we denote by saying that it is $ \OO(\eps^{\infty}) $ there. The restriction to $ \ann $ is quite relevant, since it allows to use suitable tubular coordinates: let $ \gamv: [0,|\partial \Omega|) \to \partial \Omega $ be a parametrization of the boundary, then we denote by $(\ss,\tt) \in [0, |\partial \Omega|) \times [0, c_0 \eps |\log\eps|] $ the coordinates satisfying
\begin{equation}\label{eq: diff}
	\mathbf{r}(\ss,\tt) =: \gamv^\prime(\ss) + \tt \nuv(\ss), \qquad \forall \rv\in \anne,
\end{equation}
where $ \nuv $ stands for the inward normal to the boundary and $ \tt := \dist (\rv, \partial\Omega)$. The relation \eqref{eq: diff} introduce a local diffeomorphism in the smooth portion of $ \anne $.  We also denote by $(s,t)$ the $\eps-$rescaled counterparts of $ (\ss, \tt) $, i.e., $s:=\ss/\eps$, $t:= \tt\eps$. The curvature of the boundary is denoted by $\mathfrak{K}(\ss)$ and we set $ k(s) : = \mathfrak{K}(\eps s) $ for short, so that, e.g., $ \diff\rv = \eps^2 \diff t\, \diff s\, (1-\eps k(s))t $.

We now focus on the energy asymptotics in the surface superconductivity regime. The most accurate result about is proven in \cite{CG2} (see also the review \cite{Cor}) and reads as follows.

\begin{teo}[GL energy in piecewise smooth domains \mbox{\cite[Thm. 2.1]{CG2}}]
	\label{thm: CG1}
	\mbox{}		\\
	Let $\Omega\subset\mathbb{R}^2$ be any bounded simply connected domain satisfying \cref{asum: 1} and \cref{asum: 2}. Then, for any fixed $1 < b < \Theta_0^{-1}$, as $\eps\rightarrow 0$, it holds, 
	\begin{equation}\label{eq: asym CG2}
		E^{\mathrm{GL}}_\eps = \frac{|\partial\Omega| E^{\mathrm{1D}}_\star}{\eps} - E_\mathrm{corr}\int_0^{|\partial\Om|}\, \diff \ss \, \mathfrak{K}(\ss) + \sum_{j=1}^N E_{\mathrm{corner}, \beta_j} + o(1).
	\end{equation}
\end{teo}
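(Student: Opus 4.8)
The plan is to establish \eqref{eq: asym CG2} by matching upper and lower bounds, using \cref{teo: agmon} to confine the whole analysis to the boundary layer $\anne$, outside of which the energy contribution is $\OO(\eps^\infty)$. Inside $\anne$ the order parameter is expected to factorize: on the smooth portions of $\partial\Om$ it should look like the optimal one-dimensional profile $\fs(t)$, minimizing the effective half-line functional $\fone$ whose minimal energy is $\eones$, modulated by an oscillating phase tuned to the optimal momentum $\as$; near each corner of aperture $\beta_j$ it should instead be governed by a genuinely two-dimensional model problem on the infinite sector $W_{\beta_j}$. The leading term $|\partial\Om|\eones/\eps$ is the bulk-of-the-surface energy, the correction $-\ecorr\int_0^{|\partial\Om|}\curv\,\diff\ss$ arises from the second-order expansion of the area element and of the magnetic phase in the tubular coordinates $(\ss,\tt)$ of \eqref{eq: diff}, and $\sum_j E_{\mathrm{corner},\beta_j}$ collects the localized corner defects.

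For the upper bound I would build an explicit trial pair by gluing local models with smooth cutoffs. On each maximal smooth arc, away from the corners, I set $\psi \simeq \fs(\tt/\eps)\,e^{i\,\mathrm{phase}(\ss,\tt)}$, with the phase fixed by $\as$ and corrected at order $\eps$ to absorb the local curvature; the potential $\aav$ is chosen in a convenient gauge with $\curl\aav=1$ to leading order. In a neighborhood of each corner I insert a quasi-minimizer of the sector model problem, whose renormalized energy is precisely $\ecorn$. I interpolate between the corner state and the two adjacent arcs on overlap strips of width $\eps|\log\eps|$, where the cost of the cutoffs is controlled by Agmon decay and is $o(1)$. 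Summing the arc contributions reproduces $|\partial\Om|\eones/\eps-\ecorr\int\curv\,\diff\ss$, while the corner insertions add $\sum_j E_{\mathrm{corner},\beta_j}$, yielding the upper bound up to $o(1)$.

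For the lower bound I would again reduce to $\anne$ and replace $\aav$ by a gauge-equivalent potential close to the generator of the unit constant field, the discrepancy being negligible by the control of $\curl\aav-1$ in \eqref{eq: GL functional eps}. I then partition $\partial\Om$ into the smooth arcs and small corner windows of size $\eps|\log\eps|$, and localize $\glfe[\psi,\aav;\Omega]$ by a quadratic (IMS-type) partition of unity, the localization errors being absorbed by the Agmon weight. On each smooth piece, after passing to $(\ss,\tt)$, I bound the local energy from below by $\eones$ per unit length minus the curvature term, via the variational/spectral lower bound for the effective half-line operator together with the positivity of the nonlinear remainder. On each corner window I bound the energy from below by the renormalized sector energy $\ecorn$. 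Collecting all pieces matches the upper bound to leading order.

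The main obstacle is the rigorous definition and control of the corner contribution $\ecorn$. Unlike the smooth arcs, the corner region does not reduce to a one-dimensional problem, and the naive sector energy is infinite: it must be renormalized by subtracting the divergent surface energy of the two incoming edges. The delicate point is to show that this renormalized energy exists as a finite limit, independent of the truncation scale, and that both the trial state and the lower-bound localization are compatible with this subtraction up to $o(1)$ errors. A secondary difficulty is the consistent treatment of the magnetic gauge across the corner and the reconciliation of the oscillating phases of the two adjacent arcs with the sector model, which must be arranged so as not to generate spurious $\OO(1)$ energy.
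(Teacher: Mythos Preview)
This theorem is not proved in the present paper: it is quoted verbatim from \cite[Thm.~2.1]{CG2} as background, and the paper's own contribution is the analysis of the corner term $E_{\mathrm{corner},\beta}$ for $\beta$ close to $\pi$ (\cref{teo: asympt E corner near pi}). So there is no proof here to compare your proposal against.

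That said, your outline is a fair sketch of the strategy of \cite{CG2} and its predecessors \cite{CR1,CR2,CR3}: Agmon reduction to $\anne$, an upper bound by gluing the one-dimensional profile $f_0 e^{i\Phi}$ on smooth arcs to corner trial states, and a lower bound by localization. One point where your description is vague is the mechanism of the lower bound on smooth pieces. The phrase ``variational/spectral lower bound for the effective half-line operator together with the positivity of the nonlinear remainder'' hides the actual device: one writes $\psi = f_0 e^{i\Phi} u$ and reduces to a weighted functional $\E_0[u]$ of the form visible in \cref{pro: energy splitting}; the only non-positive term, the current $-2(t+\alpha_0)f_0^2 j_s[u]$, is integrated by parts against the potential $F_0$ of \eqref{eq: fol}, and positivity is then obtained from the pointwise sign of the cost function $K_{0,\ell}$ in \eqref{eq: Kol}. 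This is not a linear spectral bound but a genuinely nonlinear splitting, and it is the step that produces both the curvature correction and, in the corner case, the boundary terms along the bisectrix. Your proposal also correctly flags the two real difficulties of \cite{CG2}: the existence of the renormalized limit \eqref{eq: def Ecorner} and the phase matching across the corner, which are exactly what the boundary condition in $\dom_\star$ and the choice of $\psi_\star$ in \eqref{eq: psistar} are designed to handle.
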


The first three terms on the r.h.s. \eqref{eq: asym CG2} can be interpreted as follows:
	\begin{itemize}
		\item the leading order term of order $ \eps^{-1} $ is proportional to the length of the boundary and its coefficient is given by a one-dimensional model problem (see \eqref{eq: def E1D} below), whose expression is independent of the boundary curvature;
		\item the curvature corrections to the energy are contained in the second term, which is of order $ 1 $ and whose coefficient is also curvature-independent (see \eqref{eq: def Ecorr});
		\item the contribution of corners is also of order $ 1 $ and it is given by the third term in the energy expansion in terms of an effective model on a wedge-like region (see \eqref{eq: def Ecorner}).
	\end{itemize}
Both the quantities $E^{\mathrm{1D}}_\star$ and $E_{\mathrm{corr}}$ are determined through an effective one-dimensional model describing the variation of the modulus of the order parameter along the normal to the boundary. More precisely, we set
\begin{equation}\label{eq: def E1D}
	E^{\mathrm{1D}}_\star := \inf_{\alpha\in\mathbb{R}}\inf_{f\in\dom^{\mathrm{1D}}} \fone_{\alpha}[f; \R^+],	
\eeq
\beq
	\label{eq: def fone}
	\fone_{\alpha}[f; \R^+] : = \int_0^{+\infty}
\, \diff t\, \left\{|\partial_t f|^2 + (t+\alpha)^2f^2 -\frac{1}{2b}(2f^2 - f^4)\right\},
\end{equation}
with $ \dom^{\mathrm{1D}} : = \lf\{ f \in H^1(\R) \: | \: t f \in L^2(\R) \ri\} $, and
\begin{equation}\label{eq: def Ecorr}
	E_{\mathrm{corr}} := \int_0^{+\infty}\, \diff  t \: t \left\{|\partial_t f_\star|^2 + \fs^2\left(-\alpha_\star (t+\alpha_\star) -\frac{1}{b} + \frac{1}{2b}f_\star^2\right)\right\} = \frac{1}{3}f_{\star}^2(0)\alpha_\star - E^{1D}_\star,
\end{equation}
where $\alpha_\star,f_\star$ is a minimizing pair realizing $E^{\mathrm{1D}}_\star$, whose existence in discussed in \cite[\textsection A.1]{CG2}. Here, we skip most of the details about the one-dimensional models and refer to \cref{sec: appendix} or \cite{CR1,CR2,CG,CG2} instead (see in particular \cite[\textsection A]{CR2}).

\subsection{Corner effective energy}	
The corners' energy contribution is given in terms of the effective energy
\begin{equation}\label{eq: def Ecorner}
	E_{\mathrm{corner, \beta}}:= \lim_{\ell \rightarrow + \infty}\lim_{L\rightarrow + \infty}\left(E_{\mathrm{corner, \beta}}(L,\ell)\right),
\end{equation}
with
\begin{equation}\label{eq: Ecorner}
	E_{\mathrm{corner, \beta}}(L,\ell):=-2LE^{\mathrm{1D}}_0(\ell) + \inf_{\psi\in\dom_\star(\Gamma_\beta(L,\ell))}\mathcal{E}^{\mathrm{GL}}_1[\psi, \fv; \Gamma_\beta(L,\ell)]
\end{equation}
where $ \fv : = \frac{1}{2} \xv^{\perp} $, the latter functional is defined as in \eqref{eq: GL functional eps} with $ \eps  = 1 $ and $ \aav = \fv $. Moreover, $E^{\mathrm{1D}}_0(\ell)$ is the ground state energy of the finite-interval version of \eqref{eq: def E1D}, i.e.,
\begin{equation}\label{eq: def E1Dell}
	E^{\mathrm{1D}}_0(\ell) := \inf_{\alpha\in\mathbb{R}}\inf_{f\in H^1([0,\ell])}\fone_{\alpha}[f; [0,\ell]].
\end{equation}
The domain $ \Gamma_\beta(L,\ell) $ is on the other hand a sort of wedge (depicted in \cref{fig: 1}), whose longitudinal and tangential length equals $ L $ and $ \ell $, respectively, and whose opening angle is $ \beta \in (0, 2\pi) $. Note that in order for the construction to be possible, we have to add the condition $\ell\leq \tan(\beta/2)L$. Such a domain is indeed meant as a blow up of the corner region, where the longitudinal boundaries of $ \Gamma_\beta(L,\ell) $ are the $ \eps$-rescaled  (and straighten) portions of $ \partial \Omega $ close to the corner, which is represented by the vertex in $ \Gamma_\beta(L,\ell) $. The other boundaries of $ \Gamma_\beta(L,\ell) $ are in fact fake boundaries separating the corner region from the rest of the boundary layer. 
Finally, the minimization domain for the corner problem is
\begin{equation}
	\label{eq: dom star}
	\dom_\star(\Gamma_\beta(L,\ell)) :=\big\{\psi \in H^{1}(\Gamma_\beta(L,\ell)), \vert\, \psi\vert_{\partial\Gamma_{\mathrm{bd}}\cup\partial\Gamma_{\mathrm{in}}} = \psi_\star\big\},
\end{equation}
where we denoted by $\partial\Gamma_{\mathrm{bd}}\cup \partial\Gamma_{\mathrm{in}}$ the inner and tangential boundaries of $ \Gamma_{\beta} $, i.e., concretely, $ \partial\Gamma_{\mathrm{bd}} = \overline{AC}\cup\overline{EB} $ and $ \partial\Gamma_{\mathrm{in}} = \overline{CD}\cup\overline{DE} $, and 
\begin{equation}\label{eq: psistar}
\psi_\star(\rv(s,t)) := f_0(t) \exp\left(-i\alpha_0 s - \tx\frac{1}{2}ist\right),
\end{equation}
where $(\alpha_0, f_0)$ is a minimizing pair realizing $E^{\mathrm{1D}}_0(\ell)$ and $ (s,t) $ are boundary coordinates of $ \Gamma_\beta(L,\ell) $, such that $ t \in [0, \ell] $ is the normal distance to the outer boundary and $ s \in [-L,L] $ the tangential coordinate (with the vertex in $ s = 0 $).

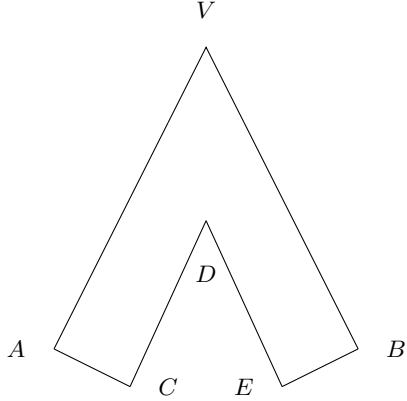
\begin{figure}[!ht]
	\begin{center}
	\begin{tikzpicture}
		\draw (0,0) -- (1,2);
		\draw (1,2) -- (1.5,3);
		\draw (1.5,3) -- (2,4) -- (2.5,3);
		\draw (2.5,3) -- (3,2);
		\draw (3,2) -- (4,0);
		\draw (0,0) -- (1,-0.5);
		\draw (4,0) -- (3,-0.5);
		\draw (1,-0.5) -- (2,1.7);
		\draw (3,-0.5) -- (2,1.7);
		\node at (2,4.5) {{\footnotesize $V$}};
		\node at (-0.5, 0) {{\footnotesize $A$}};
		\node at (4.5, 0) {{\footnotesize $B$}};
		\node at (1.5, -0.5) {{\footnotesize $C$}};
		\node at (2.5, -0.5) {{\footnotesize $E$}};
		\node at (2,1) {{\footnotesize $D$}};
	\end{tikzpicture}
	\caption{The region $ \corner $, where $ \beta $ is the opening angle $ \widehat{AVB} $, $ L = |\overline{AV}| = |\overline{VB}| $ and $ \ell = |\overline{AC}| = |\overline{EB}| $.}\label{fig: 1}
	\end{center}{}
\end{figure}

The heuristic motivation behind the definition \eqref{eq: Ecorner} of the effective energy is that, after blow-up around the singularity, in order to extract the precise contribution due to the presence of the corner, one has first to subtract the surface energy produced by the smooth part of the boundary, which equals exactly the first term on the r..h.s. of \eqref{eq: Ecorner} (see, e.g., \cite[Thm. 1.1]{CG}). Note also that, in the minimization in \eqref{eq: Ecorner}, the magnetic potential is fixed and the energy is minimized only over the order parameter. This is related to the very well known fact that, in the superconductivity layer surviving at the boundary, to a very good approximation, the magnetic field equals the applied one, so that, with a clever choice of the gauge, $ \aav $ can be replaced with $ \fv $. We omit further details for the sake of brevity (see \cite{CG2}).

An important consequence of the energy estimate of \cref{thm: CG1} is that superconductivity is robust along the boundary $ \partial \Omega $ with the possible exception of the corner points, i.e., $ \glm $ is non-vanishing there. More precisely, under the same assumptions of \cref{thm: CG1}, we proved \cite[Prop. 2.1]{CG2} that
\begin{equation}\label{eq: pan smooth}
		\lf\||\psi^{\mathrm{GL}}(\rv)| - f_\star(0) \ri\|_{L^\infty(\partial\Omega_{\mathrm{smooth}})} =o(1),
	\end{equation}
where $ \partial\Omega_{\mathrm{smooth}}:= \left\{\rv\in \partial\Omega\, \vert\, \dist(\rv, \Sigma)\geq c\eps|\log\eps|\right\} $.

\subsection{Main results}

As explained in details in \cite[\textsection 2.2]{CG2}, the explicit dependence of $ \ecorn $ on the angle $ \beta $ is not accessible and the definition of the effective energy is rather implicit. However, based on some geometric considerations, we formulated in \cite{CG2} the following conjecture.

\begin{conj}[GL corner energy]
	\label{conj}
	\mbox{}		\\
	For any $1<b<\Theta_0^{-1}$ and $\beta\in (0,2\pi)$, 
	\begin{equation}\label{eq: conjecture}
		E_{\mathrm{corner},\beta} = - (\pi-\beta)E_{\mathrm{corr}}
	\end{equation}
\end{conj}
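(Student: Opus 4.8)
The plan is to prove the exact identity by reducing it, via a \emph{length-preserving} smoothing of the corner together with the Gauss--Bonnet constraint, to an energy-equivalence between a sharp corner and its smoothing, and then to establish that equivalence by a localised variational comparison. It suffices to treat a domain $\Omega$ with a \emph{single} corner of angle $\beta$ (such domains exist, e.g. a smooth oval with one vertex), since $\ecorn$ depends on $\beta$ alone. For each small $\delta>0$ I would build a smooth domain $\Omega_\delta$ by excising the boundary arc $P_1VP_2$, with $P_1,P_2$ at distance $a$ of order $\delta$ from the vertex $V$ along the two edges, and replacing it by a smooth convex curve $\gamma_\delta$ meeting the edges tangentially at $P_1,P_2$ and carrying curvature of order $\delta^{-1}$. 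The key normalisation is to choose $\gamma_\delta$ of the \emph{same length} as the removed polygonal arc, $|\gamma_\delta|=2a$; this is possible because the chord $P_1P_2$ is strictly shorter than $2a$, so the length of an admissible smooth interpolant can be tuned continuously to $2a$. By construction $|\partial\Omega_\delta|=|\partial\Omega|$.

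First I would apply \cref{thm: CG1} to both domains. For $\Omega$ it gives
\[
 \glee = \frac{|\partial\Omega|\,E^{\mathrm{1D}}_\star}{\eps}-\ecorr\int_{\partial\Omega}\mathfrak{K}\,\diff\ss+\ecorn+o(1),
\]
while for the \emph{smooth} domain $\Omega_\delta$ the same theorem (with no corner term, i.e.\ $N=0$) applies and the Gauss--Bonnet theorem fixes $\int_{\partial\Omega_\delta}\mathfrak{K}_\delta\,\diff\ss=2\pi$. Using the turning-angle identity $\int_{\partial\Omega}\mathfrak{K}\,\diff\ss=2\pi-(\pi-\beta)$ for the cornered boundary, together with the equality of lengths, subtraction of the two expansions gives, for each fixed $\delta$,
\[
 \lim_{\eps\to0}\big[\glee(\Omega)-\glee(\Omega_\delta)\big]=\ecorn+(\pi-\beta)\ecorr .
\]
Crucially the right-hand side is independent of $\delta$, so \cref{conj}, i.e.\ \eqref{eq: conjecture}, is \emph{equivalent} to the single assertion that the left-hand side vanishes: a sharp corner and a length-preserving smoothing of it must carry the same Ginzburg--Landau energy up to $o(1)$.

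The core of the proof is therefore this energy-equivalence, which I would obtain by two matching variational bounds. Let $T_\delta$ be a diffeomorphism equal to the identity outside an $O(\delta)$ neighbourhood of $V$ and mapping $\Omega_\delta$ onto $\Omega$ (straightening $\gamma_\delta$ into the two edges). For the upper bound on $\glee(\Omega)$ I would push the minimiser of $\Omega_\delta$ through $T_\delta$, rewrite the resulting trial state in the gauge of $\mathbf{F}=\tfrac12\xv^{\perp}$ on $\Omega$, and estimate the change in each term of $\glfe$; the lower bound is obtained symmetrically with $T_\delta^{-1}$. In both directions the geometric distortion of $T_\delta$, hence the energy error, is concentrated in the $O(\delta)$ corner region where $\gamma_\delta$ has curvature of order $\delta^{-1}$.

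The main obstacle is precisely the control of this error: one must show it is $o(1)$ as $\eps\to0$ uniformly for $\delta$ in a window $\eps\ll\delta\ll1$, and then let $\delta\to0$. Two difficulties compound here. First, near a sharp corner the order parameter is only Hölder regular, so bounding the transported kinetic energy $\int|(\nabla+i\mathbf{F}/\eps^2)\psi|^2$ under the singular map $T_\delta$ requires fine a priori estimates localised at the vertex. Second, the magnetic gauge must be matched across the replacement of $\gamma_\delta$ by the sharp edges without generating a spurious phase circulation, which is delicate because the flux through the modified $O(\delta)$ region is large on the scale $\eps^{-2}$. It is exactly at this step that the general angle resists: when $\beta\to\pi$ the distortion of $T_\delta$ is of size $|\pi-\beta|$ and the errors close to leading order, recovering the almost-flat result, whereas proving the $o(1)$ bound uniformly for all $\beta\in(0,2\pi)$ is the decisive remaining point on which the full strength of the identity rests.
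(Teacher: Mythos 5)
The statement you are proving is presented in the paper as \cref{conj}, i.e., as an \emph{open conjecture}: the paper itself establishes only the leading-order version for almost flat angles (\cref{teo: asympt E corner near pi}), and explicitly states that it cannot show the remainder vanishes. Your first step --- excising the corner, replacing it by a length-preserving smooth arc $\gamma_\delta$, applying \cref{thm: CG1} to both domains and using Gauss--Bonnet with the turning-angle identity $\int_{\partial\Omega}\mathfrak{K}\,\diff\ss = 2\pi-(\pi-\beta)$ --- is correct, but it is precisely the geometric heuristic by which the conjecture was \emph{formulated} in \cite{CG2} (see the discussion referenced before \cref{conj}). It does not prove anything: it reformulates \eqref{eq: conjecture} as the single assertion $\lim_{\eps\to 0}\lf[\glee(\Omega)-\glee(\Omega_\delta)\ri]=0$, and all of the content of the conjecture is transferred into that assertion.

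That energy-equivalence step, which you present as the ``core of the proof'' via transport along the diffeomorphism $T_\delta$, contains the genuine gap, and it is structural rather than technical. First, any map straightening a corner of opening angle $\beta$ into a smooth boundary arc has metric distortion of order $|\pi-\beta|$ --- an $\OO(1)$ quantity for fixed $\beta$ --- concentrated in the corner region, where the energy density is of order $\eps^{-2}$ over an area of order $\eps^{2}$; the transported energy therefore differs by $\OO(1)$, i.e., by exactly the size of the quantity $\ecorn+(\pi-\beta)\ecorr$ whose vanishing you must prove. Showing the transport error is $o(1)$ is thus \emph{equivalent} to the conjecture, and the argument is circular. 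Second, $T_\delta$ cannot intertwine the magnetic covariant derivatives: since $\curl\aav$ must remain equal to $1$, the flux through the modified region is of order $\delta^2/\eps^2\gg 1$ for $\delta\gg\eps$, and no gauge choice removes the resulting phase mismatch --- indeed, this mismatch (the jump of the tangential coordinate and hence of the phase across the bisectrix) is exactly the mechanism the paper isolates as producing the genuine, non-vanishing $\mp\delta\ecorr$ contribution in the almost-flat case, so one should not expect it to contribute only $o(1)$ for general $\beta$. Third, the uniformity you invoke ($o(1)$ in \cref{thm: CG1} uniformly for $\eps\ll\delta\ll 1$, with boundary curvature of order $\delta^{-1}$ on $\gamma_\delta$) is not provided by that theorem, whose remainder is for a fixed domain as $\eps\to 0$; if instead $\delta$ is held fixed, difficulty (i) applies in full force. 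By contrast, the paper's actual (partial) result proceeds entirely on the model problem \eqref{eq: Ecorner}: an upper bound via a trial state with a genuinely two-dimensional gluing phase $\Xi$ near the bisectrix, and a lower bound via the splitting $\psi_\Gamma=u_\pm f_0 e^{i\Phi_\pm}$, positivity of the cost function $K_{0,\ell}$, and boundary terms on the bisectrix --- yielding \eqref{eq: Ecorner-delta} with a remainder $\OO(\delta^{4/3}|\log\delta|)$ that the authors cannot remove. Your proposal, as written, therefore does not constitute a proof of \eqref{eq: conjecture}; it restates the conjecture in an equivalent form and leaves the decisive step open, as you in fact concede in your final sentence.
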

The goal of this paper is to address \eqref{eq: conjecture} and show that, if the corner angle $ \beta $ is close to $ \pi $, \eqref{eq: conjecture} is in fact correct, at least to leading order. This is the content of the next theorem.

\begin{teo}[GL corner energy for almost flat angles]\label{teo: asympt E corner near pi}
	\mbox{}		\\
	Let $ 0 \leq \delta \ll 1 $ and
	$
		1<b<\Theta_0^{-1}
	$.
	Then, as $\ell\rightarrow +\infty$ and $L\rightarrow +\infty$,
	\beq\label{eq: Ecorner-delta}
 		E_{\mathrm{corner},\pi \pm\delta}(L,\ell)= \pm \delta E_{\mathrm{corr}} + \OO(\delta^{4/3} |\log\delta|) + \mathcal{O}(L^2\ell^{-\infty}).
	\eeq
\end{teo}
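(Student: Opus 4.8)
The plan is to prove \eqref{eq: Ecorner-delta} by matching upper and lower bounds for the infimum defining $\ecornl$ in \eqref{eq: Ecorner}, exploiting that for $\beta=\pi\pm\delta$ the wedge $\corner$ is a small perturbation of the flat strip $\Gamma_\pi(L,\ell)=[-L,L]\times[0,\ell]$. The geometric picture driving everything is that, writing the boundary layer in tubular coordinates $(s,t)$ around the two outer edges $\overline{AV},\overline{VB}$, the two half-strips are genuinely flat and, away from the vertex $V$, cancel the reference term $-2LE^{\mathrm{1D}}_0(\ell)$ up to $\OO(L^2\ell^{-\infty})$, by the exponential decay of $\fO$ and the Agmon bound of \cref{teo: agmon}. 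The entire nontrivial contribution is localized near $V$: for $\beta=\pi+\delta$ the Voronoi region of the vertex is an extra circular sector of angular width $\delta$ and radius $\sim\ell$ (whose points have $V$ as nearest boundary point, so the normal variable coincides with the radial distance $r$), while for $\beta=\pi-\delta$ the two tubular neighborhoods overlap in a thin lens of the same size. In either case the correction is, to leading order, the energy carried by the radial profile $\fs(r)$ over this sector, and the key computation is that the polar volume element $r\,\diff r\,\diff\theta$ together with the angular width $\delta$ turns the one-dimensional energy density into the $t$-weighted integral defining $E_{\mathrm{corr}}$ in \eqref{eq: def Ecorr}; this is precisely how the factor $\pm\delta E_{\mathrm{corr}}$ arises, with the sign fixed by whether the sector is added ($+$, $\beta=\pi+\delta$) or removed ($-$, $\beta=\pi-\delta$).

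For the upper bound I would construct an explicit competitor agreeing with $\psis$ on $\bdbd\cup\bdi$. On each half-strip I take $\psis$ as in \eqref{eq: psistar}; on the vertex sector I take the radial ansatz $\fO(r)e^{i\Phi(r,\theta)}$, with the phase $\Phi$ chosen so that the azimuthal covariant derivative reproduces the $(r+\alO)$ structure of the one-dimensional problem and so that continuity with the two half-strips holds across the separating rays. Feeding this into $\mathcal{E}^{\mathrm{GL}}_1[\,\cdot\,,\fv;\corner]$ and using the identity in \eqref{eq: def Ecorr}, the sector contributes $\pm\delta E_{\mathrm{corr}}$ up to terms of higher order in $\delta$. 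The delicate point is the behavior near $V$, where the radial profile and the polar gauge degenerate and the magnetic phases of the two strips cannot be reconciled without cost; I would therefore insert a cutoff at radius $\rho$, replacing the competitor inside the core $\{r<\rho\}$ by a bounded interpolation. This produces two competing errors: a core energy of size $\OO(\rho^2)$ and a phase-matching error in the annular sector of order $\delta^2\rho^{-1}$ up to logarithms; optimizing over $\rho\sim\delta^{2/3}$ balances them and yields the stated remainder $\OO(\delta^{4/3}|\log\delta|)$.

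For the lower bound I would start from a priori control on any minimizer — the pointwise bound $|\psi|\leq1$, the Agmon decay of \cref{teo: agmon} (which confines the energy to $t=\OO(1)$ and produces the $\OO(L^2\ell^{-\infty})$ tail), and gradient estimates — and then localize the energy with an IMS-type partition of unity subordinate to the decomposition into the two half-strips and the vertex sector, paying the usual $\OO(\rho^{-2})$ commutator cost on the core. On each half-strip the flatness of the geometry allows a gauge reduction to the one-dimensional functional $\fone_\alpha$ of \eqref{eq: def fone}, giving $\geq LE^{\mathrm{1D}}_0(\ell)$ per strip up to $\ell^{-\infty}$. The crux is the matching lower bound on the vertex sector: discarding the nonnegative radial part of the magnetic energy and bounding the azimuthal and potential terms from below by a reduced radial functional, one must recover $\pm\delta E_{\mathrm{corr}}$ to leading order while controlling the singular polar geometry at $V$. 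I expect this step — extracting the exact constant $\pm\delta E_{\mathrm{corr}}$ from the sector, proving that the minimizer's modulus there is $\fs(r)+o(1)$, and absorbing the vertex-core and localization errors into the same $\OO(\delta^{4/3}|\log\delta|)$ budget — to be the main obstacle, both because it must be carried out uniformly as $L,\ell\to+\infty$ and because the sharp vertex is precisely where the one-dimensional reduction underlying $E_{\mathrm{corr}}$ is least justified.
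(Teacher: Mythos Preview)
Your geometric picture correctly localizes the correction near the vertex, but the claim that the Voronoi sector contributes exactly $\pm\delta E_{\mathrm{corr}}$ via the polar element $r\,\diff r\,\diff\theta$ is where the argument breaks. The two conditions you impose on the sector phase $\Phi$ --- that the azimuthal covariant derivative ``reproduces the $(r+\alpha_0)$ structure'' and that $\Phi$ matches continuously with the half-strip phases $\Phi_\pm=-\alpha_0 s_\pm-\tfrac12 s_\pm t_\pm$ on the separating rays --- are incompatible. If you enforce the $(r+\alpha_0)$ structure, the sector energy becomes the $t$-weighted one-dimensional integral
$\delta\int_0^\ell t\bigl\{|f_0'|^2+(t+\alpha_0)^2 f_0^2-\tfrac{1}{2b}(2f_0^2-f_0^4)\bigr\}\,\diff t$,
which differs from $\delta E_{\mathrm{corr}}$ (compare the first expression in \eqref{eq: def Ecorr}) by the nonzero quantity $\delta\int_0^\ell t(t+\alpha_0)(t+2\alpha_0)f_0^2\,\diff t$. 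This missing piece is precisely the leading-order cost of the phase jump, not a subleading error: when one interpolates the phase across a transition wedge of angular width $\gamma$, the azimuthal kinetic term produces a cross term of order $\delta$ (independent of $\gamma$) in addition to the genuine remainder $O(\delta^2/\gamma)+O(\gamma^2)$, and only the latter is optimized to $O(\delta^{4/3})$ at $\gamma=\delta^{2/3}$. Your estimate ``phase-matching error of order $\delta^2\rho^{-1}$'' captures only the second piece and drops the first, so the upper bound would land on the wrong constant.

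The paper's lower bound does not use IMS localization. Instead it splits $\psi_\Gamma=f_0(t_\pm)e^{i\Phi_\pm}u_\pm$ in each half $\Gamma^\pm$, reduces to a weighted functional $\mathcal{E}_0[u]$, and integrates the current term $-2(t+\alpha_0)f_0^2 j_s[u]$ by parts against the potential $F_0(t)=2\int_0^t(\eta+\alpha_0)f_0^2\,\diff\eta$. Pointwise positivity of the cost function $K_{0,\ell}=f_0^2+F_0$ controls the bulk, while the boundary terms along the bisectrix --- where the tangential coordinates $s_\pm$ jump --- produce exactly the same $\delta\int t(t+\alpha_0)(t+2\alpha_0)f_0^2$ with the sign required to reconstitute $-\delta E_{\mathrm{corr}}$. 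In your scheme, bounding the sector from below by ``a reduced radial functional'' would again return the $t$-weighted one-dimensional energy rather than $E_{\mathrm{corr}}$, and IMS commutator terms do not supply the missing $O(\delta)$ phase-discontinuity contribution with the correct sign.
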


The above result is an asymptotic confirmation of \cref{conj} for almost flat angles. Unfortunately, we are not able to show that the remainder is identically zero, as it should be if one expects the conjecture to be true. There is however an important piece of information to extract from \eqref{eq: Ecorner-delta}: the fact that the correction is non-zero for both acute and obtuse angles marks the difference with the linear behavior close to $ \Hccc $. Indeed, in that case, it is expected and numerically confirmed that the effective energy is a monotone function of the angle but stays constant for angles larger or equal to $ \pi $. Here, on the opposite, we observe a symmetric behavior for acute and obtuse angles, at least for small variations around $\pi $: since one expects that $ E_{\mathrm{corr}} > 0 $ for $ 1 < b < \theo^{-1} $, this would imply that corners with obtuse angles increase the energy, while acute angles lower it.

The above result applies to the model problem \eqref{eq: def Ecorner}, but it has a direct consequence on the more general setting of a sample with a curvilinear polygonal boundary, provided the corners' angles are almost flat.

\begin{cor}[GL energy for domains with almost flat angles]
	\label{cor: GL almost flat}
	\mbox{}	\\
Let $ 0 \leq \delta \ll 1 $ and let $\Om\subset\mathbb{R}^2$ be a bounded domain with a finite number of corners along $\partial\Om$ with almost flat angles $\beta_j$, $j\in\Sigma$, i.e., such that $|\beta_j -\pi| \leq \delta \ll 1$. Then, for any fixed $ 1<b<\Theta_0^{-1} $, as $\eps\rightarrow 0$,
\beq\label{eq: asympt delta}
		E^{\mathrm{GL}}_\eps = \frac{|\partial\Om| E^{\mathrm{1D}}_0}{\eps} - 2\pi E_{\mathrm{corr}} + \OO(\delta^{4/3} |\log\delta|) + o_{\eps}(1).
\eeq
\end{cor}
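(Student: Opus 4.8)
The plan is to obtain \cref{cor: GL almost flat} as a bookkeeping combination of two already-established facts: the general energy expansion of \cref{thm: CG1} and the corner estimate of \cref{teo: asympt E corner near pi}. The bridge between them is the Gauss--Bonnet relation, which ties the total boundary curvature to the angular deficits at the corners; once this is in place, the $\beta_j$-dependent contributions coming from the curvature and from the corners cancel against each other, and only the universal $-2\pi E_{\mathrm{corr}}$ survives. Concretely, I would start from
\[
	E^{\mathrm{GL}}_\eps = \frac{|\partial\Om| E^{\mathrm{1D}}_\star}{\eps} - E_{\mathrm{corr}}\int_0^{|\partial\Om|} \diff\ss\, \mathfrak{K}(\ss) + \sum_{j=1}^N E_{\mathrm{corner},\beta_j} + o_\eps(1),
\]
valid for any fixed $1<b<\Theta_0^{-1}$ as $\eps\to0$ by \cref{thm: CG1}, and rewrite the last two terms.

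First I would invoke the theorem of turning tangents (Gauss--Bonnet for a planar simply connected domain with piecewise smooth boundary): the total turning of the tangent along $\partial\Om$ equals $2\pi$, the smooth part contributing $\int_0^{|\partial\Om|}\mathfrak{K}(\ss)\,\diff\ss$ and each corner contributing its exterior angle $\pi-\beta_j$, whence
\[
	\int_0^{|\partial\Om|}\diff\ss\, \mathfrak{K}(\ss) = 2\pi - \sum_{j=1}^N (\pi-\beta_j).
\]
Substituting this into the curvature term gives $-E_{\mathrm{corr}}\int_0^{|\partial\Om|}\mathfrak{K} = -2\pi E_{\mathrm{corr}} + E_{\mathrm{corr}}\sum_{j=1}^N(\pi-\beta_j)$.

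Next I would feed in the corner estimate. Passing to the double limit $L,\ell\to+\infty$ in \eqref{eq: Ecorner-delta} --- with $\ell$ growing fast enough relative to $L$ that the remainder $\OO(L^2\ell^{-\infty})$ is killed, so that the $L,\ell$-free quantity $E_{\mathrm{corner},\beta_j}$ of \cref{thm: CG1} is reached --- \cref{teo: asympt E corner near pi} yields, for each corner with $|\beta_j-\pi|\le\delta$,
\[
	E_{\mathrm{corner},\beta_j} = -(\pi-\beta_j) E_{\mathrm{corr}} + \OO(\delta^{4/3}|\log\delta|),
\]
the leading coefficient matching $\pm\delta E_{\mathrm{corr}}$ upon setting $\beta_j = \pi\mp\delta$. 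Since the number $N$ of corners is finite and $\delta$-independent, summing over $j$ gives $\sum_{j=1}^N E_{\mathrm{corner},\beta_j} = -E_{\mathrm{corr}}\sum_{j=1}^N(\pi-\beta_j) + \OO(\delta^{4/3}|\log\delta|)$. Adding this to the curvature contribution above, the two sums $E_{\mathrm{corr}}\sum_j(\pi-\beta_j)$ cancel exactly, leaving $-2\pi E_{\mathrm{corr}} + \OO(\delta^{4/3}|\log\delta|)$; combined with the leading term (and the identification $E^{\mathrm{1D}}_\star=E^{\mathrm{1D}}_0$, the half-line energy being the $\ell\to+\infty$ limit of the finite-interval one) this is precisely \eqref{eq: asympt delta}.

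I do not expect a genuine analytic obstacle, since the statement is a consequence of two established results and all the content sits in the cancellation --- which is exactly the geometric mechanism that motivated \cref{conj}. The only points demanding care are technical: justifying that the $\OO(L^2\ell^{-\infty})$ remainder truly disappears in the iterated limit defining $E_{\mathrm{corner},\beta_j}$, so that the $L,\ell$-free corner estimate used above is legitimate; keeping the $\eps\to0$ limit of \cref{thm: CG1} conceptually separate from the fixed small parameter $\delta$, so that the two error scales $o_\eps(1)$ and $\OO(\delta^{4/3}|\log\delta|)$ do not interfere; and verifying $E^{\mathrm{1D}}_\star=E^{\mathrm{1D}}_0$.
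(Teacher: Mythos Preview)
Your proposal is correct and follows essentially the same route as the paper: combine the expansion of \cref{thm: CG1} with the corner estimate of \cref{teo: asympt E corner near pi}, and use Gauss--Bonnet to turn $\int\mathfrak{K}$ into $2\pi-\sum_j(\pi-\beta_j)$ so that the angular deficits cancel against the corner contributions. The only nuance is that in the paper the limit in $L,\ell$ is handled by taking $L,\ell=\OO(|\log\eps|)$ (as in the proof of \cref{thm: CG1} itself) rather than as an abstract iterated limit, which is how the $\OO(L^2\ell^{-\infty})$ term is absorbed into $o_\eps(1)$.
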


The paper is organized as follows. We first discuss in detail the mathematical setting and formulate some preliminary results in \cref{sec: preliminaries}. The proofs of the main results above are then spelled in \cref{sec: proofs}. In \cref{sec: appendix}, we collect some useful results about the effective one-dimensional models involved in the proofs.

\section{Mathematical Setting and Preliminaries}
\label{sec: preliminaries}

Before facing the proofs of our main results, we provide more details about the mathematical setting we are going to consider. We study a wedge like domain $ \Gamma_{\pi \pm \delta}(L, \ell) $ described in \cref{fig: 2}. We recall that we denote by $\partial\Gamma_{\mathrm{out}}$ the outer portion of the boundary $\overline{AVB}$, while the inner part $ \overline{CDE} $ and the tangential components $ \overline{CV} \cup \overline{EB} $ are denoted by $ \bdi $ and $ \bdbd $, respectively. More precisely, using polar coordinates with origin at the vertex $V$ and axis along one of the outer sides of $ \Gamma_{\pi \pm \delta}(L, \ell) $, we get:
\begin{equation}
	\partial\Gamma_{\mathrm{out}} :=\{\rv\in\mathbb{R}^2\, \vert\, \rv = (\varrho,0),\, 0\leq \varrho\leq L\}\cup\{\rv\in\mathbb{R}^2\, \vert\, \rv  = (\varrho, \pi\pm\delta),\, 0\leq \varrho\leq L\},
\end{equation}	
\begin{equation}
\Gamma\equiv \Gamma_{\pi\pm\delta}(L,\ell) :=\{\rv\in\mathbb{R}^2\, \vert\, \dist(\rv,\partial\Gamma_{\mathrm{out}}\leq \ell)\},
\end{equation}
\begin{equation}
\partial\Gamma\equiv \partial\Gamma_{\pi\pm\delta}(L,\ell) = \partial\Gamma_{\mathrm{out}}\cup \partial\Gamma_{\mathrm{in}}\cup\partial\Gamma_{\mathrm{bd}},
\end{equation}
\begin{equation}
	\partial\Gamma_{\mathrm{in}}:=\{\rv\in\partial\Gamma\, \vert\, \dist(\rv, \partial\Gamma_{\mathrm{out}} = \ell)\}, \quad \partial\Gamma_{\mathrm{bd}} := \partial\Gamma \setminus (\partial\Gamma_{\mathrm{out}}\cup\partial\Gamma_{\mathrm{in}}).
\end{equation}
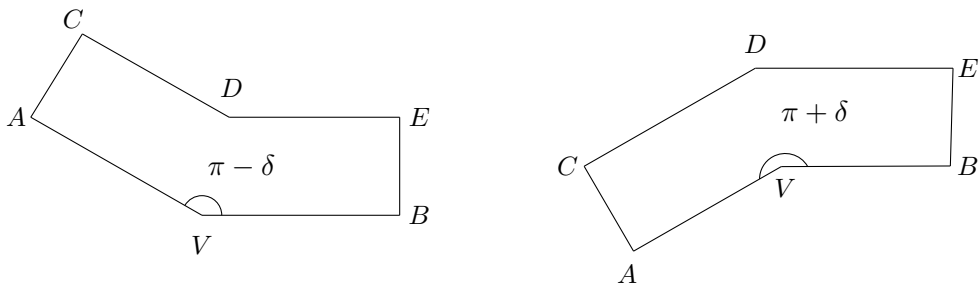
\begin{figure}[ht!]
	\begin{center}
	\begin{tikzpicture}[scale=1.3]
	\draw (2,0) -- (4,0);
	\draw (2,0) -- (0.268,1);
	\draw (0.268,1) -- (0.79, 1.85);
	\draw (0.79, 1.85) -- (2.275,1);
	\node at (2.4,0.5) {$\pi-\delta$};
	\draw (4,0) -- (4,1);
	\draw (4,1) -- (2.275,1);
	\draw[rotate around={210:(5,0)}] (2,0) -- (4,0);
	\draw[rotate around={210:(5,0)}] (2,0) -- (0.268,1);
	\draw[rotate around={210:(5,0)}] (0.268,1) -- (0.79, 1.85);
	\draw[rotate around={210:(5,0)}] (0.79, 1.85) -- (2.275,1);
	\draw[rotate around={210:(5,0)}] (4,0) -- (4,1);
	\draw[rotate around={210:(5,0)}](4,1) -- (2.275,1);
	\node at (8.2,1.05) {$\pi+\delta$};
	\node at (0.12,1) {\small{$A$}};
	\node at (0.7, 2) {\small{$C$}};
	\node at (2.3,1.3) {\small{$D$}};
	\node at (4.2,1) {\small{$E$}};
	\node at (4.2,0) {\small{$B$}};
	\node at (2,-0.3) {\small{$V$}};
	\node at (5.7,0.5) {\small{$C$}};
	\node at (6.3, -0.6) {\small{$A$}};
	\node at (7.9,0.28) {\small{$V$}};
	\node at (9.75,0.5) {\small{$B$}};
	\node at (9.75,1.5) {\small{$E$}};
	\node at (7.6,1.75) {\small{$D$}};
	\draw([shift=(30:1cm)]7.26,0) arc (30:180:0.26cm);
	\draw (2.2,0) arc (0:150:0.2);
	\end{tikzpicture}
	\caption{The angular regions $\Gamma_{\pi\pm\delta}(L,\ell)$.}
	\label{fig: 2}
	\end{center}
	\end{figure}
	
	The decay stated in \cref{teo: agmon} holds true in the region $ \Gamma $ as well, with the obvious adaptations. For later purposes however we formulate a more specific result applying to a subregion of longitudinal length of order $ 1 $.
	
	\begin{lem}
		\label{lem: agmon 2}
		\mbox{}\\
		Let $ S^{\pm}  \subset \Gamma^{\pm} $ be domains of the form 
		\beq
			S^{\pm} : = \lf\{ (s_{\pm},t_{\pm}) \in [-L,L] \times [0, \ell] \: \big| \: \bar{s}(t) \leq s_{\pm} \leq \bar{s}(t) + C \ri\}
		\eeq
		for some smooth increasing function $ \bar{s}(t) $. Then, for any $ b > 1 $, there exists a constant $ c(b) > 0  $, such that
		\beq
			\label{eq: agmon 2}
			\int_{S^{\pm}} \diff\textbf{r}\; e^{c(b) \: \dist (\textbf{r},\bdo)} \lf\{\lf| \psi_{\Gamma} \ri|^2+ \lf|\lf(\nabla+i  \fv \ri) \psi_{\Gamma} \ri|^2\ri\}  = \OO(1).
		\eeq
	\end{lem}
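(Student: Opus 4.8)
The plan is to run a weighted (Agmon-type) energy estimate directly on the $\OO(1)$-length strip $S^{\pm}$, turning the \emph{global} decay of \cref{teo: agmon} (which, as noted above, holds on $\Gamma$) into a \emph{localized} $\OO(1)$ bound. Throughout I use that the corner minimizer satisfies $\|\psi_{\Gamma}\|_{L^{\infty}}\le 1$ (maximum principle) and solves the Ginzburg--Landau equation
\[
	-(\nabla+i\fv)^2\psi_{\Gamma}=\tx\frac{1}{b}\lf(1-|\psi_{\Gamma}|^2\ri)\psi_{\Gamma}\quad\text{in }\Gamma,
\]
with the natural condition $\nuv\cdot(\nabla+i\fv)\psi_{\Gamma}=0$ on $\bdo$ and $\psi_{\Gamma}=\psis$ on $\bdi\cup\bdbd$. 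Since the outer sides are straight (zero curvature), the tubular coordinates $(s_{\pm},t_{\pm})$ are exact and the metric is flat on a neighbourhood of $S^{\pm}$; the slant built into $S^{\pm}$ through the increasing profile $\bar s(t)$ is exactly what keeps the whole strip inside one valid chart, away from the vertex. In that chart I fix a tangential (Landau-type) gauge for $\fv$.

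First I would introduce the weight $\Phi:=c(b)\,t$, so that $|\nabla\Phi|\equiv c(b)$, together with a smooth longitudinal cut-off $\chi=\chi(s)$ equal to $1$ on the $s$-footprint of $S^{\pm}$, supported in an $\OO(1)$-enlargement, with $|\chi'|\le C$. Testing the equation against $\chi^2e^{2\Phi}\overline{\psi_{\Gamma}}$ and integrating by parts gives the IMS identity
\[
	\int\lf|(\nabla+i\fv)(\chi e^{\Phi}\psi_{\Gamma})\ri|^2=\int\chi^2e^{2\Phi}\tx\frac{1}{b}\lf(1-|\psi_{\Gamma}|^2\ri)|\psi_{\Gamma}|^2+\int\lf|\nabla(\chi e^{\Phi})\ri|^2|\psi_{\Gamma}|^2+R_{\partial},
\]
where $R_{\partial}$ drops on $\bdo$ by the Neumann condition and, on $\bdi\cup\bdbd$, is either $\exl$ (inner boundary, where $f_0(\ell)$ is exponentially small) or explicitly $\OO(1)$ through the known profile $\psis$. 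Using $|\nabla(\chi e^{\Phi})|^2\le e^{2\Phi}(2c(b)^2\chi^2+2|\chi'|^2)$ and $\tfrac1b(1-|\psi_{\Gamma}|^2)\le\tfrac1b$, the right-hand side is controlled by $\int(\tfrac1b+2c(b)^2)\chi^2e^{2\Phi}|\psi_{\Gamma}|^2+2\int|\chi'|^2e^{2\Phi}|\psi_{\Gamma}|^2+R_{\partial}$.

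The decay is then produced by the bulk spectral gap. Because $\curl\fv=1$, the full-plane bound $\int|(\nabla+i\fv)u|^2\ge\int|u|^2$ holds for every $u$ vanishing on $\bdo$; inserting a normal cut-off $j(t)$ that vanishes for $t\le T_0$ legitimizes applying it to $u=j\chi e^{\Phi}\psi_{\Gamma}$. Since $b>1$ I may choose $c(b)$ small enough that $2c(b)^2<1-\tfrac1b$, whence $1-\tfrac1b-2c(b)^2>0$ and the gap strictly dominates the coefficient $\tfrac1b+2c(b)^2$ on the bulk $\{t\ge 2T_0\}$; this absorbs the bulk weighted mass into the left-hand side. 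On the fixed-width boundary layer $\{t\le 2T_0\}$ and on the transition $\{T_0\le t\le 2T_0\}$ generated by $j'$ I bound everything by $|\psi_{\Gamma}|\le1$, $e^{2\Phi}\le e^{2c(b)T_0}$ and the $\OO(1)$ area, obtaining an $\OO(1)$ contribution.

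The only term not yet $\OO(1)$ is the longitudinal cut-off error $\int|\chi'|^2e^{2\Phi}|\psi_{\Gamma}|^2$, supported on two $\OO(1)$-wide $s$-bands spanning $t\in[0,\ell]$; being a weighted mass of the same kind I am estimating, it cannot be closed by the global bound alone, which over a sub-band only yields $\OO(L)$. I expect this to be the main obstacle. The way around it is a uniform-in-$s$ control of the per-unit-length weighted mass: since the geometry is translation-invariant in the tangential direction away from the vertex, the quantity $M(s_0):=\int_{[s_0,s_0+1]\times[0,\ell]}e^{2\Phi}|\psi_{\Gamma}|^2$ satisfies, via the same IMS identity, a recursion coupling $M(s_0)$ to $M(s_0\pm1)$ plus an $\OO(1)$ boundary-layer term; combined with the \emph{finite total} mass $\sum_{s_0}M(s_0)=\OO(L)$ coming from \cref{teo: agmon}, this forces $M(s_0)=\OO(1)$ uniformly and closes the cut-off error. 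Collecting the three contributions yields $\int|(\nabla+i\fv)(\chi e^{\Phi}\psi_{\Gamma})|^2=\OO(1)$; since $\chi\equiv1$ on $S^{\pm}$ and $|(\nabla+i\fv)(\chi e^{\Phi}\psi)|^2\ge\tfrac12\chi^2e^{2\Phi}|(\nabla+i\fv)\psi|^2-Ce^{2\Phi}(c(b)^2\chi^2+|\chi'|^2)|\psi|^2$, this recovers both the modulus and the magnetic-gradient parts of \eqref{eq: agmon 2}, on each of the ``$+$'' and ``$-$'' sides.
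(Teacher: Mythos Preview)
The paper does not give a self-contained argument here; it simply cites \cite[Lemma B.4]{CG2} and says the proof is ``a simple adaptation'' of that result. Your weighted IMS scheme (tangential cut-off $\chi$, normal weight $e^{c(b)t}$, magnetic lower bound $\int|(\nabla+i\fv)u|^2\ge\int|u|^2$ for $u$ vanishing on $\bdo$) is exactly the standard Agmon machinery one expects behind that citation, and the identification of the longitudinal cut-off error $\int|\chi'|^2e^{2\Phi}|\psi_\Gamma|^2$ as the only non-trivial term is correct.

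There is, however, a gap in how you close that term. Your stated inference --- ``a recursion coupling $M(s_0)$ to $M(s_0\pm1)$, together with $\sum_{s_0}M(s_0)=\OO(L)$, forces $M(s_0)=\OO(1)$ uniformly'' --- does not follow: a relation of the type $M(s_0)\le C+a\lf(M(s_0-1)+M(s_0+1)\ri)$ combined with an $\OO(L)$ total mass only reproduces the global bound after summation; it yields no pointwise control. What does work is a \emph{sup} argument, and for that you must make the recursion coefficient small. Concretely, take the tangential cut-off to transition over an $s$-length $W$ (still $\OO(1)$, but large), so that $|\chi'|\le C/W$; then the cut-off error is bounded by $\tfrac{C}{W}\sup_{s_0}M(s_0)$, and the inequality becomes
\[
\lf(1-\tfrac1b-2c(b)^2\ri)M_{\mathrm{high}}(s_0)\ \le\ \OO(W)\ +\ \tfrac{C}{W}\sup_{s_0}M_{\mathrm{high}}(s_0)\ +\ \OO(1),
\]
uniformly in $s_0$. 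Choosing $W$ large enough that $C/W<1-\tfrac1b-2c(b)^2$ and taking the supremum closes the estimate directly, with no appeal to the total mass. A second minor point: on $\bdbd$ you only know the Dirichlet datum $\psi_\star$, not the normal derivative of $\psi_\Gamma$, so the boundary remainder $R_\partial$ is \emph{not} ``explicitly $\OO(1)$ through $\psi_\star$'' there; rather, it vanishes because (with $W=\OO(1)$ and $S^\pm$ near the bisectrix, which is the only place the lemma is used) $\chi$ is supported away from $\bdbd$.
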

	
	\begin{proof}
		The proof is  a simple adaptation of the argument used to prove \cite[Lemma B.4]{CG2}.
	\end{proof}
	
	An alternative form of the Agmon estimate is provided by next Lemma, whose proof is given in \cite[Lemma B.5]{CG2}.
	
	\begin{lem}
		\label{lem: agmon 3}
		\mbox{}\\
		For any $ b >  1 $ there exists a finite constant $ C  $, such that
		\beq
			\label{eq: agmon 3}
			\lf| \psi_{\Gamma}(\rv) \ri| \leq C e^{- \frac{1}{2} c(b) \dist(\rv, \bdo)},
		\eeq
		where $ c(b) $ is the constant appearing in \eqref{eq: agmon 2}.
	\end{lem}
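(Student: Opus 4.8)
The plan is to upgrade the weighted $L^2$ estimate of \cref{lem: agmon 2} to a pointwise bound by a local elliptic (subsolution) argument, exploiting that the Agmon weight $e^{c(b)\dist(\rv,\bdo)}$ is Lipschitz and hence essentially constant on balls of unit size. First I would record the basic ingredients. Since $\psi_{\Gamma}$ is a critical point of $\mathcal{E}^{\mathrm{GL}}_1[\,\cdot\,,\fv;\Gamma]$ with datum $\psi_\star$ on $\bdi\cup\bdbd$ and the natural (magnetic Neumann) condition $(\nabla+i\fv)\psi_{\Gamma}\cdot\nuv=0$ on $\bdo$, it solves the Ginzburg--Landau equation $-(\nabla+i\fv)^2\psi_{\Gamma}=\tfrac{1}{b}\psi_{\Gamma}(1-|\psi_{\Gamma}|^2)$, and the maximum principle gives the a priori bound $\|\psi_{\Gamma}\|_{L^\infty(\Gamma)}\leq 1$ (the boundary modulus $f_0$ being itself bounded by $1$).

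Next I would pass to the modulus. By Kato's inequality $\Delta|\psi_{\Gamma}|\geq\Re\bigl(\tfrac{\psi_{\Gamma}^*}{|\psi_{\Gamma}|}(\nabla+i\fv)^2\psi_{\Gamma}\bigr)$, combined with the equation, $u:=|\psi_{\Gamma}|$ satisfies, in the distributional sense, $-\Delta u\leq \tfrac{1}{b}u(1-u^2)\leq\tfrac{1}{b}u$, using $0\leq u\leq 1$. Thus $u$ is a nonnegative subsolution of a uniformly elliptic equation with bounded zero-order coefficient, and the classical local maximum principle (De Giorgi--Nash--Moser, or a mean-value inequality for subsolutions) yields, for every $\rv_0\in\Gamma$ lying at distance $\geq O(1)$ from $V$ and from $\bdi\cup\bdbd$, a fixed radius $r_0=O(1)$ with $u(\rv_0)^2\leq C\int_{B(\rv_0,r_0)\cap\Gamma}u^2\,\diff\rv$. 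For $\rv_0$ on or near the outer (Neumann) boundary I would apply the boundary version of this estimate, obtained by even reflection across $\bdo$; for $\rv_0$ within distance $O(1)$ of $V$ or of $\bdi\cup\bdbd$ the claimed exponential is $O(1)$ and the bound follows at once from $u\leq 1$.

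The final step combines the two inputs. Since $\dist(\cdot,\bdo)$ is $1$-Lipschitz, on $B(\rv_0,r_0)$ one has $\dist(\rv_0,\bdo)\leq\dist(\rv,\bdo)+r_0$, whence $e^{c(b)\dist(\rv_0,\bdo)}\leq e^{c(b)r_0}e^{c(b)\dist(\rv,\bdo)}$. Multiplying the mean-value inequality by $e^{c(b)\dist(\rv_0,\bdo)}$ and inserting this comparison gives
\[
 u(\rv_0)^2\,e^{c(b)\dist(\rv_0,\bdo)}\leq C\int_{B(\rv_0,r_0)\cap\Gamma}e^{c(b)\dist(\rv,\bdo)}\,u^2\,\diff\rv\leq C,
\]
where the last bound is precisely \eqref{eq: agmon 2} of \cref{lem: agmon 2}, the ball $B(\rv_0,r_0)$ being contained in a strip of the form $S^{\pm}$. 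Taking square roots yields $|\psi_{\Gamma}(\rv_0)|\leq C e^{-\tfrac12 c(b)\dist(\rv_0,\bdo)}$, i.e.\ \eqref{eq: agmon 3}; the factor $\tfrac12$ is exactly what the square root produces.

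The hard part will be the careful treatment of the local estimate up to the outer boundary and, more delicately, near the vertex, where $\Gamma$ fails to be smooth and the reflection trick must be adapted to the opening angle $\pi\pm\delta$. As noted, though, near $V$ the target bound is trivial because the weight is $O(1)$ there, so one only needs the interior/Neumann estimate away from $V$, where $\Gamma$ is locally a smooth (half-plane-like) domain and standard magnetic elliptic regularity applies with constants uniform in $L$ and $\ell$. This is the content of \cite[Lemma B.5]{CG2}.
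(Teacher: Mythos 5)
The paper offers no proof of this lemma at all --- it simply refers to \cite[Lemma B.5]{CG2} --- so your proposal is to be measured against the standard argument, whose main lines you reproduce correctly: Kato's magnetic inequality turns the GL equation into the subsolution inequality $-\Delta|\psi_\Gamma|\leq \frac{1}{b}|\psi_\Gamma|$, the local maximum principle on balls of radius $\OO(1)$ converts the weighted $L^2$ bound of \cref{lem: agmon 2} into a pointwise bound, and the $1$-Lipschitz character of $\dist(\cdot,\bdo)$ together with the square root accounts exactly for the factor $\frac{1}{2}$ in the exponent of \eqref{eq: agmon 3}. The interior case, the even-reflection treatment of the magnetic Neumann boundary $\bdo$ (where indeed $\partial_\nu|\psi_\Gamma| = \Re\lf(\frac{\psi_\Gamma^*}{|\psi_\Gamma|}\,\nuv\cdot(\nabla+i\fv)\psi_\Gamma\ri)=0$ in the weak sense), and the neighborhood of the vertex $V$ (where $\dist(\rv,\bdo)=\OO(1)$, so $|\psi_\Gamma|\leq 1$ suffices) are all handled properly, with constants uniform in $L$ and $\ell$ as required.

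There is, however, one genuine gap: you dispose of the $\OO(1)$-neighborhood of $\bdi\cup\bdbd$ by asserting that there ``the claimed exponential is $\OO(1)$'', and this is false. On $\bdi$ one has $\dist(\rv,\bdo)=\ell$, and along the lateral segments $\bdbd$ the distance ranges over all of $[0,\ell]$; since $\ell\to+\infty$, the right-hand side of \eqref{eq: agmon 3} is exponentially small in $\ell$ near $\bdi$ and near the far portion of $\bdbd$, so the trivial bound $|\psi_\Gamma|\leq 1$ gives nothing in that region. These Dirichlet portions of the boundary therefore also require a boundary version of the local estimate, and even reflection is unavailable there because the datum $\psi_\star$ is nonzero. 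The fix is standard but must be supplied: use the local boundary maximum principle for subsolutions with inhomogeneous Dirichlet data, $\sup_{B\cap\Gamma}|\psi_\Gamma| \leq C\big(\|\psi_\Gamma\|_{L^2(B'\cap\Gamma)}+\sup_{B'\cap(\bdi\cup\bdbd)}|\psi_\star|\big)$ on concentric balls $B\subset B'$ of radius $\OO(1)$, note that $|\psi_\star|=f_0(t)\leq C e^{-\frac{1}{2}(t+\alpha_0)^2}$ by \eqref{eq: f0 decay} decays faster than $e^{-\frac{1}{2}c(b)\,t}$ (Gaussian versus linear exponential), and control the $L^2$ term by \cref{lem: agmon 2} as in your main estimate. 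With this correction the argument closes; everything else in your write-up is sound.
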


\subsection{Systems of coordinates}
Besides the aforementioned polar coordinates, we are going to use other sets of tubular-like coordinates we introduce here. This however calls for a split of the corner region into two subregions, in which we are allowed to use the tangential length along the boundary and the distance from the outer boundary as global coordinates. For concreteness, we consider only the case of opening angle $ \pi - \delta $ and provide a graphic representation of the domains $ \gpm $ in next \cref{fig: 3} (the adaptation to the case of angle $ \pi + \delta $ is trivial). More precisely, we set
\begin{equation}
	\partial\Gamma_{\mathrm{out}}^{+} := \lf\{ \rv\in \Gamma\, \vert\, \vartheta = 0, 0\leq \varrho\leq L \ri\} , \qquad
	\partial\Gamma_{\mathrm{out}}^{-} := \lf\{\rv\in \Gamma \, \vert\, \vartheta = \pi -\delta), 0\leq \varrho\leq L \ri\}, 
\end{equation}
\begin{equation}
	\Gamma^\pm := \lf\{ \rv\in \mathbb{R}^2\, \vert\, \dist(\rv, \partial\Gamma_{\mathrm{out}}^\pm) \leq \ell, 0 \leq \mp \lf( \vartheta - \vartheta_{\mathrm{bis}} \ri) \leq \vartheta_{\mathrm{bis}} \ri\},
\end{equation}
where $ \vartheta_{\mathrm{bis}} : = (\pi - \delta)/2 $ for short.
As before, we also set $ \partial\Gamma_{\mathrm{in}}^\pm := \partial \Gamma^{\pm} \cap \bdi $ and $ \partial\Gamma_{\mathrm{bd}}^\pm := \partial \Gamma^{\pm} \cap \bdbd $.

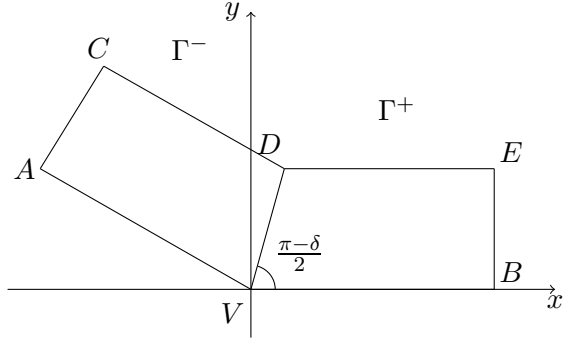
\begin{figure}[ht!]
	\begin{center}
	\begin{tikzpicture}[scale=1.6]
	\draw[->](0,0) -- (4.5,0);
	\draw[->](2,-0.4) -- (2, 2.3);
	\draw (2,0) -- (4,0);
	\draw (2,0) -- (2.275,1);
	\draw (2,0) -- (0.268,1);
	\draw (0.268,1) -- (0.79, 1.85);
	\draw (0.79, 1.85) -- (2.275,1);
	\draw (4,0) -- (4,1);
	\draw (4,1) -- (2.275,1);
	\node at (0.14,1) {$A$};
	\node at (0.75, 2) {$C$};
	\node at (1.5, 2) {$ \Gamma^-$};
	\node at (2.15,1.2) {$D$};
	\node at (4.14,1.14) {$E$};
	\node at (3.2,1.5) {$\Gamma^+$};
	\node at (4.14,0.14) {$B$};
	\node at (1.85,-0.2) {$V$};
	\node at (4.5, -0.1) {$x$};
	\node at (1.85, 2.3) {$y$};
	\draw (2.2,0) arc (0:73:0.2);
	\node at (2.4, 0.3) {$\frac{\pi-\delta}{2}$};
	\end{tikzpicture}
	\caption{The two subregions $\Gamma^\pm$.}
	\label{fig: 3}
	\end{center}
	\end{figure}

Let then $(x,y)$ be the cartesian coordinates centered at the vertex $V$, as in \cref{fig: 3}. The tubular coordinates in the regions $ \Gamma^{\pm} $ respectively read
\beq
	\label{eq: - coordinates}
	\begin{cases}
		s_+ := x,		\\
		t_+:=\dist\lf(\rv,\partial\Gamma_{\mathrm{out}}^+\ri) = y,\,
	\end{cases}
	\qquad		\forall \rv\in\Gamma^+,
\eeq
\beq
	\label{eq: + coordinates}
	\begin{cases}
			s_- = x \cos \delta - y \sin \delta, 	\\
			t_-:=\dist\lf(\rv,\partial\Gamma_{\mathrm{out}}^+\ri) = x\sin\delta+y\cos\delta,
	\end{cases}
	\qquad	\forall \rv\in\Gamma^-.
\eeq 
Hence, in both regions $ s_{\pm} $ measures the tangential length along the outer boundary, while $ t_{\pm} $ is the normal distance to it. Inside each region $ \Gamma^{\pm} $, the coordinates $ (s_{\pm}, t_{\pm}) $ identify a smooth diffeomorphism mapping $ \Gamma^{\pm} $ to the sets  $ \lf\{ 0\leq t_+\leq \ell,\;\tan(\delta/2)t_+\leq s_+\leq L \ri\} $ and $ \lf\{  0\leq t_-\leq \ell,\;-L\leq s_-\leq -t_-\tan(\delta/2) \ri\} $, which with a little abuse of notation we still denote by $ \Gamma^+ $ and $ \Gamma^- $, respectively. Note that the vertex of the corner is identified by the tubular coordinates $ s_{\pm} = 0, t_{\pm} = 0 $. 

\begin{remark}[Continuity of the normal coordinate]
	\label{rem: continuity coordinates}
	\mbox{}		\\
The coordinates defined above provide two patches to cover the corner region and are obviously discontinuous along the bisectrix. However, the discontinuity appears only in the tangential components $ s_{\pm} $. The values of $t_\pm$ indeed coincide on $ \partial \Gamma^+ \cap \partial \Gamma^- $, i.e., on the line $y=\tan\lf(\frac{\pi-\delta}{2}\ri)x$, where
\[
	t_+= \tan\lf( \frac{\pi-\delta}{2} \ri)x = \frac{x}{\tan(\delta/2)}, \qquad t_-=\bigg[\sin\delta+\frac{\cos\delta}{\tan(\delta/2)}\bigg]x=\tan\bigg(\frac{\pi-\delta}{2}\bigg)x.
\]
\end{remark}

We anticipate that the jump discontinuity of the tangential coordinates $ s_{\pm} $ is going to play a key role in the proof of our result and, in order to take it properly into account, we identify a transition region around the bisectrix (see \cref{fig: the four regions}), obtained by restricting the polar angle to the symmetric interval $ [\vartheta_<, \vartheta_>] $, with
\begin{equation}
	\vartheta_<:= \tx\frac{1}{2} (\pi-\delta-\gamma), \quad \vartheta_>:= \tx\frac{1}{2} (\pi-\delta+\gamma)
\end{equation}
with $ \gamma > 0 $ a (small) parameter to be chosen later.
We denote by $ \ggampm $ the corresponding subsets of $ \gpm $, i.e., explicitly, $ \ggampm:= \gpm \cap \lf\{ \lf|\vartheta - \vartheta_{\mathrm{bis}} \ri| \leq \gamma/2 \ri\} $. The complementary sets are denoted by $ \gcompm : = \Gamma^{\pm} \setminus \ggampm $, respectively.

\begin{figure}[ht!]
\begin{center}
\begin{tikzpicture}[scale=1.7]
\draw (2,0) -- (4,0);
\draw (2,0) -- (0.268,1);
\draw (0.268,1) -- (0.79, 1.85);
\draw (0.79, 1.85) -- (2.275,1);
\draw (0,0) -- (2,0);
\draw (4,0) -- (4,1);
\draw (4,1) -- (2.275,1);
\draw (2,0) -- (2.595,2.222);
\draw (2,0) -- (1.4,1.96);
\draw (2,0) -- (3.6, 1.6);
\node at (1.3,0.2) {$\delta$};
\node at (2.7,2.45) {$\vartheta_{\mathrm{bis}} = \frac{\pi-\delta}{2}$};
\node at (3.85,1.85) {$\vartheta_< = \frac{\pi-\delta-\gamma}{2}$};
\node at (1.6,2.2) {$\vartheta_> = \frac{\pi-\delta+\gamma}{2}$};
\node at (3.4,0.5) {$ \gcomp^+$};
\node at (2.5, 0.8) {$\ggam^+$};
\node at (1,1.3) {$\gcomp^-$};
\node at (2, 0.8) {$\ggam^-$};
\end{tikzpicture}
\caption{The four regions $ \ggampm, \gcompm $.}
\label{fig: the four regions}
\end{center}
\end{figure}

For later convenience, we observe that the relation between tubular and polar coordinates in $ \gpm $ is given by
\beq\label{eq: polar coord gamma - +}
	\begin{cases}
		s_+=\varrho\cos\vartheta,
		\\
		t_+=\varrho\sin\vartheta,
	\end{cases}
	\qquad
	\begin{cases}
	s_-=  \varrho\cos(\vartheta + \delta),
	\\
	t_-=\varrho\sin(\vartheta + \delta),
	\end{cases}
\eeq
respectively.

\subsection{Sketch of the proof}

Our main result is obtained by combining suitable upper and lower bounds  to the energy $ E_{\mathrm{corner},\pi \pm\delta}(L,\ell)$ working in different ways in all the subregions we introduced above. The heuristics behind the splitting of $ \corner $ just introduced is indeed that the leading contribution to the energy is given by the regions $ \gcompm $, i.e., $2LE^{1D}_0(\ell)$, while the regions $\ggampm $ are instead responsible for the correction associated with the jump of the tangential coordinate, i.e., $-\delta E_{\mathrm{corr}}$. As discussed in detail in \cite[\textsection 3.1]{CG2} the dominant term in the energy of a finite strip in the surface superconductivity regime is given by $ E^{1D}_0(\ell) $ times the length of the outer boundary ($ 2L $ in our case), where $ \ell $ is the width of the strip.
 
The presence of a non-trivial correction to such an energy is entirely due to the presence of the corner. Let us describe the underlying heuristics. Because of the singularity in the tangential coordinates $ s_{\pm} $, it is indeed impossible to glue together the optimizers of the effective models in the two components $ \gpm $ of the corner region. More precisely, to leading order, the boundary behavior of the order parameter is captured by a wave function of the factorized form (up to a global phase factor)
\bdm
	f_0(t) e^{- i \al_0 s - \frac{i}{2} s t},
\edm
where $ s$ and $ t $ stand for the tangential coordinate and the normal distance to the outer boundary, respectively, and $ \lf( f_0, \al_0 \ri) \in H^1(\R^+) \times \R $ is a minimizing pair for the variational problem \eqref{eq: def E1Dell} (see also \cref{sec: appendix}). Here, the term $ - \frac{1}{2} s t $ is a gauge phase needed to map the magnetic potential $ \mathbf{F} $ to the vector $ (-t, 0) $ in tubular coordinates.

Now, the modulus of the above ansatz is continuous through the bisectrix (see \cref{rem: continuity coordinates}), but, on the opposite, the phase has a jump discontinuity, inherited from the behavior of the tangential coordinates $ s_{\pm} $. In order to glue together the optimal profiles in $ \Gamma^{\pm} $ in the upper bound and then recover the leading order term of the energy, a non-trivial variation of the phase is needed, i.e., the phase of the minimizer must be a genuine two-dimensional function, unlike the model ansatz where the dependence on $ (s,t) $ is factorized. Concretely, we accommodate the transition by introducing a two-dimensional phase in $ \ggampm $ gluing together the phases $ - \al_0 s_{\pm} - \frac{1}{2} s_{\pm} t_{\pm} $ in $ \gpm $: with a little abuse of notation, we define $ \Phi: \Gamma \to \R $ as
\beq
	\Phi:=\begin{cases}
	\Phi_\pm(s_\pm,t_{\pm}), &\mbox{in }\, \gcompm,\\
	\Xi(\varrho,\vartheta), &\mbox{in } \ggampm,\\
	\end{cases}
\eeq
where the two phases $\Phi_\pm$ are defined as follows
\beq
		\Phi_\pm(s_\pm,t_\pm):=-\alpha_0 s_\pm - \tx\frac{1}{2}s_\pm t_\pm,\quad \mbox{in}\, \gcompm,
\eeq
and the function gluing them together is given in polar coordinates by
\beq\label{eq: Delta Phi}
		\Xi(\varrho,\vartheta) := \left[\alpha_0\varrho\sin\lf(\frac{\delta+\gamma}{2}\ri) +\frac{\varrho^2}{4}\sin(\delta + \gamma)\right]\lf(\frac{2\vartheta -\vartheta_<-\vartheta_>}{\gamma}\ri).
\eeq
Note that the function $ \Xi $ has been defined in such a way that
\begin{equation}
 	\lf. \Xi \ri|_{\vartheta = \vartheta_{<}} = \lf. \Phi_+ \ri|_{\vartheta = \vartheta_{<}},	\qquad		 \lf. \Xi \ri|_{\vartheta = \vartheta_{>}} = \lf. \Phi_- \ri|_{\vartheta = \vartheta_{>}},
\end{equation}
which ensures that the phase is continuous (in particular $ H^1(\Gamma) $). However, by estimating the difference $ \lf. \Xi \ri|_{\vartheta = \vartheta_{>}} - \lf. \Xi \ri|_{\vartheta = \vartheta_{<}} $, one immediately sees that the phase undergoes a jump of order $\OO(\delta+\gamma)$  in the region $ \ggam^+ \cup \ggam^- $, which is going to be very relevant in the proof. Note however that the choice of $ \Xi $ is rather arbitrary and any other function with the same properties would work.

On the other hand, the lower bound part of the proof does not involve the phase $ \Xi $, but an apparent discontinuity along the bisectrix emerges nevertheless: the key tools in the argument are a splitting technique to extract the energy to recover, by defining a pair of unknown functions $ u_{\pm} $ via
\bdm
	\psi_{\Gamma} = : f_0(t_{\pm}) e^{- i \al_0 s_{\pm} - \frac{1}{2} s_{\pm} t_{\pm}} u_{\pm},	\qquad	\mbox{in } \Gamma^{\pm}, 
\edm
and the consequent reduction of the problem to the minimization of a weighted functional $ \E_0 $ of $ u_{\pm} $. Note that, according to the upper bound heuristics, we expect $ u_{\pm} $ to be suitably close to $ 1 $, which would make the reduced energy $ \E_0 $ vanish identically. The key step in the lower bound is in fact the proof of the positivity of such an energy: this is done via a suitable integration by parts of the only non-positive term of $ \E_0 $ and exploiting the pointwise positivity of a one-dimensional {\it cost function} (see also \cref{sec: appendix}). There are however non-vanishing boundary terms along the bisectrix generated in the integration by parts, which provide an energy contribution necessary to recover the correction $ - \ecorr \delta $ and which are entirely due to the discontinuity of the phase $ - i \al_0 s_{\pm} - \frac{1}{2} s_{\pm} t_{\pm} $ (and, in particular, to the jump of the coordinate $ s_{\pm} $) on the bisectrix.

\section{Proofs}
\label{sec: proofs}

We denote for short by $\mathcal{G} $ the energy functional in \eqref{eq: Ecorner}, i.e., 
\begin{equation}
	\mathcal{G}[\psi; \Omega]:=\mathcal{E}^{\mathrm{GL}}_1[\psi, \fv;\Omega],
\end{equation}
and by $E_\Gamma$ and $ \psi_\Gamma $ its ground state energy and any corresponding minimizer for $ \Omega = \Gamma $, respectively (recall that $ \Gamma : = \corner $):
\begin{equation}\label{eq: def EGamma}
	E_\Gamma :=\inf_{\psi\in \dom(\Gamma)}\mathcal{G}[\psi; \Gamma] = \mathcal{G}[\psi_{\Gamma}; \Gamma],
\end{equation}
where $ \dom(\Gamma) : = \dom_\star(\Gamma_\beta(L,\ell)) $ is defined in \eqref{eq: dom star}.

As anticipated, we combine suitable upper and lower bounds to the energy to obtain the main result. For the sake of simplicity, we are going to spell in full detail only the proof for opening angle $ \pi - \delta $, while the adaptation to the angle $ \pi + \delta $ is discussed in \cref{sec: completion}.

\subsection{Energy upper bound}
We first prove an upper bound for the energy $E_\Gamma$. 

\begin{pro}[Energy upper bound]
	\label{pro: up bd}
	\mbox{}	\\
	For any fixed $ 1<b<\Theta_0^{-1} $, as $\ell, L \rightarrow +\infty$ and $\delta\rightarrow 0$,
	\beq
		\label{eq: ub}
 		E_\Gamma \leq 2 L E^{\mathrm{1D}}_0(\ell) - \delta E_{\mathrm{corr}} + \OO\big(\delta^{4/3}\big) + \exl.
	\eeq
\end{pro}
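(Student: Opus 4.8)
The plan is to construct an explicit trial state $\psi_{\mathrm{trial}}$ realizing the energy in \eqref{eq: ub}, built out of the one-dimensional optimizer $f_0$ for $E^{\mathrm{1D}}_0(\ell)$ glued through the transition region around the bisectrix. First I would set, with a suitable global modulus profile and the phase $\Phi$ already defined in the sketch,
\[
	\psi_{\mathrm{trial}} := f_0(t_{\pm})\, e^{i \Phi}\quad\text{in }\gcompm, \qquad \psi_{\mathrm{trial}} := f_0(\varrho\sin\vartheta)\, e^{i\Xi(\varrho,\vartheta)}\quad\text{in }\ggampm,
\]
noting that by \cref{rem: continuity coordinates} the modulus $f_0$ depends only on $t_\pm$, which is continuous across the bisectrix, while the matching conditions on $\Xi$ guarantee $\psi_{\mathrm{trial}}\in H^1(\Gamma)$ and the correct boundary values $\psi_\star$ on $\bdbd\cup\bdi$. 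On the boundary $\bdi$ one must check that $f_0$ decays fast enough (by \cref{lem: agmon 3} and the exponential decay of $f_0$ away from the outer boundary) so that the Dirichlet condition there costs only $\exl$.

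Next I would estimate $\G[\psi_{\mathrm{trial}};\Gamma]$ region by region. In $\gcompm$ the state is exactly the factorized ansatz $f_0(t_\pm)e^{-i\alpha_0 s_\pm - \frac{i}{2}s_\pm t_\pm}$, so the energy density reduces, after writing $(\nabla+i\fv)$ in the $(s_\pm,t_\pm)$ coordinates and using that $\fv$ becomes $(-t_\pm,0)$ up to the gauge phase, to the one-dimensional integrand $\fone_{\alpha_0}[f_0;[0,\ell]]$ integrated against $\diff s_\pm\,\diff t_\pm$; summing the two longitudinal lengths gives the main term $2L\,E^{\mathrm{1D}}_0(\ell)$, minus the small overlap of the two transition wedges whose $s$-extent is $O(\delta+\gamma)$. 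The heart of the computation is the energy in $\ggampm$: there the kinetic term splits as $|\partial_\varrho\psi|^2 + \varrho^{-2}|(\partial_\vartheta + i\varrho^2\cos\vartheta/2\,\cdots)\psi|^2$ (in polar coordinates with $\fv=\frac12\varrho\,\hat{\vartheta}$), and the dominant contribution comes from the angular derivative acting on $\Xi$. Using \eqref{eq: Delta Phi}, $\partial_\vartheta\Xi = \frac{2}{\gamma}[\alpha_0\varrho\sin(\frac{\delta+\gamma}{2}) + \frac{\varrho^2}{4}\sin(\delta+\gamma)]$, so the cross term between $\partial_\vartheta\Xi$ and the magnetic phase, integrated against $f_0^2\,\varrho\,\diff\varrho\,\diff\vartheta$ over an angular window of width $\gamma$, produces a contribution that is linear in $\delta$ and, crucially, whose $\gamma$-dependence cancels. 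Matching this against the definition \eqref{eq: def Ecorr} of $E_{\mathrm{corr}}$ — recall $E_{\mathrm{corr}} = \frac13 f_0^2(0)\alpha_0 - E^{\mathrm{1D}}_0$ in the $\ell\to\infty$ limit — should yield exactly $-\delta E_{\mathrm{corr}}$ as the leading correction.

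The main obstacle, and where the error terms originate, will be controlling the quadratic-in-$\Xi$ terms $\varrho^{-2}f_0^2|\partial_\vartheta\Xi|^2$ and the radial kinetic energy $|\partial_\varrho(f_0(\varrho\sin\vartheta))|^2$ in the transition wedge. The quantity $\partial_\vartheta\Xi$ carries a factor $1/\gamma$, so $|\partial_\vartheta\Xi|^2$ carries $1/\gamma^2$, while the angular window has width $\gamma$; the radial integral $\int \varrho\,\diff\varrho$ against $f_0^2(\varrho\sin\vartheta)$ is effectively cut off at $\varrho\sim \ell/\sin\vartheta \sim \ell$ by the decay of $f_0$, producing a factor growing like $\ell^2$ or worse unless the window is controlled. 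The balance is a standard optimization: the positive error scales like $(\delta+\gamma)^2/\gamma$ (from the squared phase gradient over a width-$\gamma$ window), which one minimizes by choosing $\gamma \sim \delta$, but a more careful bookkeeping — tracking the $\delta^2$ from $\sin(\delta+\gamma)\approx\delta+\gamma$ and the way $f_0^2(0)\alpha_0$ emerges — gives the stated $\OO(\delta^{4/3})$ after optimizing $\gamma\sim\delta^{2/3}$. I would therefore keep $\gamma$ as a free parameter throughout, collect all error contributions in the form $C[(\delta+\gamma)^2/\gamma + \gamma^{\,a}\delta^{\,b} + \cdots]$, and optimize at the end; the exponential tails in $t_\pm$ and the truncation of the Dirichlet boundary on $\bdi$ account for the $\exl$ remainder. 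The delicate point is ensuring that the $\gamma$-linear spurious terms genuinely cancel against the main $-\delta E_{\mathrm{corr}}$ so that only the subleading $\delta^{4/3}$ survives.
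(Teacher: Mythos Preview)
Your approach is the same as the paper's: same trial state, same four-region decomposition, same polar-coordinate computation in $\ggampm$, and the same final optimization $\gamma\sim\delta^{2/3}$. There is, however, one substantive imprecision in your error bookkeeping that would prevent the argument from closing as written.

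You claim the squared angular kinetic term in $\ggampm$ produces an error scaling like $(\delta+\gamma)^2/\gamma$. But that expression, optimized over $\gamma$, gives only $\OO(\delta)$, not $\OO(\delta^{4/3})$. The point you are missing is that $\tfrac{1}{\varrho}\partial_\vartheta\Xi + \tfrac{\varrho}{2}$ is not merely $\OO((\delta+\gamma)/\gamma)$: expanding $\sin\bigl(\tfrac{\delta+\gamma}{2}\bigr)$ and $\sin(\delta+\gamma)$ one finds
\[
	\tfrac{1}{\varrho}\partial_\vartheta\Xi + \tfrac{\varrho}{2} \;=\; (\varrho+\alpha_0) \;+\; \tfrac{\delta}{\gamma}\bigl(\tfrac{\varrho}{2}+\alpha_0\bigr) \;+\; \OO(\gamma^2),
\]
so the leading piece is exactly the one-dimensional potential $(\varrho+\alpha_0)$. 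After squaring and integrating over the angular window of width $\gamma$, the $(\varrho+\alpha_0)^2$ part is \emph{not} an error: it contributes $\gamma\int_0^\ell \varrho(\varrho+\alpha_0)^2 f_0^2\,\diff\varrho$ to the main energy and cancels the $\gamma$-part of the overlap you subtracted from $\gcompm$. The cross term contributes the genuine $\delta$-linear piece, and only the squared correction $(\delta/\gamma)^2$, integrated over width $\gamma$, yields the true error $\OO(\delta^2/\gamma)$. Together with the $\OO(\gamma^2)$ coming from Taylor-expanding $f_0(\varrho\sin\vartheta)$ (this is where \cref{lem: estimate f0} enters) and the sines, one gets $\OO(\gamma^2)+\OO(\delta^2/\gamma)$, which does optimize to $\OO(\delta^{4/3})$ at $\gamma=\delta^{2/3}$.

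A related clarification: $-\delta E_{\mathrm{corr}}$ does not come from a single cross term. The $\gamma$-linear contributions from $\ggampm$ cancel the $\gamma$-linear part of the overlap subtraction in $\gcompm$; what remains is the combination of the $-\delta$ overlap integral and the $+\delta$ cross-term integral, and it is this combination that reproduces the integrand defining $E_{\mathrm{corr}}$ in \eqref{eq: def Ecorr}.
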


\begin{proof}
To prove the upper bound, we evaluate the energy of the trial state $ \trial \in \dom(\Gamma) $ given by (again with a little abuse of notation)
\beq\label{eq: psi trial}
		\psi_{\mathrm{trial}}:=
		\begin{cases}
			f_0(t_\pm)e^{i\Phi_\pm(s_\pm,t_\pm)}, & \mbox{in }  \gcompm, \\
			f_0(t_\pm)e^{i \Xi(\varrho, \vartheta)}, &\mbox{in }  \ggampm.
		\end{cases}
\eeq
Note that the trial state above satisfies the boundary conditions in $ \dom(\Gamma) $ and, moreover, $ \trial $ may be extended outside $ \Gamma $ expanding the domain of tubular coordinates.

We start by working in the region $ \gcomp^+$: denoting by $R^+ $ the rectangle obtained by completing $\gcomp^+$, i.e.,
\[
R^+:= \lf\{(s_+, t_+)\in \lf[0,L \ri]\times \lf[0,\ell \ri] \ri\},
\]
we obviously get
\beq
	\label{eq: gcomp}
	\mathcal{G} \lf[f_0 e^{i\Phi_+}; \gcomp^+ \ri] = \mathcal{G} \lf[f_0 e^{i\Phi_+}; R^+ \ri] - \mathcal{G} \lf[f_0 e^{i\Phi_+}; R^+\setminus \gcomp^+ \ri].
\eeq
Using that the boundary coordinates $ (s_+ , t_+) $ coincide with the cartesian ones in $ R $ (recall \eqref{eq: - coordinates}) and the identity
\bml{
	\label{eq: gauge change}
	\mathbf{F}(s_+,t_+) + \nabla_{(s_+,t_+)} \Phi_+ (s_+,t_+) =  \lf(- \tx\frac{1}{2} t_+ - \alpha_0 - \tx\frac{1}{2} t_+ \ri) \hat{\ev}_s + \lf( \tx\frac{1}{2} s_+  - \tx\frac{1}{2}  s_+ \ri) \hat{\ev}_t \\
	= - \lf( t_+ + \alpha_0  \ri) \hat{\ev}_s,
}
where we have denoted by $ \hat{\ev}_s $ and $ \hat{\ev}_t $ the unit vector in the tangential and normal directions, respectively, an easy calculation shows that (see, e.g., \cite[\textsection 3.1]{CG2})
\beq
	\label{eq: R^+}
	\mathcal{G} \lf[f_0 e^{i\Phi_+}; R^+ \ri] = L E^{\mathrm{1D}}_0(\ell),
\eeq
where we recall that $E^{\mathrm{1D}}_0(\ell)$ is the ground state energy of the 1D effective model in the interval $[0, \ell]$ introduced in \eqref{eq: def E1Dell}. On the other hand, we compute in $ R\setminus \gcomp^- $
\bml{
	\label{eq: R minus Gamma}
	\mathcal{G} \lf[f_0 e^{i\Phi_+}; R^+\setminus \gcomp^+ \ri] = \int_0^{\ell} \diff t \int_{0}^{t \tan \frac{\delta +\gamma}{2}} \diff s \lf\{|f_0'|^2 + (t+\alpha_0)^2 f_0^2 -\frac{1}{2b}(2f_0^2-f_0^4)\ri\}	\\
	= \frac{\delta +\gamma}{2}  \int_0^{\ell} \diff t \: t  \lf\{|f_0'|^2 + (t+\alpha_0)^2 f_0^2 -\frac{1}{2b}(2f_0^2-f_0^4)\ri\} + \OO(\gamma^3) ,
}
where we chose $ \gamma $ in such a way that
\beq
	\label{eq: gamma condition}
	\delta = \OO(\gamma).
\eeq 
Note also that the integral on the r.h.s. can be easily seen to be $ \OO(1) $ by exploiting the decay properties of $ f_0 $ (see \eqref{eq: f0 decay}) and the uniform boundedness of $ \alpha_0 $ (see \cref{sec: appendix}). 

Let us now consider the region $ \gcomp^- $: using the inverse transformation of \eqref{eq: + coordinates} together with the cartesian coordinate representation of the unit vectors $ \hat{\ev}_s = (\cos \delta, - \sin \delta) $, $ \hat{\ev}_t = (\sin \delta, \cos \delta) $, we compute
\begin{align}	
	&\fv(s_-,t_-)\cdot \hat{\ev}_s = \tx\frac{1}{2} \lf( s_-\sin\delta -t_-\cos\delta, s_-\cos\delta + t_-\sin\delta \ri) \cdot (\cos\delta, -\sin\delta) = -\tx\frac{1}{2} t_-,	\label{eq: gauge change 1 -}
	\\
	&\fv(s_-,t_-)\cdot \hat{\ev}_t = \tx\frac{1}{2} s_-,	\label{eq: gauge change 2 -}
\end{align}
so that
\bml{
	\lf|\lf(\nabla + i \fv \ri)\lf(f_0 e^{i\Phi_-}\ri)\ri|^{2}= f_0^2(t_-) \lf|\partial_s \Phi_-  - \tx\frac{1}{2} t_- \ri|^2 + \lf|f^{\prime}_0(t_-) + i \lf( \tx\frac{1}{2} s + \partial_{t_-} \Phi_- \ri) f_0(t_-) \ri|^2	\\
	= \lf| f^{\prime}_0(t_-) \ri|^2 + \lf( t_- + \alpha_0 \ri)^2 f_0^2(t_-).
}
Hence, we can proceed as for $ \gcomp^+ $ (see \eqref{eq: gcomp}, \eqref{eq: R^+} and \eqref{eq: R minus Gamma}), to get
\bdm
	\mathcal{G} \lf[f_0 e^{i\Phi_-}; \gcomp^- \ri] = L E_0^{\mathrm{1D}}(\ell) - \frac{\delta + \gamma}{2} \int_0^{\ell} \diff t \: t  \lf\{|f_0'|^2 + (t+\alpha_0)^2 f_0^2 -\frac{1}{2b}(2f_0^2-f_0^4)\ri\} + \OO(\gamma^3),
\edm
and combining this with the analogous result for $ \gcomp^+ $, we conclude that
\bml{
	\label{eq: energy far from bisectrix}
	\mathcal{G} \lf[\trial; \gcomp^- \cup \gcomp^+ \ri] = 2 L E_0^{\mathrm{1D}}(\ell) - (\delta +\gamma) \int_0^{\ell} \diff t \: t  \lf\{|f_0'|^2 + (t+\alpha_0)^2 f_0^2 -\frac{1}{2b}(2f_0^2-f_0^4)\ri\} \\
	+ \OO(\gamma^3). 
}

It remains to compute the energy contribution in the two regions $\Gamma^{\pm}_{\gamma}$ close to the bisectrix. There it is more convenient to work in polar coordinates. Since
\begin{align}\begin{split}\label{eq: a trial component rho theta}
	&\fv\cdot\hat{\ev}_{\varrho}:= \tx\frac{1}{2} \varrho \lf(-\sin\vartheta,\cos\vartheta \ri) \cdot \lf(\cos\vartheta, \sin\vartheta \ri) = 0,
\\
	&\fv\cdot\hat{\ev}_{\vartheta}:= \tx\frac{1}{2} \varrho \lf(-\sin\vartheta,\cos\vartheta \ri) \cdot \lf(-\sin\vartheta, \cos\vartheta \ri) = \tx\frac{1}{2} \varrho,
\end{split}\end{align}
we obtain
\[
		\mathcal{G} \lf[\psi_{\mathrm{trial}}; \ggam^+\ri] = \int_{\Gamma^{+}_{\gamma}}\diff\varrho\diff\vartheta \: \varrho\,\lf\{\lf| \partial_\varrho \trial \ri|^2 + \frac{1}{\varrho^2} \lf| \partial_{\vartheta} \trial + \tx\frac{i}{2} \varrho \trial \ri|^2-\frac{1}{2b}(2f_0^2-f_0^4)\ri\},
\]
where we have omitted for short the dependence of $f_0$ and $ \trial $ on $\varrho$ and $\vartheta$. Furthermore, we have
\bdm
	\lf| \partial_\varrho \trial \ri|^2 = \lf|\partial_\varrho \lf(f_0 e^{i\Xi} \ri) \ri|^2 = \lf|\partial_\varrho f_0(\varrho \sin \vartheta) \ri|^2 +  f_0^2 \lf|  \partial_\varrho \Xi(\varrho, \vartheta) \ri|^2,
\edm
and 
\beq
	\label{eq: deriv rho phase}
	\partial_\varrho \Xi = \left[\alpha_0\sin\bigg(\frac{\delta+\gamma}{2}\bigg) + \frac{\varrho}{2}\sin(\delta+\gamma)\right]\lf(\frac{2\vartheta-\vartheta_<-\vartheta_>}{\gamma}\ri) =  \OO(\gamma) + \OO(\varrho \gamma),
\eeq
under the assumption $ \delta = \OO(\gamma) $, so that (again by the decay \eqref{eq: f0 decay} of $ f_0 $)
\beq
	\int_{\Gamma^{+}_{\gamma}}\diff\varrho\diff\vartheta \: \varrho\: \lf| \partial_\varrho \trial \ri|^2 = \int_{\Gamma^{+}_{\gamma}}\diff\varrho\diff\vartheta \: \varrho \sin^2 \vartheta \lf| f_0^{\prime} (\varrho \sin\vartheta) \ri|^2 + \OO(\gamma^2).
\eeq
We now estimate the angular component of the kinetic energy in $ \gcomp^+ $:
\[
	\frac{1}{\varrho^2} \lf| \partial_{\vartheta} \trial + \tx\frac{i}{2} \varrho \trial \ri|^2
	= \cos^2 \vartheta \lf| f^{\prime}_0(\varrho \sin \vartheta) \ri|^2 +  f_0^2(\varrho \sin\vartheta) \lf| \frac{1}{\varrho} \partial_\vartheta \Xi + \frac{1}{2} \varrho \ri|^2.
\]
Since
\begin{equation}\label{eq: deriv theta phase}
	\frac{1}{\varrho} \partial_\vartheta \Xi
	 = \frac{2}{\gamma} \left[\alpha_0 \sin\lf(\frac{\delta+\gamma}{2}\ri) +\frac{\varrho}{4}\sin(\delta + \gamma)\right] = \frac{\varrho}{2} + \alpha_0 + \frac{\delta}{\gamma}\lf( \frac{\varrho}{2}+\alpha_0 \ri) +\OO(\gamma^2 ) +\OO(\varrho\gamma^2),
\end{equation}
we find 
\bml{
		\int_{\Gamma^{+}_{\gamma}}\diff\varrho\diff\vartheta \: \frac{1}{\varrho} \lf| \partial_{\vartheta} \trial + \tx\frac{i}{2} \varrho \trial \ri|^2
			= \int_{\Gamma^{+}_{\gamma}}\diff\varrho\diff\vartheta \: \lf\{ \varrho \cos^2 \vartheta \lf| f^{\prime}_0(\varrho \sin \vartheta) \ri|^2 \ri.	\\ 
			\lf. + \varrho f_0^2(\varrho \cos\vartheta) \lf[ \varrho+\alpha_0 + \tx\frac{\delta}{\gamma}\lf(\tx\frac{1}{2}\varrho+ \alpha_0 \ri)\ri]^2 \ri\} +\OO(\gamma^2).
}
We can now apply the estimates proven in \cref{lem: estimate f0}, to get
\bml{
	\mathcal{G} \lf[\psi_{\mathrm{trial}}; \ggam^+\ri] = \int_{\Gamma^{+}_{\gamma}}\diff\varrho\diff\vartheta \: \varrho \lf\{ \lf| f_0^{\prime} (\varrho)  \ri|^2 + f_0^2(\varrho) \lf[ \varrho + \alpha_0 + \tx\frac{\delta}{\gamma}\lf(\tx\frac{1}{2}\varrho + \alpha_0 \ri) \ri]^2 \ri.	\\
	\lf. - \tx\frac{1}{2b} \lf(2f_0^2(\varrho) -f_0^4(\varrho) \ri) \ri\} +  \OO(\gamma^2),
}
where we have also exploited the fact that $ \vartheta_> - \vartheta_< = \gamma $.
Now, we can approximate the integration domain as follows: let $ F(\varrho) $ be a function enjoying the same decay properties as $ f_0 $ or $ f_0^{\prime} $, then
\bml{
	\int_{\Gamma^{+}_{\gamma}}\diff\varrho\diff\vartheta \: \varrho \: F(\varrho) = \int_{\vartheta_<}^{\vartheta_{\mathrm{bis}}} \diff \vartheta \int_0^{\bar{\varrho}(\vartheta)} \diff \varrho \: \varrho \: F(\varrho)
	= \frac{\gamma}{2} \int_0^{\ell} \diff \varrho \: \varrho \: F(\varrho) + \int_{\vartheta_<}^{\vartheta_{\mathrm{bis}}} \diff \vartheta \int_{\ell}^{\bar{\varrho}(\vartheta)} \diff \varrho \: \varrho \: F(\varrho) \\
	=  \frac{\gamma}{2} \int_0^{\ell} \diff \varrho \: \varrho \: F(\varrho) + \OO(\gamma \ell^{-\infty}).
}
where we have denoted by $ (\bar{\varrho}(\vartheta), \vartheta) $, $ \vartheta \in [\vartheta_<, \vartheta_>] $, the polar coordinates of points belonging to the inner boundary of the region $ \ggam^+ \cup \ggam^- $ (note that $ \bar{\varrho}(\vartheta) \geq \ell $ for any $ \vartheta $). Hence,
\bml{
	\mathcal{G} \lf[\psi_{\mathrm{trial}}; \ggam^+\ri] = \frac{\gamma}{2} \int_{0}^{\ell} \diff\varrho  \: \varrho \lf\{ \lf| f_0^{\prime} (\varrho)  \ri|^2 + f_0^2(\varrho) \lf[ ( \varrho +\alpha_0)^2 + \frac{\delta}{\gamma}\lf( \varrho + \alpha_0 \ri) \lf(\varrho +  2\alpha_0 \ri) \ri] \ri.	\\
	\lf. - \frac{1}{2b} \lf(2f_0^2(\varrho) -f_0^4(\varrho) \ri) \ri\} +  \OO(\gamma^2) + \OO(\delta^2 \gamma^{-1}) + \OO(\gamma \ell^{-\infty}).
}

The estimate of the energy contribution from $ \ggam^- $ is perfectly analogous and we obtain
\bml{
	\mathcal{G} \lf[\psi_{\mathrm{trial}}; \ggam^+ \cup \ggam^- \ri] = \gamma \int_{0}^{\ell} \diff\varrho  \: \varrho \lf\{ \lf| f_0^{\prime} (\varrho)  \ri|^2 + ( \varrho +\alpha_0)^2 f_0^2(\varrho) - \tx\frac{1}{2b} \lf(2f_0^2(\varrho) -f_0^4(\varrho) \ri) \ri\} \\
	+ \delta \int_{0}^{\ell} \diff\varrho  \: \varrho \lf( \varrho + \alpha_0 \ri) \lf(\varrho +  2\alpha_0 \ri) f_0^2(\varrho)  +  \OO(\gamma^2) + \OO(\delta^2 \gamma^{-1}) + \OO(\gamma \ell^{-\infty}).
}
Combining this with \eqref{eq: energy far from bisectrix}, we finally get
\bml{
		\mathcal{G} \lf[\psi_{\mathrm{trial}}; \Gamma \ri] = 2 L E^{\mathrm{1D}}_0(\ell)
	 -\delta\int_0^{\ell} \diff t\, t\lf\{|f_0'|^2 +f_0^2\left[(t+\alpha_0)^2 - \alpha_0 \left(t+\alpha_0\right)\right]  - \tx\frac{1}{2b}(2f_0^2-f_0^4)\ri\} 
		\\
		+  \OO(\gamma^2) + \OO(\delta^2 \gamma^{-1}) + \OO(\gamma \ell^{-\infty}),
}
and, optimizing over $ \gamma $, i.e., choosing $ \gamma = \delta^{2/3} $, we conclude that
\beq
	E_{\Gamma} \leq 2 L E^{\mathrm{1D}}_0(\ell)  -\delta E_{\mathrm{corr}} + \OO(\delta^{4/3}) + \OO(\delta^{2/3} \ell^{-\infty} ).
\eeq
\end{proof}

\subsection{Energy Lower Bound}
We now prove the lower bound corresponding to the upper bound proved in Proposition \ref{pro: up bd}.

	\begin{pro}[Energy lower bound]\label{pro: lw bd}
		\mbox{}	\\
		For any fixed $ 1<b<\Theta_0^{-1} $, as $\ell,L \rightarrow +\infty$ and $\delta\rightarrow 0$,
		\begin{equation}\label{eq: lw bd}
 			E_{\Gamma}\geq 2 L E^{1D}_0(\ell) - \delta \ecorr + \OO(\delta^{4/3} |\log\delta|) + \mathcal{O}(L^2\ell^{-\infty}).
		\end{equation}
	\end{pro}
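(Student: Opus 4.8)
The plan is to mirror the upper bound through a Lassoued--Mironescu type splitting and to recover the correction $-\delta\ecorr$ from boundary terms concentrated on the bisectrix. In each patch $\gpm$ I would write the genuine minimizer as $\psi_{\Gamma}=f_0(t_{\pm})\,e^{i\Phi_{\pm}(s_{\pm},t_{\pm})}\,u_{\pm}$, with $\Phi_{\pm}=-\alpha_0 s_{\pm}-\tfrac12 s_{\pm}t_{\pm}$ exactly as in the trial state; this is well defined since $f_0>0$ on $[0,\ell]$. Using the gauge identities \eqref{eq: gauge change} and \eqref{eq: gauge change 1 -}--\eqref{eq: gauge change 2 -}, the energy density splits and, after integrating the term $f_0 f_0'\,\partial_{t}|u_{\pm}|^2$ by parts in $t$ (the endpoint contributions vanishing by the Neumann conditions $f_0'(0)=f_0'(\ell)=0$) and inserting the Euler--Lagrange equation for $f_0$, one arrives at the exact identity $\mathcal{G}[\psi_{\Gamma};\gpm]=\mathcal{G}[f_0 e^{i\Phi_{\pm}};\gpm]+\E_0[u_{\pm}]$, where, exactly as in the computation on $\gcompm$ in \cref{pro: up bd},
\[
	\mathcal{G}[f_0 e^{i\Phi_{\pm}};\gpm]=L\,E^{\mathrm{1D}}_0(\ell)-\tfrac{\delta}{2}\int_0^{\ell}\diff t\; t\,\big\{|f_0'|^2+(t+\alpha_0)^2 f_0^2-\tfrac{1}{2b}(2f_0^2-f_0^4)\big\}+\OO(\delta^3),
\]
and the reduced functional, which vanishes at $u\equiv1$, reads
\[
	\E_0[u]=\int f_0^2\Big\{|\nabla u|^2+2(t+\alpha_0)\,\Im\big(\overline{\partial_s u}\,u\big)\Big\}+\tfrac{1}{2b}\int f_0^4\,(1-|u|^2)^2.
\]

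The two patches thus produce $2L\,E^{\mathrm{1D}}_0(\ell)-\delta\int_0^\ell t\{\cdots\}\,\diff t+\OO(\delta^3)$, and the whole point becomes to bound $\E_0[u_+]+\E_0[u_-]$ from below. Here $\int f_0^2|\nabla u|^2\ge0$ and $\tfrac{1}{2b}\int f_0^4(1-|u|^2)^2\ge0$ are already favorable, so the only dangerous term is the current-like one $\int 2(t+\alpha_0)f_0^2\,\Im(\overline{\partial_s u}\,u)$. I would integrate it by parts and reorganize the resulting current, using the moment condition $\int_0^\ell (t+\alpha_0)f_0^2\,\diff t=0$ (the $\alpha$-optimality of the one-dimensional problem, see \cref{sec: appendix}) to annihilate the contributions that would otherwise arise from the outer and inner boundaries $\bdo,\bdi$; the volume remainder left over is then quadratic in $\nabla u$ and is reabsorbed into the favorable terms through the pointwise positivity of the one-dimensional cost function of \cref{sec: appendix}. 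The crucial feature is that the single surviving boundary term is located on the bisectrix and does not vanish: continuity of $\psi_\Gamma$ forces $u_-=e^{i(\Phi_+-\Phi_-)}u_+$ there, and on the line $t_+=t_-=t$ (where $s_{\pm}=\pm t\tan(\delta/2)$) an explicit computation gives $\Phi_+-\Phi_-=-\delta\,t(\tfrac12 t+\alpha_0)+\OO(\delta^3)$, i.e. a phase jump linear in $\delta$ stemming precisely from the discontinuity of $s_{\pm}$.

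Evaluating this boundary term with $|u|$ replaced by $1$ yields $+\delta\int_0^\ell t\,(t+\alpha_0)(t+2\alpha_0)f_0^2\,\diff t$, which is exactly the contribution that the regions $\ggampm$ supplied in the upper bound; combined with the reference mismatch $-\delta\int_0^\ell t\{\cdots\}\,\diff t$ it reconstructs $-\delta\,\ecorr$ via the same one-dimensional identities as in \cref{pro: up bd}. The replacement of $|u|$ by $1$, the reabsorption of the remainder, and the a priori control of the deviation $u-1$ all rest on feeding back the upper bound of \cref{pro: up bd}: it bounds $\int f_0^2|\nabla u|^2$ and, via the cost function, the size of $1-|u|^2$. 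The far field in $s$ and the inner boundary $\bdi$ are discarded at order $\OO(L^2\ell^{-\infty})$ by the Agmon estimates of \cref{lem: agmon 2,lem: agmon 3}, and the angular width near the bisectrix is optimized as $\gamma=\delta^{2/3}$, exactly as in the upper bound, producing the $\OO(\delta^{4/3})$ term; the extra logarithmic factor comes from the vertex neighborhood, where in polar coordinates the weight $\diff\varrho/\varrho$ must be cut off between scales of order $\delta$ and $1$.

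The main obstacle I anticipate is the rigorous treatment of the bisectrix boundary term: one has to justify the tangential integration by parts across the jump of the coordinate $s_{\pm}$, to ensure that $|u|$ does not degenerate near the corner so that the explicit $\delta$-linear phase jump genuinely determines the boundary contribution, and --- most delicately --- to show that every remainder is not merely $\OO(\delta)$ but $\OO(\delta^{4/3}|\log\delta|)$, which is what makes $-\delta\,\ecorr$ the sharp leading correction and closes the gap with \cref{pro: up bd}.
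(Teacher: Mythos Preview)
Your proposal follows the same architecture as the paper: split via $\psi_\Gamma = f_0 e^{i\Phi_\pm} u_\pm$ on each patch, reduce to the functional $\E_0[u]$, integrate the current term by parts against the potential $F_0(t)=2\int_0^t(\eta+\alpha_0)f_0^2\,\diff\eta$ (whose vanishing at $t=0,\ell$ is exactly the moment condition you cite), absorb the bulk remainder into the kinetic term via the pointwise positivity of the cost function $\kol$, and extract the $\delta$-linear correction from the surviving boundary term on the bisectrix. Two tactical points differ. First, the paper does \emph{not} feed back the upper bound to control the replacement $|u|\to 1$ on the bisectrix; instead it keeps the quartic term $\tfrac{1}{2b}\int f_0^4(1-|u|^2)^2$ in reserve inside $\E_0$, spends half of it against the error via Young's inequality together with a tangential cutoff of width $\gamma$, and uses the Agmon bounds of \cref{lem: agmon 2,lem: agmon 3} for the rest---this is self-contained and avoids the potential circularity in your scheme (you would need a crude lower bound on $\E_0$ before you can combine it with the upper bound). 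Second, the logarithm does not come from a $\diff\varrho/\varrho$ weight near the vertex: it arises from a further cutoff at normal distance $d=|\log\delta|$ needed so that the Gaussian decay of $f_0$ kills the tail, and the final optimization is $\gamma=\delta^{2/3}|\log\delta|^2$ rather than $\delta^{2/3}$. A minor omission: in the initial splitting, integrating $f_0 f_0'\partial_t|u_\pm|^2$ by parts over the non-rectangular domain $\Gamma^\pm$ \emph{does} produce a boundary term on the bisectrix (Neumann conditions only handle $t=0,\ell$), but these cancel between the two patches since $|u_+|=|u_-|$ there.
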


	The starting point of the lower bound proof is a suitable energy splitting, which is somewhat customary in the study of the GL and related functionals. 

	\begin{pro}[Energy splitting]\label{pro: energy splitting}
		\mbox{}	\\
		For any fixed $ 1<b<\Theta_0^{-1} $, let the function $ u : = u_+ \one_{\Gamma_+} + u_- \one_{\Gamma_-} \in C(\Gamma) $  be defined via
		\beq
			\label{eq: splitting u}
			\psi_\Gamma  =: u_{\pm} f_0 e^{i \Phi_\pm}, 	\qquad	\mbox{in } \Gamma^{\pm}.	
		\eeq
		Then, as $\ell, L \rightarrow +\infty$ and $ \delta \to 0 $,
		\beq\label{eq: splitting}
			E_{\Gamma} \geq 2L E^{1D}_0(\ell) - \delta \ecorr -\delta \int_0^{\ell} \diff t \: t \lf( t + \alO \ri) \lf( t + 2 \alO \ri) f_0^2 + \mathcal{E}_0[u] + \OO(\delta^3),
		\eeq 
		where we have used the compact notation $ \E_0[u] = \E_0[u^+; \Gamma^+] + \E_0[u^-; \Gamma^-]  $ and 
		\beq
			\mathcal{E}_0[u; \mathcal{D}]:=\displaystyle\int_{\mathcal{D}}\diff s \diff t \: f_0^2 \lf\{ \lf| \partial_t u \ri|^2 + \lf| \partial_s u \ri|^2 - 2 \lf( t + \alpha_0 \ri) j_s[u] + \tx\frac{1}{2b}f_0^2 \lf(1- |u |^2 \ri)^2 \ri\},
		\eeq
		with $ j_s[u] : = \hat{\ev}_s \cdot \jv[u] $ and the current $ \jv[u] $ defined as in \eqref{eq: sup current}.
	\end{pro}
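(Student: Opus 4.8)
The plan is to prove \eqref{eq: splitting} as an essentially exact energy identity, the inequality arising only from discarding one controlled boundary remainder. I would split $E_\Gamma=\G[\psi_\Gamma;\Gamma^+]+\G[\psi_\Gamma;\Gamma^-]$ and work in each patch with its own tubular coordinates $(s_\pm,t_\pm)$, in which the metric is flat (the sides of $\Gamma^\pm$ are straight, so $\diff\rv=\diff s_\pm\,\diff t_\pm$) and the gauge identities \eqref{eq: gauge change} and \eqref{eq: gauge change 1 -}--\eqref{eq: gauge change 2 -} give $\fv+\nabla\Phi_\pm=-(t_\pm+\alO)\hat{\ev}_s$. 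Inserting the factorisation \eqref{eq: splitting u} and expanding the covariant gradient, the kinetic density of $\psi_\Gamma$ becomes $f_0^2|\nabla u_\pm|^2+|u_\pm|^2[(f_0')^2+(t_\pm+\alO)^2f_0^2]+f_0f_0'\,\partial_{t_\pm}|u_\pm|^2-2(t_\pm+\alO)f_0^2\,j_s[u_\pm]$, the last two terms coming from the real part of the cross term, with $j_s[u]=\Im(\bar u\,\partial_s u)$ as in \eqref{eq: sup current}.

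Next I would collect this density against that of $\E_0[u_\pm]$. The gradient and current terms match those of $\E_0[u_\pm;\Gamma^\pm]$ verbatim, while combining the remaining potential terms with the quartic of $\E_0$ leaves an $|u_\pm|^2$-weighted remainder together with the cross term $f_0f_0'\,\partial_{t_\pm}|u_\pm|^2$. Here the one-dimensional Euler--Lagrange equation for $f_0$, namely $-f_0''+(t+\alO)^2f_0-\tfrac1b(f_0-f_0^3)=0$ with the Neumann conditions $f_0'(0)=f_0'(\ell)=0$ inherited from \eqref{eq: def E1Dell}, is the key: multiplying it by $f_0$ turns the weighted remainder into $|u_\pm|^2\,\partial_{t_\pm}(f_0f_0')$, which combines with the cross term so that the whole non-$\E_0$ contribution equals the reference density of $f_0e^{i\Phi_\pm}$ plus a pure $t$-divergence. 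This yields the exact splitting
\beq
	\G[\psi_\Gamma;\gpm] = \G[f_0e^{i\Phi_\pm};\gpm] + \E_0[u_\pm;\gpm] + \int_{\gpm}\partial_{t_\pm}\lf[(|u_\pm|^2-1)f_0f_0'\ri]\,\diff s_\pm\diff t_\pm.
\eeq

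The reference energies are then computed explicitly. Each $\Gamma^\pm$ is a rectangle of tangential length $L$ minus the thin triangle $\{0\le s_\pm\le t_\pm\tan(\delta/2)\}$ abutting the bisectrix; the rectangle gives $LE^{\mathrm{1D}}_0(\ell)$ and the triangle $-\tan(\tfrac\delta2)\int_0^\ell t\,W_{\mathrm{ref}}(t)\,\diff t$, where $W_{\mathrm{ref}}$ is the integrand of \eqref{eq: def fone} evaluated at $(f_0,\alO)$. Summing the two patches and using $\tan(\tfrac\delta2)=\tfrac\delta2+\OO(\delta^3)$, I obtain $2LE^{\mathrm{1D}}_0(\ell)-\delta\int_0^\ell t\,W_{\mathrm{ref}}+\OO(\delta^3)$; splitting $W_{\mathrm{ref}}$ against the defining integrand of $\ecorr$ in \eqref{eq: def Ecorr} (the two differ precisely by $(t+\alO)(t+2\alO)f_0^2$) reproduces the two correction terms $-\delta\ecorr-\delta\int_0^\ell t(t+\alO)(t+2\alO)f_0^2$ of \eqref{eq: splitting}, up to exponentially small $\ell$-tails.

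The remaining $t$-divergence is where the real work lies, and it is the main obstacle. By the divergence theorem its integral reduces to a single line integral on the bisectrix, $2\tan(\tfrac\delta2)\int_0^\ell(|u|^2-1)f_0f_0'\,\diff t$, since the contributions from the edges $t_\pm\in\{0,\ell\}$ vanish by the Neumann conditions and those from $s_\pm=\pm L$ vanish by orthogonality. This term does not cancel between the two patches, because the tangential coordinate $s_\pm$ (and hence $\Phi_\pm$) jumps across the bisectrix. I would control it using that $|u|$ is uniformly close to $1$ — a consequence of the Agmon estimates \cref{lem: agmon 2,lem: agmon 3} together with the matching upper bound \cref{pro: up bd} — so that, after a trace/Cauchy--Schwarz estimate relating the boundary value of $|u|^2-1$ to $\E_0[u]$, the whole line integral is absorbed into the remainder $\OO(\delta^3)$ (and the $\OO(L^2\ell^{-\infty})$ tails appearing in the subsequent lower bound). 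The genuine difficulties are thus the precise bookkeeping of the several $\OO(\delta)$ one-dimensional integrals into the compact form of \eqref{eq: splitting}, and the estimation of this bisectrix boundary contribution generated by the coordinate discontinuity.
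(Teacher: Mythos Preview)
Your overall strategy --- factor $\psi_\Gamma=f_0e^{i\Phi_\pm}u_\pm$ in each patch, expand the covariant gradient, integrate the cross term $f_0f_0'\partial_{t_\pm}|u_\pm|^2$ by parts, and invoke the Euler--Lagrange equation for $f_0$ --- is precisely the paper's. The divergence is entirely in the handling of the bisectrix boundary term.

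The paper does not estimate that term at all: it argues the two bisectrix contributions cancel exactly. The reasoning is that on $\partial\Gamma_{\mathrm{bis}}$ one has $t_+=t_-$ (Remark~\ref{rem: continuity coordinates}), hence $f_0$, $f_0'$ and $|u_+|=|u_-|=|\psi_\Gamma|/f_0$ all coincide, while the physical outward normals satisfy $\hat{\nv}_-=-\hat{\nv}_+$. With the reference energy packaged as $-\tfrac{1}{2b}\int_\Gamma f_0^4$ rather than $\G[f_0e^{i\Phi_\pm};\Gamma^\pm]$, the paper therefore obtains the \emph{exact} identity $E_\Gamma=-\tfrac{1}{2b}\int_\Gamma f_0^4+\E_0[u]$ with no residual line integral; the $\OO(\delta)$ corrections of \eqref{eq: splitting} then come entirely from evaluating $-\tfrac{1}{2b}\int_\Gamma f_0^4$ on the triangles $R^\pm\setminus\Gamma^\pm$ via the variational equation.

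Your proposed route --- bounding $2\tan(\tfrac\delta2)\int_0^\ell(|u|^2-1)f_0f_0'\,\diff t$ by $\OO(\delta^3)$ using Agmon plus the upper bound --- has a genuine gap. The Agmon estimates of \cref{lem: agmon 2,lem: agmon 3} control $|\psi_\Gamma|$ and its covariant gradient, not the ratio $|u|=|\psi_\Gamma|/f_0$, so they yield no smallness of $|u|^2-1$. And appealing to the upper bound \cref{pro: up bd} to infer $|u|\approx 1$ would require an a~priori lower bound on $\E_0[u]$, which is exactly the content of \cref{pro: pos E0}, proved \emph{after} the present proposition --- a circularity. If one insists on keeping the boundary term, the honest non-circular fix is not to target $\OO(\delta^3)$ but to carry it forward and absorb it against the quartic $\tfrac{1}{2b}\int f_0^4(1-|u|^2)^2$ inside $\E_0[u]$, exactly as the paper does for a structurally identical term in the proof of \cref{pro: pos E0} (the argument following~\eqref{eq: E0 lb 2}). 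That produces $\OO(\delta^{4/3}|\log\delta|)$, sufficient for the main theorem but weaker than the $\OO(\delta^3)$ in \eqref{eq: splitting}.
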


\begin{proof}
	Let us consider first $ \Gamma^+ $: by \eqref{eq: splitting u}, dropping the label $ + $ on $ u $ for short,
	\[
		\mathcal{G}[\psi_\Gamma; \Gamma^+] =\int_{\Gamma^{+}}\diff s\diff t\, \lf\{\lf|\lf(\nabla + i \fv \ri) \lf(f_0\, u e^{i\Phi_+} \ri)\ri|^2 - \tx\frac{1}{2b}(2f_0^2|u|^2-f_0^4|u|^4) \ri\}.
	\]
	Using \eqref{eq: gauge change}, we get 
	\bmln{
		\lf|\lf(\nabla + i \fv \ri) \lf(f_0\, u\, e^{i\Phi_+} \ri)\ri|^2 = \lf|\lf(\nabla - i \lf(t + \alpha_0 \ri) \hat{\ev}_s \ri) \lf(f_0\, u \ri)\ri|^2 \\
		= |f_0'|^2 |u|^2 + f_0^2 |\partial_s u|^2 +  f_0^2 |\partial_t u|^2 + (t+\alpha_0)^2 f_0^2|u|^2								-2(t+\alpha_0)f_0^2\,j_s[u]+ f_0\partial_t f_0 \partial_t|u|^2.
	}
	Acting as in \cite{CR1}, we perform an integration by parts of the term $f_0(\partial_t f_0)\partial_t|u|^2$:
	\begin{equation}\label{eq: first integration gamma - delta}
		\int_{\Gamma^+}\diff s\diff t\, f_0(t) \nabla f_0 \cdot \nabla |u|^2 = \int_{\partial \Gamma^+}\diff\sigma\, f_0|u|^2 \hat{\nv}_+ \cdot \nabla f_0
	 -\int_{\Gamma^+}\diff s\diff t\, \lf\{ |f'_0|^2 |u|^2-f_0f''_0|u|^2 \ri\},
	\end{equation}
	where $\hat{\nv}_+$ is the outward normal unit vector along $\partial \Gamma^+$.
	The boundary of $ \Gamma^+ $ is composed of four segments but
	\beq
		\lf| \hat{\nv}_+ \cdot \nabla f_0 \ri| = \lf| f_0^{\prime} \ri| = 0,	\qquad		\mbox{on } \partial\Gamma^+ \cap \bdi  \mbox{ and on } \partial \Gamma^+ \cap \bdo
	\eeq
	while $ \lf| \hat{\nv}_+ \cdot \nabla f_0 \ri| = \lf| \hat{\ev}_s \cdot \nabla f_0 \ri| =0 $ on $ \partial \Gamma^+ \cap \bdbd $. Therefore, the boundary term above reduces to the one computed over the remaining portion of $ \partial \Gamma^+ $, i.e., the bisectrix of the region $ \Gamma $, that we denote by $ \partial \Gamma_{\mathrm{bis}} $.
	Hence, we obtain
	\bmln{
		\mathcal{G}[\psi_\Gamma; \Gamma^+] = \int_{\Gamma^+}\diff s\diff t\,f_0^2 \lf\{\frac{-f_0^{\prime\prime}}{f_0}|u|^2 + |\nabla u|^2  +(t+\alpha_0)^2 |u|^2
	-2(t+\alpha_0)\,j_s[u]+\frac{1}{2b}(2-f_0^2) \ri\}
	\\
	+\int_{\partial \Gamma_{\mathrm{bis}}} \diff\sigma\,  f_0 |u|^2\hat{\nv}_+ \cdot \nabla f_0.
	}
	which, via the variational equations for $f_0$, leads to
	\beq
		\label{eq: E_0 u1}
		\mathcal{G}[\psi_\Gamma; \Gamma^+] = - \frac{1}{2b} \int_{\Gamma^+}\diff s\diff t\,   f_0^4(t) + \E_0[u; \Gamma^+] +\int_{\partial \Gamma_{\mathrm{bis}}} \diff\sigma\,  f_0 |u|^2\hat{\nv}_+ \cdot \nabla f_0.
	\eeq
	
	Reproducing the computation in $ \Gamma^- $ and using \eqref{eq: gauge change 1 -} and \eqref{eq: gauge change 2 -} there, we end up with a similar expression, but, since $ \hat{\nv}_- = - \hat{\nv}_+ $, the two boundary terms cancel out, since  $ \lf| u_+ \ri| = \lf| u_- \ri| $ on $ \partial \Gamma_{\mathrm{bis}} $. The final result is then
	\beq
		\label{eq: E_0 u2}
		\mathcal{G}[\psi_\Gamma; \Gamma] = - \frac{1}{2b} \int_{\Gamma^+ \cup \Gamma^-}\diff s\diff t\,   f_0^4(t) + \E_0[u],
	\eeq
	and it just remains to compute the first term on the r.h.s.. Actually, such a computation has already been done in the upper bound proof, therefore we only sketch it:
	\bml{
		 - \frac{1}{2b} \int_{\Gamma^+ \cup \Gamma^-}\diff s\diff t\,   f_0^4(t) =  - \frac{1}{2b} \int_{R^+ \cup R^-}\diff s\diff t\,   f_0^4(t) + \frac{1}{2b} \int_{\lf( R^+ \setminus \Gamma^+ \ri) \cup \lf( R^- \setminus \Gamma^- \ri) }\diff s\diff t\,   f_0^4(t)	\\
		 = 2 L \eoneo(\ell) - \int_{\lf( R^+ \setminus \Gamma^+ \ri) \cup \lf( R^- \setminus \Gamma^- \ri) }\diff s\diff t\,  \lf\{ - f_0^{\prime\prime} + (t+\alO)^2 f_0 - \tx\frac{1}{2b} \lf(2 - f_0^2 \ri) f_0 \ri\} f_0	\\
		 = 2 L \eoneo - \delta \ecorr -\delta \int_0^{\ell} \diff t \: t \lf( t + \alO \ri) \lf( t + 2 \alO \ri) f_0^2 + \OO(\delta^3),
	}
	where we have computed
	\bmln{
		- \int_{R^+ \setminus \Gamma^+}\diff s\diff t\, f_0^{\prime\prime}(t) f_0(t) = - \int_0^{\ell \tan \delta/2} \diff s \int_{s/\tan \delta/2}^{\ell} \diff t \: f_0^{\prime\prime}(t) f_0(t) \\
		= \int_{R^+ \setminus \Gamma^+}\diff s\diff t\, {f_0^{\prime}(t)}^2 + \int_0^{\ell \tan \delta/2} \diff s \: f_0^{\prime}\lf(\tx\frac{s}{\tan \delta/2}\ri) f_0\lf(\tx\frac{s}{\tan \delta/2}\ri) \\ 		
		= \int_{R^+ \setminus \Gamma^+}\diff s\diff t\, {f_0^{\prime}(t)}^2 - \tx\frac{1}{2}  f_0^{2}\lf(0\ri) \tan \delta/2,
	}
	\bdm
		- \int_{R^- \setminus \Gamma^-}\diff s\diff t\, f_0^{\prime\prime}(t) f_0(t) = \int_{R^- \setminus \Gamma^-}\diff s\diff t\, {f_0^{\prime}(t)}^2+ \tx\frac{1}{2}  f_0^{2}\lf(0\ri) \tan \delta/2,
	\edm
	and estimated (see \eqref{eq: R minus Gamma})
	\bmln{
		\int_{R^{\pm} \setminus \Gamma^{\pm} }\diff s\diff t\,  \lf\{ {f_0^{\prime}}^2 + (t+\alO)^2 f_0^2 - \tx\frac{1}{2b} \lf(2 - f_0^2 \ri) f_0^2 \ri\}  \\ 
		= \delta \int_{0}^{\ell} \diff t \: t \lf\{ {f_0^{\prime}}^2 + (t+\alO)^2 f_0^2 - \tx\frac{1}{2b} \lf(2 - f_0^2 \ri) f_0^2 \ri\} + \OO(\delta^3).
	}
\end{proof}

After the splitting, we have almost isolated in \eqref{eq: splitting} the energy to extract from the lower bound. Indeed, the reduced energy $ \E_0[u] $ turns out to be positive on a rectangle, as a consequence of the pointwise positivity of the cost function $ \kol $ (see \eqref{eq: Kol}). There is however an additional unexpected term of order $ \delta $ (the third one on the r.h.s. of \eqref{eq: splitting}), which must cancel out in order to recover the correct energy. This is precisely the point where the phase discontinuity described in \cref{sec: preliminaries} comes into play: the region $ \Gamma $ is indeed obtained by gluing together two almost rectangular regions $\Gamma^{\pm} $ and $ \E_0[u]  $ is positive on it {\it up to} some boundary terms on the bisectrix, which exactly cancel the unwanted term in \eqref{eq: splitting}.

	\begin{pro}[Lower bound on the reduced energy]
		\label{pro: pos E0} 
		\mbox{}	\\
		Under the same assumptions of \cref{pro: energy splitting}, 
		\bml{\label{eq: pos E0}
			\mathcal{E}_0[u] \geq \delta \int_0^{\ell} \diff t \: t \lf( t + \alO \ri) \lf( t + 2 \alO \ri) f_0^2 + \frac{1}{4b} \int_{\Gamma} \diff s \diff t \:  f_0^4 \lf(1- |u|^2 \ri)^2	\\
		+ \OO(\delta^{4/3} |\log\delta|) + \OO(L^2 \ell^{-\infty}).
		}
	\end{pro}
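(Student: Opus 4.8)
The plan is to prove that $\E_0[u]$ is non-negative up to the explicit $\delta$-correction by quarantining its only sign-indefinite term, the current contribution $-2\int (t+\alO)f_0^2 j_s[u]$, and removing it through an integration by parts steered by the one-dimensional \emph{cost function}
\[
	\kol(t) := \int_t^{\ell}(\tau+\alO)f_0^2(\tau)\,\diff\tau .
\]
Two properties of $\kol$ are essential and are recorded in \cref{sec: appendix} (see \eqref{eq: Kol}): it satisfies the pointwise bound $0\leq 2\kol\leq f_0^2$, and it vanishes at both endpoints, $\kol(0)=\kol(\ell)=0$, the second being precisely the optimality condition $\int_0^{\ell}(t+\alO)f_0^2 = 0$ for $\alO$. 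Since $\kol' = -(t+\alO)f_0^2$, on each patch $\Gamma^{\pm}$ I rewrite the indefinite term as $2\int_{\Gamma^{\pm}}\kol'\,j_{s_\pm}[u_\pm]$ and prepare to integrate by parts in the normal variable.

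Integrating by parts in $t_\pm$, and then using the identity $\partial_t j_s - \partial_s j_t = 2\,\Im(\partial_t\bar u\,\partial_s u)$ together with one further integration by parts in $s_\pm$ (licit because $\kol$ is $s$-independent), I obtain on each $\Gamma^{\pm}$: a genuinely two-dimensional bulk term $-4\int\kol\,\Im(\partial_t\bar u\,\partial_s u)$; a line integral on the outer ends $\{s_\pm=\pm L\}$; and a line integral along the bisectrix. The endpoint terms at $t=0$ and $t=\ell$ drop out since $\kol$ vanishes there, and the contribution on $\{s_\pm=\pm L\}$ is $\OO(L^2\ell^{-\infty})$ by the Agmon estimate of \cref{lem: agmon 3}. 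By Cauchy--Schwarz and the cost-function bound $2\kol\leq f_0^2$,
\[
	-4\int\kol\,\Im(\partial_t\bar u\,\partial_s u)\;\geq\; -\int f_0^2\big(|\partial_t u|^2+|\partial_s u|^2\big),
\]
so this bulk term is absorbed into the positive gradient and quartic energy of $\E_0$, a reserve of which I keep aside (the surviving portion being $\tfrac{1}{4b}\int_{\Gamma}f_0^4(1-|u|^2)^2$) for the error analysis below.

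The entire correction therefore originates from the bisectrix line integrals, which fail to cancel between $\Gamma^+$ and $\Gamma^-$. Collecting them with the common oriented tangent $\bm{\tau}$ to $\partial\Gamma_{\mathrm{bis}}$, their sum is $2\int_{\partial\Gamma_{\mathrm{bis}}}\kol(t)\,(\jv[u_+]-\jv[u_-])\cdot\bm{\tau}$. Here the mismatch is purely a coordinate effect: since $\psi_\Gamma$ is single-valued, $u_+ = u_-\,e^{i(\Phi_- -\Phi_+)}$, whence $\jv[u_+]-\jv[u_-] = |u_-|^2\,\nabla(\Phi_- -\Phi_+)$. A direct computation from \eqref{eq: polar coord gamma - +} gives, on the bisectrix, $\Phi_- -\Phi_+ = 2t\tan(\delta/2)\big(\tfrac12 t+\alO\big) = \tfrac{\delta}{2}t(t+2\alO)+\OO(\delta^3)$, hence tangential derivative $\delta(t+\alO)+\OO(\delta^3)$. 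The bisectrix term is therefore $2\delta\int_0^{\ell}\kol(t)(t+\alO)|u_-|^2\,\diff t + \OO(\delta^3)$; replacing $|u_-|^2$ by $1$ and integrating by parts once more, using $\kol'=-(t+\alO)f_0^2$, $\kol(0)=\kol(\ell)=0$ and $\int_0^{\ell}(t+\alO)f_0^2=0$, I get
\[
	2\int_0^{\ell}\kol(t)(t+\alO)\,\diff t = \int_0^{\ell}(t+\alO)^3 f_0^2\,\diff t = \int_0^{\ell}t(t+\alO)(t+2\alO)f_0^2\,\diff t,
\]
which is exactly the claimed leading contribution.

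It remains to control the error from replacing $|u_-|^2$ by $1$ along the bisectrix. This one-dimensional defect is traded against the reserved gradient and quartic energy: by a one-dimensional trace estimate relating $|u_-|^2-1$ on $\partial\Gamma_{\mathrm{bis}}$ to $\int f_0^4(1-|u|^2)^2$ and $\int f_0^2|\nabla u|^2$ over a thin neighborhood of the bisectrix, followed by a Young inequality and an optimization of the neighborhood width, the remainder is bounded by $\OO(\delta^{4/3}|\log\delta|)$; the logarithm comes from the region near the vertex, where $\varrho\to0$ and the two tubular charts degenerate. I expect precisely this last point---the quantitative transfer of the bisectrix trace of $|u|^2-1$ into bulk quantities, uniformly down to the vertex, and the attendant optimization of scales---to be the main obstacle, the remaining ingredients (the cost-function positivity and the several integrations by parts) being either available from \cref{sec: appendix} or direct.
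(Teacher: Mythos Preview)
Your approach coincides with the paper's: both rewrite the current term via the potential $F_0(t) = -2\int_t^\ell(\tau+\alpha_0)f_0^2\,\diff\tau$ (your $-2K_{0,\ell}$, which is \emph{not} the paper's cost function $K_{0,\ell} = (1-d_\ell)f_0^2 + F_0$ despite sharing the name), bound the resulting vorticity by $|\nabla u|^2$ using $|F_0|\leq f_0^2$, and recover the $\delta$-term from the bisectrix boundary contribution, with the $|u|^2\to1$ replacement error handled by a thin-strip argument near the bisectrix and a scale optimization yielding width $\sim\delta^{2/3}$. Two small slips: the $\{s_\pm=\pm L\}$ boundary terms actually vanish exactly since $u_\pm\equiv1$ on $\partial\Gamma_{\mathrm{bd}}$ by the Dirichlet condition, and the $|\log\delta|$ arises from an intermediate cutoff in the $t$-variable (needed because $f_0^{-2}$ is unbounded for large $t$), not from the vertex region.
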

	
\begin{proof}
	The idea of the proof goes back to the works on Bose-Einstein condensates (see, e.g., \cite{CPRY,CRv,CRY, R1}) and consists in an integration by parts of the current term in the energy, which is the only non-positive term there. This allows to express that contribution in terms of the {\it vorticity measure} $ \curl \jv[u] $, which in turn is pointwise bounded by the gradient square $ |\nabla u|^2 $. In conclusion, one gets a lower bound to the energy given by the integral of a suitable function times $  |\nabla u|^2 $. Such a function turns out to be the {\it cost function} $ \kol $ discussed in \cref{sec: appendix} (see \eqref{eq: Kol}) and its pointwise positivity is enough to complete the argument. As anticipated, however,  the shape of the domain  $ \Gamma = \Gamma^+ \cup \Gamma^-  $ is responsible for the emergence of some boundary terms along the bisectrix, which are actually crucial to recover the first term on the r.h.s. of \eqref{eq: pos E0}.
	
	Recalling the definition \eqref{eq: fol} of $ F_0 $, we write, 
\bml{
	\label{eq: div teo F0}
  	- 2\int_{\Gamma^{\pm}}\diff s\diff t\, (t+\alpha_0)f_0^2\, j_s[u_{\pm}]= - \int_{\Gamma^{\pm}}\diff s\diff t\, \nabla^{\perp} F_0 \cdot \jv[u_{\pm}]  = \int_{\Gamma^{\pm}}\diff s\diff t\, F_0 \nabla^\perp\cdot \mathbf{j}[u_{\pm}] \\
  - \int_{\partial \Gamma^{\pm}}\diff\sigma\, F_0\, \hat{\tav}_{\pm} \cdot \mathbf{j}[u_{\pm}],
}
where $\hat{\tav}_{\pm}$ are the unit tangential vectors to $\partial\Gamma^{\pm}$.
We first consider the first term on the r.h.s above: since $ \nabla^\perp \cdot \mathbf{j}[u]=- \partial_t \Im \lf( u^* \partial_s u \ri) + \partial_s \Im \lf( u^* \partial_t u \ri) = 2 \Im \lf( \partial_s u^* \partial_t u \ri) $ and $ F_0 \leq 0 $,
\begin{equation}\label{eq: int est F0}
	 \int_{\ggam^+}\diff s\diff t\, F_0 \nabla^\perp\cdot \mathbf{j}[u] \geq \int_{\ggam^+}\diff s\diff t\, F_0(t)  \lf| \nabla u \ri|^2,
\end{equation}
yielding
\bml{
	\label{eq: E0 lb 1}
	\E_0[u] \geq \int_{\Gamma^{\pm}} \diff s\diff t \: \kol(t)  \lf| \nabla u \ri|^2  + \frac{1}{2b} \int_{\Gamma} \diff s \diff t \:  f_0^4 \lf(1- |u |^2 \ri)^2 - \int_{\partial \Gamma^{\pm}}\diff\sigma\, F_0\, \hat{\tav}_{\pm} \cdot \mathbf{j}[u_{\pm}] \\
	\geq \frac{1}{2b} \int_{\Gamma} \diff s \diff t \:  f_0^4 \lf(1- |u |^2 \ri)^2 - \int_{\partial \Gamma^{\pm}}\diff\sigma\, F_0\, \hat{\tav}_{\pm} \cdot \mathbf{j}[u_{\pm}] + \OO(L^2 \ell^{-\infty}),
}
where we have used the pointwise positivity of $ \kol $ in $ I_{\bar{\ell}} $ (see \eqref{eq: kol positive}) and the exponential decay of $ \nabla \psi_{\Gamma} $, $ \psi_{\Gamma} $, $ f_0 $ and $ f_0^{\prime} $ provided by \eqref{eq: agmon 2}, \eqref{eq: agmon 3}, \eqref{eq: f0 decay} and \eqref{eq: f0 prime decay 2} to estimate the rest: by \eqref{eq: F0 bound},
\bmln{
	\int_{\Gamma^{\pm}} \diff s\diff t \: \kol(t)  \lf| \nabla u \ri|^2 \geq \int_{\Gamma^{\pm} \cap I_{\bar{\ell}}} \diff s\diff t \: \kol(t)  \lf| \nabla u \ri|^2 \geq - C \ell \int_{-L}^{L} \diff s \int_{\bar{\ell}}^{\ell} \diff t \: f_0^2(t) \lf| \nabla u \ri|^2	\\
	 \geq - C \ell \int_{-L}^{L} \diff s \int_{\bar{\ell}}^{\ell} \diff t \: \lf\{ \lf[ L + t + f_0^{-2} {f_0^{\prime}}^2   \ri] \lf| \psi_{\Gamma} \ri|^2 +  \lf| \nabla \psi_{\Gamma} \ri|^2 \ri\} = \OO(L^2 \ell^{-\infty}).
} 

Concerning the boundary term in \eqref{eq: E0 lb 1}, the vanishing of $ F_0 $ at $ t = 0 $ and $ t = \ell $ together with the vanishing of the $ t$-current of $ u $ at the boundary $ \bdbd $, thanks to the boundary conditions satisfied by $ \psi_{\Gamma} $, imply that the only boundary terms surviving are the ones along $ \partial \Gamma_{\mathrm{bis}} $. Exploiting the definitions of $ u_{\pm}  $ in \eqref{eq: splitting u}, which yield
\beqn
	\lf. \hat{\tav}_{+} \cdot \mathbf{j}[u_{+}] \ri|_{\partial \Gamma_{\mathrm{bis}}} & = & f_0^{-2} \lf[ j_{\varrho}\lf[ \psi_{\Gamma} \ri] - \lf| \psi_{\Gamma} \ri|^2 \lf( \alO \cos \thebis + \tx\frac{1}{2} \varrho \sin 2 \thebis \ri) \ri],		\\
	\lf. \hat{\tav}_{-} \cdot \mathbf{j}[u_{-}] \ri|_{\partial \Gamma_{\mathrm{bis}}} & = & - f_0^{-2} \lf[ j_{\varrho}\lf[ \psi_{\Gamma} \ri] - \lf| \psi_{\Gamma} \ri|^2 \lf( \alO \cos (\thebis+ \delta) + \tx\frac{1}{2} \varrho \sin (2 \thebis + 2\delta)  \ri) \ri],	
\eeqn
and the trivial identities $ \cos \thebis = \sin \delta/2 $, $ \cos(\thebis+\delta) = - \sin \delta/2 $, we obtain
\bml{
	 \lf. \hat{\tav}_{+} \cdot \mathbf{j}[u_{+}] \ri|_{\partial \Gamma_{\mathrm{bis}}} + \lf. \hat{\tav}_{-} \cdot \mathbf{j}[u_{-}] \ri|_{\partial \Gamma_{\mathrm{bis}}} = -\frac{2}{f_0^2(\varrho \cos \delta/2)} \lf| \psi_{\Gamma} \ri|^2 \lf( \alO \sin \delta/2 + \tx\frac{1}{2} \varrho \sin \delta \ri)	\\
	 = - \tx\frac{1}{2} \delta \lf(1 + \OO(\delta^2) \ri) \lf( \alO + \varrho \ri) \lf. \lf(  \lf| u_+ \ri|^2  + \lf| u_- \ri|^2  \ri)\ri|_{\partial \Gamma_{\mathrm{bis}}}.
}
Hence, the boundary terms in \eqref{eq: div teo F0} become (since $ \cos \delta/2 = 1 + \OO(\delta^2) $)
\bml{
	- \int_{\partial \Gamma^{+}}\diff\sigma\, F_0\, \hat{\tav}_{+} \cdot \mathbf{j}[u_{+}] - \int_{\partial \Gamma^{-}}\diff\sigma\, F_0\, \hat{\tav}_{-} \cdot \mathbf{j}[u_{-}] \\
	= -\frac{1}{2} \delta \lf(1 + \OO(\delta^2) \ri) \int_{0}^{\frac{\ell}{\cos\delta/2}} \diff \varrho \: F_0(\varrho \cos\delta/2)  \lf( \alO + \varrho \cos \delta/2 \ri) \lf. \lf(  \lf| u_+ \ri|^2  + \lf| u_- \ri|^2  \ri)\ri|_{\partial \Gamma_{\mathrm{bis}}}.
}
By \eqref{eq: F0 bound} and the decay of $ \psi_{\Gamma} $ in \eqref{eq: agmon 3}, we can get rid of the prefactor $ 1 + \OO(\delta^3) $ up to an error of order $ \delta^3 $. Furthermore, the smoothness of $ F_0 $, the bound \eqref{eq: F0 bound}, the decay of $ f_0 $ in \eqref{eq: f0 decay} and a direct integration by parts yield
\bml{
	-\int_{0}^{\frac{\ell}{\cos\delta/2}} \diff \varrho \: F_0(\varrho \cos\delta/2)  \lf( \alO + \varrho \cos \delta/2 \ri) = -\int_{0}^{\ell} \diff \varrho \: F_0(\varrho)  \lf( \alO + \varrho \ri) + \OO(\delta^2) \\
	= \int_{0}^{\ell} \diff \varrho \: F^{\prime}_0(\varrho)  \lf( \alO \varrho + \tx\frac{1}{2} \varrho^2 \ri) + \OO(\delta^2) = \int_{0}^{\ell} \diff t \: t (t + \alO) (t + 2 \alO) f_0^2(t) + \OO(\delta^2),
}
and plugging this into \eqref{eq: E0 lb 1}, we find
\bml{
	\label{eq: E0 lb 2}
	 \mathcal{E}_0[u] \geq \delta \int_0^{\ell} \diff t \: t \lf( t + \alO \ri) \lf( t + 2 \alO \ri) f_0^2 + \frac{1}{2b} \int_{\Gamma} \diff s \diff t \:  f_0^4 \lf(1- |u |^2 \ri)^2\\
	 + \frac{1}{2} \delta \int_{0}^{\frac{\ell}{\cos\delta/2}} \diff \varrho \: F_0(\varrho \cos\delta/2)  \lf( \alO + \varrho \cos \delta/2 \ri) \lf.\lf(  1 - \lf| u_+ \ri|^2  + 1 - \lf| u_- \ri|^2  \ri)\ri|_{\partial \Gamma_{\mathrm{bis}}} \\
	 + \OO(\delta^{3}) + \OO(L^2 \ell^{-\infty}).
	}
	
	Therefore, it just remains to bound from below the third term on the r.h.s. of \eqref{eq: E0 lb 2}. Let us consider only the contribution of $ u_+ $, since the other is perfectly analogous: let $ \chi $ be a smooth function such that $ \chi = 1 $ on $ \partial \Gamma_{\mathrm{bis}} $ and which vanishes on a line parallel to $ \partial \Gamma_{\mathrm{bis}} $ and passing through $ (\gamma, 0) $ for some $ \gamma > 0 $ to be chosen later. Concretely, we may take a function $ \chi(s) $ equal to $ 1 $ for $ s = t \sin \delta/2 $ and vanishing at $ s = t \sin \delta/2 + \gamma $. Moreover, we may assume that $ \lf| \chi^{\prime}(s) \ri| \leq C \gamma^{-1} $ and
	\bml{
		\label{eq: E0 lb 3}
		\delta \int_{0}^{\ell} \diff t \: F_0(t)  \lf( \alO + t \ri) \lf.\lf(  1 - \lf| u_+ \ri|^2 \ri) \chi  \ri|_{\partial \Gamma_{\mathrm{bis}}} = \delta \int_{0}^{\ell} \diff t \int_{t \sin \delta/2}^{\gamma + t \sin \delta/2} \diff s \: \chi^{\prime}(s) F_0(t)  \lf( \alO + t \ri) \lf(  1 - \lf| u_+ \ri|^2 \ri)
		\\
		- \delta \int_{0}^{\ell} \diff t \int_{t \sin \delta/2}^{\gamma + t \sin \delta/2} \diff s \: \chi(s) F_0(t)  \lf( \alO + t \ri) \partial_s \lf| u_+ \ri|^2.
	}
	
	We now estimate the two terms on the r.h.s. separately: acting as in the proof of \cref{lem: approx f0 and agmon est} and using once more \eqref{eq: F0 bound}, we estimate, for any $ a, d > 0 $,
	\bmln{
		\int_{0}^{\ell} \diff t \int_{t \sin \delta/2}^{\gamma + t \sin \delta/2} \diff s \: \chi^{\prime}(s) F_0(t)  \lf( \alO + t \ri) \lf(  1 - \lf| u_+ \ri|^2 \ri)	\\
		\geq \int_{0}^{\bar{\ell}} \diff t \int_{t \sin \delta/2}^{\gamma + t \sin \delta/2} \diff s \: \chi^{\prime}(s) F_0(t)  \lf( \alO + t \ri) \lf(  1 - \lf| u_+ \ri|^2 \ri)  + \exl \\
		\geq - \frac{C}{\gamma} \int_{0}^{\ell} \diff t \int_{t \sin \delta/2}^{\gamma + t \sin \delta/2} \diff s \: (1 + t) f_0^2 \lf( 1 - \lf| u_+ \ri|^2 \ri) + \exl  \\
		\geq - \frac{C}{\gamma} \lf[ \frac{1}{a} \int_{\Gamma^+} \diff s \diff t \:  f_0^4 \lf(1- |u_+ |^2 \ri)^2 + a d^3 \gamma + e^{- \frac{1}{2} d^2} \ri],
	}
	which leads to
	\bml{
		\label{eq: E0 lb 4}
		\delta \int_{0}^{\ell} \diff t \int_{t \sin \delta/2}^{\gamma + t \sin \delta/2} \diff s \: \chi^{\prime}(s) F_0(t)  \lf( \alO + t \ri) \lf(  1 - \lf| u_+ \ri|^2 \ri) \geq  - \frac{1}{4b} \int_{\Gamma^+} \diff s \diff t \:  f_0^4 \lf(1- |u_+ |^2 \ri)^2 \\
		+ \OO(\delta^2 \gamma^{-1} |\log\delta|^3),
	}
	after taking $ a = \frac{4b \delta}{C\gamma} $ and $ d = |\log\delta| $. For the other term, we directly apply \eqref{eq: agmon 2} to get
	\bml{
		\label{eq: E0 lb 5}
		- \delta \int_{0}^{\ell} \diff t \int_{t \sin \delta/2}^{\gamma + t \sin \delta/2} \diff s \: \chi(s) F_0(t)  \lf( \alO + t \ri) \partial_s \lf| u_+ \ri|^2 \\
		\geq - C \delta \int_{0}^{\bar{\ell}} \diff t \int_{t \sin \delta/2}^{\gamma + t \sin \delta/2} \diff s \: (1 + t) |\psi_{\Gamma}| \partial_s \lf| \psi_{\Gamma} \ri| + \exl = \OO(\delta \sqrt{\gamma}) + \exl,
	}
	via Cauchy-Schwarz inequality. Plugging \eqref{eq: E0 lb 4} and \eqref{eq: E0 lb 5} into \eqref{eq: E0 lb 3}, we finally get
	\bml{
		\delta \int_{0}^{\ell} \diff t \: F_0(t)  \lf( \alO + t \ri) \lf.\lf(  1 - \lf| u_+ \ri|^2 \ri) \chi  \ri|_{\partial \Gamma_{\mathrm{bis}}} + \frac{1}{4b} \int_{\Gamma^+} \diff s \diff t \:  f_0^4 \lf(1- |u_+ |^2 \ri)^2 \\
		= \OO(\delta^2 \gamma^{-1} |\log\delta|^3) + \OO(\delta \sqrt{\gamma}) + \exl = \OO(\delta^{4/3} |\log\delta|) + \exl,
	}
	after an optimization over $ \gamma $, i.e, taking $ \gamma = \delta^{2/3} |\log\delta|^2 $. Hence, we can use the second term on the r.h.s. of \eqref{eq: E0 lb 2} to compensate for the negative contribution on the l.h.s. above, to obtain the result.
\end{proof}

\subsection{Completion of the proofs}
\label{sec: completion}

We are now in position to complete the proofs of our mains results.

\begin{proof}{Proof of Theorem \ref{teo: asympt E corner near pi}}
	For corner angle $ \pi - \delta $, the result is obtained via a straightforward combination of \cref{pro: up bd}, \cref{pro: lw bd} and \eqref{eq: Ecorner}. Let us then comment on the adaptations needed for opening angle $ \pi + \delta $. The  strategy of the proof is indeed the same. In particular, in the upper bound, one has to split the region $ \Gamma $ into the subregions $ \gcompm $, $ \ggampm $ as before, with $\gamma \ll \delta$, and use the same trial state as in \eqref{eq: psi trial}. The outcome is the same estimate as in \eqref{eq: ub} with an opposite sign in front of $ \delta $, which is due to the fact that in this case $ R^{\pm} \subset \Gamma^{\pm} $. The lower bound, on the other hand, is proven in exactly the same way and, in particular, the phase singularity along $ \partial \Gamma_{\mathrm{bis}} $ appears as well.
\end{proof}

It just remains to address \cref{cor: GL almost flat}:

\begin{proof}[Proof of \cref{cor: GL almost flat}]
The result is a direct consequence of Gauss-Bonnet theorem for piecewise smooth domains combined with the energy asymptotics proved in \cref{thm: CG1} and \cref{teo: asympt E corner near pi}. In fact, by Gauss-Bonnet, one gets
\begin{equation}
	\int_{\partial\Omega_{\mathrm{smooth}}}\diff\mathbf{s}\, \mathfrak{K}(\mathrm{s}) + \sum_{j\in\Sigma}(\pi- \beta_j) = 2\pi
\end{equation}
If we use the asymptotics \eqref{eq: Ecorner-delta} for $E_{\mathrm{corner}, \beta_j}$, assuming that  $|\beta_j - \pi| \ll 1$, $ \forall j \in \Sigma $, together with the energy asymptotics \eqref{eq: asym CG2}, choosing $L,\ell=\mathcal{O}(|\log\eps|)$ as in \cite{CR3,CG2}, we get \eqref{eq: asympt delta}. 
\end{proof}

\appendix

\section{One-dimensional Effective Models}
\label{sec: appendix}

In this Appendix we collect some well known results about the effective one-dimensional problems, which are known to play a role in surface superconductivity. We refer to \cite{CR1,CR2,CR3,CDR,CG2} for more details (see, in particular, \cite[Appendix A]{CG2} for a review).

\subsection{Effective model on the half-line}
\label{sec: 1d disc}

The effective model useful to describe the behavior of the order parameter in the surface superconductivity regime is given in first approximation by the following energy functional:
\begin{equation}
	\label{eq: fone}
	\fone_{\star,\alpha}[f] := \displaystyle\int^{+\infty}_0 \mbox{dt} \left\{ |\partial_t f|^2 + (t+\alpha)^2 f^2 -\frac{1}{2b} (2f^2-f^4)\right\},
\end{equation}
where $ t $ is the rescaled distance from the outer boundary and $ \alpha \in \R  $.

For any fixed $ \alpha \in \R $, the functional \eqref{eq: fone} has a unique minimizer in the domain $ \onedom = \lf\{ f \in H^1(\R^{{+}}; \R) \: | \: t f(t) \in L^2(\R^{{+}}) \ri\} $, which is strictly positive and monotonically decreasing for $ t $ large enough. We set $	\eones : = \inf_{\alpha \in \R} \eone_{\star,\alpha} = \inf_{\alpha \in \R} \inf_{f \in \onedom} \fone_{\star,\alpha}[f] $ and denote by $(\alpha_\star, f_\star)\in \mathbb{R}\times \dom^{1\mathrm{D}}$ any minimizing pair for \eqref{eq: fone}. We recall some key properties of the minimization \eqref{eq: fone}:
\begin{itemize}
	\item variational equation for $ \fs $:
		\beq\label{eq: fs var}
			- \fs^{\prime\prime}+(t+\alpha_0)^2\fs = \tx\frac{1}{b}(1 - \fs^2)\fs,\quad 	\mathrm{with}\,\,\,\fs^{\prime}(0) = 0;
		\eeq
 	\item optimality of the phase $ \as $:
		\beq
			\label{eq: optimal as}
			\int_0^{+\infty} \diff t \: \lf( t + \as \ri) \fs^2(t) = 0;
		\eeq
	\item the ground state energy can be expressed as
		\beq
			\eones = - \frac{1}{2b} \int_0^{+\infty} \diff t \: \fs^4(t) < 0.
		\eeq
	\item $ \fs $ decays exponentially in the distance from the boundary \cite[Prop. 3.3]{CR1}:
		\beq
			\label{eq: fs decay}
			\fs \leq C  \exp \lf\{ - \tx\frac{1}{2} \lf( t + \as \ri)^2 \ri\}.
		\eeq
\end{itemize}

\subsection{Effective model on a finite interval}
	\label{sec: 1d no curv}
	We now discuss a variant of the 1D effective model above by minimizing the energy on a finite interval $ I_{\ell} : = [0,\ell] $, $ \ell \gg 1 $, rather than in the whole of $ \R^+ $, i.e., we set
	\beq
		\label{eq: foneo}
		\fone_{\alpha,\ell}[f] := \displaystyle\int^{\ell}_0 \mbox{dt} \left\{ |\partial_t f|^2 + (t+\alpha)^2 f^2 -\frac{1}{2b} (2f^2-f^4)\right\},	\qquad		\eoneo : = \inf_{\alpha \in \R} \inf_{f \in H^1(I_{\ell})} \fone_{\alpha,\ell}[f].
	\eeq
	Exactly as before, one can prove that there is a unique minimizing pair $ \lf(\alpha_0, f_0 \ri) \in \R \times H^1(I_{\ell}) $ of $ \fone_{\alpha,\ell} $. Moreover, $ f_0$ satisfies the same variational equation of $f_\star$ in \eqref{eq: fs var} in the interval $ [0,\ell] $ with $ \alpha_0 $ in place of $ \alpha_\star $, i.e.,
	\begin{equation}\label{eq: var eq f0}
		- f_0^{\prime\prime}+(t+\alpha_0)^2f_0 = \tx\frac{1}{b}(1 - f_0^2)f_0.
	\end{equation}
	In addition, $ f_0 $ satisfies Neumann boundary conditions
	\beq
		\label{eq: fol nbc}
		\fol^{\prime}(0) = \fol^{\prime}(\ell) = 0.
	\eeq
	Furthermore, all the properties \eqref{eq: optimal as} -- \eqref{eq: fs decay}  hold true for $ \fol $ in the interval $[0,\ell]$. In particular, $ f_0 $ is monotonically decreasing for $ t \geq t_0 $, where  $ t_0 $ is the unique maximum point of $ f_0 $ and it satisfies
	\beq
		\label{eq: t0}
		0 < t_0 \leq |\alpha_0| + \frac{1}{\sqrt{b}},
	\eeq
	and
	\beq
		\label{eq: f0 decay}
		f_0(t) \leq C  \exp \lf\{ - \tx\frac{1}{2} \lf( t + \alO \ri)^2 \ri\}.
	\eeq
	A similar estimate holds true for the derivative of $ f_0 $ (see \cite{CG2}[Lemma A.1]): for any $1<b<\Theta_0^{-1}$ and $\overline{t}\in [1,\ell]$, there exists a finite constant $C>0$, such that
	\beqn
	 	\label{eq: f0 prime decay 1}
		|f^{\prime}_0(t)| & \leq & C\exp^{-\frac{1}{4} t^2},	\qquad \forall\, t\in [0,\ell],	\\
	 	\label{eq: f0 prime decay 2}
		|f^\prime_0(t)| & \leq & C\, \overline{t}^3\, f_0(t),		\qquad \forall\, t\in[0,\overline{t}].
	 \eeqn
	
\subsection{Cost function}
\label{sec: cost function}	
A very important object constructed over $ f_0 $ is the {\it cost function}
	\beq	
	\label{eq: Kol}
		\kol(t) : =\lf(1 - d_\ell\ri) \fol^2(t) + \Fol(t),\qquad d_{\ell} = \OO(\ell^{-4}),
	\eeq
	where (by optimality of $ \alO $)
	\beq
		\label{eq: fol}
		\Fol(t) : = 2\int_0^t\diff\eta\, \lf(\eta + \al_0 \ri) \fol^2(\eta) = - 2 \int_t^\ell \diff \eta \, \lf(\eta + \al_0 \ri) \fol^2(\eta),
	\eeq
	The most important property of $ \kol $ is its pointwise positivity in the relevant region, i.e., where $ f_0 $ is not exponentially small in $\ell $: for any $ 1 {<} b < \theo^{-1} $,
	\beq
		\label{eq: kol positive}
		\kol(t) \geq 0, 	\qquad		\mbox{for any } t \in \annol,
	\eeq
	where $I_{\bar{\ell}}$ is defined as 
	\bdm\label{eq: def Iell}
		\annol : = \lf\{ t \in (0,\ell) \: \big| \: \fol(t) \geq \ell^{3} \fol(\ell) \ri\} = \lf[0, \bar{\ell} \ri].
	\edm
	Finally, we underline that \eqref{eq: fs var} and \eqref{eq: fs decay}, imply that $\overline{\ell} = \ell + \OO(1)$ and, as a consequence, 
	\begin{equation}
		f_0(t) = \mathcal{O}(\ell^{-\infty}),\, \qquad \mbox{for }\, t\in [0,\ell]\setminus I_{\bar{\ell}}. 
	\end{equation}
	Note that, as a by-product of the positivity of $ \kol $, we also get
	\beq
		\label{eq: F0 bound}
		\lf| F_0(t) \ri| \leq  
		\begin{cases}
		f_0^2(t),			&	\mbox{for } t \in I_{\bar{\ell}},		\\
		C \ell f_0^2(t),	&  	\mbox{for } t \in I^{\mathrm{c}}_{\bar{\ell}},
		\end{cases}
	\eeq
	since, inside $ I_{\bar{\ell}} $, $ |F_0(t)| \leq f_0^2(t) $, while, for $ t \geq \bar{\ell} $, we can use the monotonicity of $ f_0 $ to bound
	\bdm
		|F_0(t)| = \int_t^{\ell} \diff \eta \: 2( \eta + \alO) f_0^2(\eta) \leq C \ell f_0^2(t).
	\edm
	
\subsection{Useful estimates close to $ \partial \Gamma_{\mathrm{bis}} $}
\label{sec: estimates}

We prove here some useful bounds in the regions $ \ggampm $. The key observation is that, in those regions, $ \vartheta = \vartheta_{\mathrm{bis}} + \OO(\gamma) = \frac{\pi}{2} + \OO( \delta + \gamma) $, so that
\[
	\sin\vartheta = 1 + \OO(\gamma^2),\qquad \cos\vartheta = \OO(\gamma),
\]
under the assumption $ \delta = \OO(\gamma) $ (recall \eqref{eq: gamma condition}) and this leads to some other useful approximations.

\begin{lem}\label{lem: estimate f0}
\mbox{}\\
Let $ \delta = \OO(\gamma)$. Then, for any $\vartheta\in [\vartheta_<,\vartheta_>] $, there exist two finite constant $c, C > 0 $, such that
\beqn
	\lf|f_0(\varrho\sin\vartheta) - f_0(\varrho) \ri| & \leq & C \gamma e^{- c\varrho^2}, 
	\label{eq: approx f0 1}	\\
	\lf| f^{\prime}_0(\varrho\sin\vartheta) - f^\prime_0(\varrho) \ri| & \leq & C\gamma^2 e^{- c\varrho^2}. \label{eq: approx f0 2}	
\eeqn
\end{lem}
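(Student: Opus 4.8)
The plan is to reduce both estimates to the mean value theorem, exploiting that the argument $\varrho\sin\vartheta$ is extremely close to $\varrho$ throughout $\ggampm$. The quantitative input is precisely the observation recorded just before the statement: for $\vartheta\in[\vartheta_<,\vartheta_>]$ one has $\vartheta=\tx\frac{\pi}{2}+\OO(\gamma)$ under the assumption $\delta=\OO(\gamma)$ (recall \eqref{eq: gamma condition}), so that $\sin\vartheta=1+\OO(\gamma^2)$. Consequently $|1-\sin\vartheta|\leq C\gamma^2$, whence $|\varrho\sin\vartheta-\varrho|=\varrho\,(1-\sin\vartheta)\leq C\gamma^2\varrho$. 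Moreover $\sin\vartheta\geq\tx\frac12$ for $\gamma$ small enough, so any point $\xi$ lying in the interval with endpoints $\varrho\sin\vartheta$ and $\varrho$ (note $\varrho\sin\vartheta\leq\varrho$) satisfies $\xi\geq\tx\frac{\varrho}{2}$. This last fact is what will let us transfer the Gaussian decay from $\xi$ to $\varrho$.

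For \eqref{eq: approx f0 1} I would apply the mean value theorem to $f_0$, writing $|f_0(\varrho\sin\vartheta)-f_0(\varrho)|\leq\sup_{\xi}|f_0'(\xi)|\,|\varrho\sin\vartheta-\varrho|$ with $\xi$ ranging over the interval above. Inserting the bound $|\varrho\sin\vartheta-\varrho|\leq C\gamma^2\varrho$ together with the decay \eqref{eq: f0 prime decay 1} of $f_0'$ and the inequality $\xi\geq\varrho/2$ yields $|f_0(\varrho\sin\vartheta)-f_0(\varrho)|\leq C\gamma^2\varrho\,e^{-\varrho^2/16}$. Absorbing the polynomial prefactor $\varrho$ into the Gaussian at the cost of shrinking the constant, and recalling $\gamma^2\leq\gamma$ for $\gamma$ small, produces the claimed bound $C\gamma\,e^{-c\varrho^2}$ (in fact the sharper one with $\gamma^2$).

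For \eqref{eq: approx f0 2} I would repeat the argument verbatim with $f_0'$ in place of $f_0$, which requires controlling $f_0''$. Here the key tool is the variational equation \eqref{eq: var eq f0}, which gives $f_0''=\lf((t+\alO)^2-\tx\frac{1}{b}(1-f_0^2)\ri)f_0$, whence $|f_0''(\xi)|\leq C(1+\xi^2)f_0(\xi)\leq C\,e^{-c\xi^2}$ by the decay \eqref{eq: f0 decay}. The mean value theorem then gives $|f_0'(\varrho\sin\vartheta)-f_0'(\varrho)|\leq\sup_{\xi}|f_0''(\xi)|\,|\varrho\sin\vartheta-\varrho|\leq C\gamma^2\varrho\,e^{-c\varrho^2}$, and absorbing the prefactor $\varrho$ as before delivers $C\gamma^2\,e^{-c\varrho^2}$.

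There is no genuine obstacle in this lemma: the whole proof is a mean value estimate combined with the known Gaussian decay of $f_0$ and $f_0'$ (and of $f_0''$, via the variational equation) and the quadratic smallness $1-\sin\vartheta=\OO(\gamma^2)$ near the bisectrix. The only points deserving minor care are keeping $\xi$ bounded below by $\varrho/2$, so that the decay can be moved from $\xi$ to $\varrho$, and systematically absorbing the polynomial prefactors into the exponentials by slightly decreasing the constant $c$.
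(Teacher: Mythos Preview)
Your proof is correct and follows essentially the same strategy as the paper: a mean value/Taylor argument combined with the Gaussian decay of $f_0$, $f_0'$, and (via the variational equation \eqref{eq: var eq f0}) $f_0''$. The only cosmetic difference is that the paper expands $\vartheta\mapsto f_0(\varrho\sin\vartheta)$ around $\vartheta=\pi/2$, picking up a factor $|\vartheta-\tfrac{\pi}{2}|=\OO(\gamma)$, whereas you apply the mean value theorem directly in the argument of $f_0$, picking up $|\varrho(1-\sin\vartheta)|=\OO(\gamma^2\varrho)$; this is why you obtain the sharper $\gamma^2$ bound already in \eqref{eq: approx f0 1}, which the paper's statement does not claim.
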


\begin{proof}
For the proof of \eqref{eq: approx f0 1} we simply observe that a straightforward application of Taylor formula yields
\[
	f_0(\varrho\sin\vartheta) = f_0(\varrho)  + f_0^{\prime}(\varrho\sin\tilde\vartheta)\lf(\vartheta - \tx\frac{\pi}{2}\ri),
\]
for some $\tilde\vartheta\in (\vartheta_<,\vartheta_>) $. We now use \eqref{eq: f0 prime decay 1} to bound
\[
	\lf|f_0^\prime(\varrho\sin\tilde\vartheta) \ri| \leq C e^{-\frac{1}{4}\varrho^2\sin^2\tilde\vartheta }\leq C e^{-c\varrho^2},
\]
for some $c>0$, which implies the result.

To prove \eqref{eq: approx f0 2}, we use again the Taylor formula applied to $ f_0^{\prime} $  and exploit the variational equation \eqref{eq: var eq f0} to control $ f_0^{\prime\prime} $, which yields the desired estimate. We omit the computation for the sake of brevity.
\end{proof}

\begin{lem}
	\label{lem: approx f0 and agmon est}
	\mbox{}\\
	Let $ \delta = \OO(\gamma)$, $ \beta \geq 0 $ and $ a > 1 $. Then, for any $\vartheta\in[\vartheta_<,\vartheta_>]$ and any  function $u \in C(\Gamma)$, we get
	\begin{multline}\label{eq: gamma f0}
		\int_{\ggampm}\diff\vartheta \diff\varrho\,\varrho^\beta \: f_0^2(\varrho \sin\vartheta) |u|^2 - \gamma\int_{0}^{\ell}\diff\varrho\, \varrho^\beta f_0^2(\varrho) \geq \frac{1}{a} \int_{\ggampm}\diff\vartheta \diff\varrho \, \varrho \: f_0^4(\varrho\sin\vartheta) \lf(1-|u|^2\ri)^2
		\\
		 + \OO(\gamma^{2}) + \OO\big( \gamma a \lf|\log a\ri|^{\beta} \big).
	\end{multline}
\end{lem}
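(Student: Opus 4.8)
The plan is to reduce the two–dimensional angular integral to a one–dimensional radial one and then to trade the resulting linear term against the quadratic term on the right-hand side by a weighted Cauchy--Schwarz inequality, the only genuine difficulty being the control of a non-decaying polynomial remainder. Throughout I write $ w := 1 - |u|^2 $ and read $ \int_{\ggampm} $ as the integral over the full transition sector $ \ggam^+ \cup \ggam^- $ of angular width $ \gamma $, consistently with the factor $ \gamma $ in front of the radial integral. First I would split $ |u|^2 = 1 - w $ and isolate the purely $ f_0 $-dependent piece. Using the elementary radial reduction $ \int_{\ggampm} \diff\vartheta\diff\varrho\,\varrho^\beta F(\varrho) = \gamma \int_0^\ell \diff\varrho\,\varrho^\beta F(\varrho) + \exl $ for rapidly decaying $ F $ (exactly as in the computation leading to \eqref{eq: R minus Gamma}), together with \cref{lem: estimate f0} and \eqref{eq: approx f0 1} to replace $ f_0^2(\varrho\sin\vartheta) $ by $ f_0^2(\varrho) $ in that piece up to $ \OO(\gamma^2) $, I obtain
\[
	\int_{\ggampm} \diff\vartheta\diff\varrho\,\varrho^\beta f_0^2(\varrho\sin\vartheta)|u|^2 - \gamma \int_0^\ell \diff\varrho\,\varrho^\beta f_0^2(\varrho) = - \int_{\ggampm} \diff\vartheta\diff\varrho\,\varrho^\beta f_0^2(\varrho\sin\vartheta)\,w + \OO(\gamma^2).
\]
I would deliberately keep the weight $ f_0^2(\varrho\sin\vartheta) $ intact in the remaining $ u $-dependent term, so that the quadratic quantity generated below carries precisely the factor $ f_0^4(\varrho\sin\vartheta) $ of the statement.

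The second step is a pointwise weighted Young inequality: for every $ a>1 $, pairing $ \varrho^{1/2} f_0^2 |w| $ with $ \varrho^{\beta - 1/2} $ yields
\[
	\varrho^\beta f_0^2(\varrho\sin\vartheta)\,|w| \leq \tx\frac{1}{a}\,\varrho\, f_0^4(\varrho\sin\vartheta)\,w^2 + \tx\frac{a}{4}\,\varrho^{2\beta - 1}.
\]
Integrating bounds the linear term from below by $ \tfrac{1}{a} $ times the quadratic integral $ \int_{\ggampm} \varrho f_0^4(\varrho\sin\vartheta)(1-|u|^2)^2 $ appearing in \eqref{eq: gamma f0}, minus the polynomial remainder $ \tfrac{a}{4}\int_{\ggampm}\varrho^{2\beta - 1} $; the quadratic term is carried with exactly the sign needed so that, in the application of \cref{pro: pos E0}, it can be reabsorbed into the positive quartic term $ \tfrac{1}{2b} f_0^4(1-|u|^2)^2 $ of the reduced energy.

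The \textbf{main obstacle} is the third step: the remainder $ \varrho^{2\beta-1} $ has no Gaussian decay, so integrating it over $ \ggampm $ — where $ \varrho $ runs up to the inner boundary $ \bar\varrho(\vartheta)\geq\ell $ — would be far too costly. I would therefore split the radial integration at a cutoff $ \varrho = d $. For $ \varrho\leq d $ I apply the Young bound, so that the remainder contributes $ \tfrac{a}{4}\int_{\ggampm\cap\{\varrho\le d\}}\varrho^{2\beta-1} = \OO(\gamma a\, d^{2\beta}) $ (the sector having angular width $ \gamma $; for $ \beta\ge\tfrac12 $, the relevant range, this is integrable at the vertex). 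For $ \varrho>d $ I would \emph{not} use Young but simply discard the nonnegative piece $ \int \varrho^\beta f_0^2|u|^2\ge 0 $ and bound the leftover by $ -\int_{\ggampm\cap\{\varrho>d\}}\varrho^\beta f_0^2(\varrho\sin\vartheta) \geq -C\gamma e^{-c d^2} $, using the Gaussian decay \eqref{eq: f0 decay} of $ f_0 $. Choosing $ d = \sqrt{C'|\log a|} $ renders the tail $ e^{-c d^2} $ a negligible power of $ a $ while turning $ d^{2\beta} $ into $ |\log a|^\beta $, so that the whole remainder is $ \OO(\gamma a|\log a|^\beta) $. Finally, enlarging the quadratic integral from $ \{\varrho\le d\} $ back to all of $ \ggampm $ only weakens the lower bound, and assembling the three steps gives \eqref{eq: gamma f0}. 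The delicate points I expect are the bookkeeping of the $ \OO(\gamma^2) $ reduction error and the precise balance of the cutoff producing exactly the power $ |\log a|^\beta $.
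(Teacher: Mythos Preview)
Your approach is essentially identical to the paper's: both split $|u|^2 = 1 - w$, reduce the pure $f_0^2$ piece via \cref{lem: estimate f0} to $\gamma\int_0^\ell \varrho^\beta f_0^2 + \OO(\gamma^2)$, bound the $w$--term by a weighted Young inequality, and tame the non-decaying remainder $\varrho^{2\beta-1}$ by introducing a radial cutoff $d$ combined with the Gaussian tail \eqref{eq: f0 decay} of $f_0$ --- the paper merely writes ``optimization over $d$'' where you make the choice $d \sim |\log a|^{1/2}$ explicit. One caveat on signs: your Young step (and the paper's own computation) actually produces $-\tfrac{1}{a}\int \varrho f_0^4(1-|u|^2)^2$ on the right, not $+\tfrac{1}{a}$; the displayed $+\tfrac{1}{a}$ in the statement is a sign slip, and the negative sign is exactly what permits the absorption into the positive quartic term of $\E_0$ that you correctly anticipate.
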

	
\begin{proof}
	We write
	\beq \label{eq: pm f0}
		\int_{\ggampm}\diff\vartheta \diff\varrho\,\varrho^\beta \: f_0^2(\varrho \sin\vartheta) |u|^2= \int_{\ggampm}\diff\vartheta \diff\varrho\,\varrho^\beta  f_0^2(\varrho\sin\vartheta)
	+ \int_{\ggampm}\diff\vartheta \diff\varrho\,\varrho^\beta  f_0^2(\varrho\sin\vartheta) \lf(|u|^2-1 \ri)
	\eeq
	and exploit the approximation of $f_0(\varrho\sin\vartheta)$ proved in \cref{lem: estimate f0},  specifically \eqref{eq: approx f0 1}, to obtain
	\beq
		\label{eq: approx f0 logdelta}
		\int_{\ggampm}\diff\vartheta \diff\varrho\,\varrho^\beta  f_0^2(\varrho\sin\vartheta)
		= \gamma\int_{0}^{\ell}\diff\varrho\,\varrho^\beta f_0^2(\varrho) + \OO(\gamma^2),
	\eeq
	where we have replaced $\ell/\sin\vartheta$ and $\ell/\sin(\vartheta + \delta)$ with $\ell$ in the integration up to a small error $\exl$. 
	
	We now estimate the second term in \eqref{eq: pm f0}. Let us consider only $ \ggam^+ $ since the estimate in $ \ggam^- $ is identical: for any $ d \in (0,\ell) $
	\bml{ \label{eq: split logdelta}
			\int_{\ggam^+}\diff\vartheta \diff\varrho\,\varrho^\beta  f_0^2(\varrho\sin\vartheta) \lf(|u|^2-1 \ri)= \int_{\vartheta_<}^{\vartheta_{\mathrm{bis}}}\int_0^{d}\diff\vartheta \diff\varrho\,\varrho^\beta  f_0^2(\varrho\sin\vartheta) \lf(|u|^2-1 \ri)
			\\
			+\int_{\vartheta_<}^{\vartheta_{\mathrm{bis}}}\int_d^{\frac{\ell}{\sin \vartheta}}\diff\vartheta \diff\varrho\,\varrho^\beta  f_0^2(\varrho\sin\vartheta) \lf(|u|^2-1 \ri).
	}
	We estimate the first term on the r.h.s. of \eqref{eq: split logdelta} as
	\begin{multline*}
	\int_{\vartheta_<}^{\vartheta_{\mathrm{bis}}}\int_0^{d}\diff\vartheta \diff\varrho\,\varrho^\beta  f_0^2(\varrho\sin\vartheta) \lf(|u|^2-1 \ri)
	\\
	\geq - \frac{a \gamma}{2} \int_0^{d}\diff\varrho\, \varrho^{2\beta - 1} - \frac{1}{a}\int_{\ggam^+}\diff\vartheta \diff\varrho\,\varrho \:  f_0^4(\varrho \sin \vartheta) \lf(1 - |u|^2\ri)^2
	\\
	\geq  -\frac{1}{a} \int_{\ggam^+}\diff\vartheta \diff\varrho\,\varrho \:  f_0^4(\varrho \sin \vartheta) \lf(1 - |u|^2\ri)^2 - C a \gamma d^{2 \beta}.
	\end{multline*}
	
	On the other hand, the second term on the r.h.s. of \eqref{eq: split logdelta} decays exponentially in $d$: by the exponential decay of $ f_0 $ in \eqref{eq: f0 decay} and the fact that $ \sin \vartheta = 1 + \OO( \gamma^2) > 0 $, we deduce that 
	\bml{ \label{eq: estimate expoentially small delta}
		\int_{\vartheta_<}^{\vartheta_{\mathrm{bis}}}\int_d^{\frac{\ell}{\sin\vartheta}}\diff\vartheta \diff\varrho\,\varrho^\beta  f_0^2(\varrho\sin\vartheta) \lf(|u|^2-1 \ri) \geq - \int_{\vartheta_<}^{\vartheta_{\mathrm{bis}}}\int_d^{\frac{\ell}{\sin\vartheta}}\diff\vartheta \diff\varrho\,\varrho^\beta  f_0^2(\varrho\sin\vartheta) \\
		\geq - C \gamma  e^{- \frac{1}{2} d^2},
	} 
	which completes the proof after an optimization over $ d $.
	\end{proof}


\begin{thebibliography}{aaaaa9999}

\bibitem[AB-N]{ABN} \textsc{F. Alouges, V. Bonnaillie-N\"{o}el}, Numerical computations of fundamental eigenstates for the Schr\"{o}dinger operator under constant magnetic field, {\it Numer. Methods Partial Differential Eq.} {\bf 22} (2006), 1090--1105.

\bibitem[Ab]{Ab} \textsc{A. Abrikosov}, On the magnetic properties of superconductors of the second type, {\it Soviet Phys. JETP} {\bf 5} (1957), 1174--1182.







\bibitem[AH]{AH} \textsc{Y. Almog, B. Helffer}, The Distribution of Surface Superconductivity along the Boundary: on a Conjecture of X.B. Pan, {\it SIAM J. Math. Anal.} {\bf 38} (2007), 1715--1732.

\bibitem[Ass1]{Ass}		\textsc{W. Assaad}, The breakdown of superconductivity in the presence of magnetic steps, {\it Commun. Contemp. Math.} published online (2020).

\bibitem[Ass2]{Ass2}		\textsc{W. Assaad}, Magnetic steps on the threshold of the normal state, {\it J. Math. Phys.} {\bf 61} (2020), 101508.

\bibitem[AK]{AK}		\textsc{W. Assaad, A. Kachmar}, The influence of magnetic steps on bulk superconductivity, {\it Discrete Contin. Dyn. Syst. Ser. A} {\bf 36} (2016), 6623--6643.

\bibitem[AKP-S]{AKP} \textsc{W. Assaad, A. Kachmar, M. Persson-Sundqvist}, The Distribution of Superconductivity Near a Magnetic
Barrier, {\it Comm. Math. Phys.} {\bf 366} (2019), 269--332.

\bibitem[BCS]{BCS}	\textsc{J. Bardeen, L. Cooper, J. Schrieffer}, Theory of Superconductivity, {\it Phys. Rev.} {\bf 108}, 1175--1204 (1957).




\bibitem[Bon]{Bo}	\textsc{V. Bonnaillie}, On the fundamental state energy for a Schr\"{o}dinger operator with magnetic field in domains with corners, {\it Asymptot. Anal.} {\bf 41} (2005), 215--258.

\bibitem[B-ND]{BD}	\textsc{V. Bonnaillie-No\"el, M. Dauge}, Asymptotics for the low-lying eigenstates of the Schr\"{o}dinger operator with magnetic field near corners, {\it Ann. Henri Poincar\'{e}} {\bf 7} (2006), 899--931.


\bibitem[BDFM]{BDFM}	\textsc{F. Brosens, J.T. Devreese, V.M. Fomin, V.V. Moshchalkov},
Superconductivity in a wedge: analytical variational results, {\it Solid State Commun.} {\bf 111} (1999), 565--569.



\bibitem[B-NF]{BNF} \textsc{V. Bonnaillie-No\"el, S. Fournais}, Superconductivity in Domains with Corners, {\it Rev. Math. Phys.} \textbf{19} (2007), 607--637.

\bibitem[Cor]{Cor} 	\textsc{M. Correggi}, Surface Effects in Superconductors with Corners, {\it Bull. Unione Mat. Ital.} published online (2020).

\bibitem[CDR]{CDR} \textsc{M. Correggi, B. Devanarayanan, N. Rougerie}, Universal and shape dependent features of surface superconductivity, {\it Eur. Phys. J. B} {\bf 90} (2017), 231.

\bibitem[CD]{CD}	\textsc{M. Correggi, D. Dimonte}, On the Third Critical Speed for Rotating Bose-Einstein Condensates, {\it J. Math. Phys.} {\bf 57} (2016), 071901.

\bibitem[CG1]{CG} \textsc{M. Correggi, E.L. Giacomelli}, Surface Superconductivity in Presence of Corners, \textit{Rev. Math. Phys.} \textbf{29} (2017), 1750005.

\bibitem[CG2]{CG2} \textsc{M. Correggi, E.L. Giacomelli}, Effects of Corners in Surface Superconductivity, \textit{arXiv:1908.10112}.

\bibitem[CPRY1]{CPRY1} \textsc{M. Correggi, F. Pinsker, N. Rougerie, J. Yngvason}, Critical Rotational Speeds in the Gross-Pitaevskii Theory on a Disc with Dirichlet Boundary Conditions, {\it J. Stat. Phys.} {\bf 143}, 261--305 (2011).


\bibitem[CPRY]{CPRY} \textsc{M. Correggi, F. Pinsker, N. Rougerie, J. Yngvason}, Critical Rotational Speeds for Superfluids in Homogeneous Traps, {\it J. Math. Phys.} {\bf 53}, 095203 (2012).



%

\bibitem[CR1]{CRv} \textsc{M. Correggi, N. Rougerie}, Inhomogeneous Vortex Patterns in Rotating Bose-Einstein Condensates, \textit{Comm. Math. Phys.} \textbf{321}, 817--860 (2013).

\bibitem[CR2]{CR1} \textsc{M. Correggi, N. Rougerie}, On the Ginzburg-Landau Functional in the Surface Superconductivity Regime, \textit{Comm. Math. Phys.} \textbf{332} (2014), 1297--1343; erratum {\it Comm. Math. Phys.} {\bf 338} (2015), 1451--1452.

\bibitem[CR3]{CR2} \textsc{M. Correggi, N. Rougerie}, Boundary Behavior of the Ginzburg-Landau Order Parameter in the Surface Superconductivity Regime, {\it Arch. Rational Mech. Anal.} {\bf 219} (2015), 553--606.

\bibitem[CR4]{CR3}	\textsc{M. Correggi, N. Rougerie}, Effects of boundary curvature on surface superconductivity, {\it Lett. Math. Phys.}  {\bf 106} (2016), 445--467.


\bibitem[CRY]{CRY}	\textsc{M. Correggi, N. Rougerie, J. Yngvason}, The Transition to a Giant Vortex Phase in a Fast Rotating Bose-Einstein Condensate, {\it Comm. Math. Phys.} {\bf 303} (2011), 451--508.







\bibitem[ELP-O]{ELP} \textsc{P. Exner, V. Lotoreichik, A. P\'{e}rez-Obiol},
On the Bound States of Magnetic Laplacians on Wedges,
{\it Rep. Math. Phys.} {\bf 82} (2018), 161--185.

\bibitem[FDM]{FDM}  \textsc{V.M. Fomin, J.T. Devreese, V.V. Moshchalkov}, Surface superconductivity in a wedge, {\it Europhys. Lett.} {\bf 42} (1998), 553--558; erratum {\it Europhys. Lett.} {\bf 46} (1999), 118--119.

\bibitem[FH1]{FH3} \textsc{S. Fournais, B. Helffer}, On the third critical field in Ginzburg-Landau theory, \textit{Comm. Math. Phys.} \textbf{266}, 153--196 (2006).

\bibitem[FH2]{FH1} \textsc{S. Fournais, B. Helffer}, \textit{Spectral Methods in Surface Superconductivity}, Progress in Nonlinear Differential Equations and their Applications \textbf{77}, Birkh\"auser, Basel, 2010.

\bibitem[FHP]{FHP} \textsc{S. Fournais, B. Helffer, M. Persson}, Superconductivity between $ H_{c_2} $ and $ H_{c_3} $, {\it J. Spectr. Theory} {\bf 1}, 273--298 (2011).

\bibitem[FK1]{FK} \textsc{S. Fournais, A. Kachmar}, Nucleation of bulk superconductivity close to critical magnetic field, \textit{Adv. Math.} \textbf{226}, 1213--1258 (2011).

\bibitem[FK2]{FK2} \textsc{S. Fournais, A. Kachmar}, The Ground State Energy of the Three Dimensional Ginzburg-Landau Functional Part I: Bulk Regime, \textit{Comm. Partial Differential Equations} \textbf{38}, 339--383 (2013).

\bibitem[FL]{FL}	{\textsc{F.L. Frank, M. Lemm}, Multi-Component Ginzburg-Landau Theory: Microscopic Derivation and Examples, {\it Ann. H. Poincar{\'e}} {\bf 17}, 2285--2340 (2016)}.

\bibitem[FKP]{FKP} \textsc{S. Fournais, A. Kachmar, M. Persson}, The ground state energy of the three dimensional Ginzburg-Landau functional. Part II: surface regime, \textit{J. Math. Pures. App.} \textbf{99}, 343--374 (2013).

\bibitem[FMP]{FMP} \textsc{S. Fournais, J.-P. Miqueu, X.-B. Pan}, Concentration Behavior and Lattice Structure of 3D Surface Superconductivity in the Half Space, {\it Math. Phys. Anal. Geom.} {\bf 22} (2019), 12.

\bibitem[FHSS]{FHSS}	\textsc{R.L. Frank, C. Hainzl, R. Seiringer, J.P. Solovej}, Microscopic Derivation of Ginzburg-Landau Theory, {\it J. Amer. Math. Soc.} {\bf 25}, 667--713 (2012).

\bibitem[Gia]{G}	\textsc{E.L. Giacomelli}, Surface Superconductivity in Presence of Corners, Ph.D. Thesis, 2018.

\bibitem[GL]{GL} \textsc{V.L. Ginzburg, L.D. Landau}, On the theory of superconductivity, {\it Zh. Eksp. Teor. Fiz.} {\bf 20}, 1064--1082 (1950).

\bibitem[Gor]{Gor} \textsc{L.P. Gor'kov}, Microscopic derivation of the Ginzburg-Landau equations in the theory of superconductivity, \textit{Zh. Eksp. Teor. Fiz.} \textbf{36}, 1918--1923 (1959); english translation \textit{Soviet Phys. JETP} \textbf{9}, 1364--1367 (1959).

\bibitem[Gri]{Gr} \textsc{P. Grisvard}, {\it Elliptic Problems in Nonsmooth Domains},  Classics in Applied Mathematics {\bf 69}, SIAM,  2011.


\bibitem[HK]{HK} \textsc{B. Helffer, A. Kachmar}, The density of superconductivity in domains with corners, {\it Lett. Math. Phys.} {\bf 108} (2018), 2169--2187.

\bibitem[Jad]{Ja} \textsc{H.T. Jadallah}, The onset of superconductivity in domains with corner, {\it J. Math. Phys.} {\bf 42} (2001), 4101.

\bibitem[JRS]{JRS} \textsc{H. Jadallah, J. Rubinstein, P. Sternberg}, Phase transition curves
for mesoscopic superconducting samples, {\it Phys. Rev. Lett.} {\bf 82} (1999), 2935--2938.





\bibitem[N {\it et al}]{NSG}	\textsc{Y.X. Ning, C.L. Song, Z.L. Guan, X.C. Ma, X. Chen, J.F. Jia, Q.K. Xue}, Observation of surface superconductivity and direct vortex imaging of a Pb thin island with a scanning tunneling microscope, {\it Europhys. Lett.} {\bf 85} (2009), 27004.


\bibitem[Pan1]{Pa} \textsc{X.-B. Pan}, Surface Superconductivity in Applied Magnetic Fields above $\Hcc$, \textit{Comm. Math. Phys.} \textbf{228} (2002), 327--370.

\bibitem[Pan2]{Pa2}	\textsc{X.-B. Pan}, Upper critical field for superconductors with edges and corners, {\it Calc. Var. Partial Differential Equations} {\bf 14} (2002), 447--482.

\bibitem[Rou]{R1} \textsc{N. Rougerie}, The Giant Vortex State for a Bose-Einstein Condensate in a Rotating Anharmonic Trap: Extreme Rotation Regimes, {\it J. Math. Pures Appl.} {\bf 95}, 296--347 (2011).



\bibitem[SP]{SP}	\textsc{V.A. Schweigert, F.M. Peeters}, Influence of the confinement geometry on surface superconductivity, {\it Phys. Rev. B} {\bf 60} (1999), 3084.

\bibitem[SS]{SS} \textsc{E. Sandier, S. Serfaty}, \emph{Vortices in the Magnetic Ginzburg-Landau Model}, Progress in Nonlinear Differential Equations and their Applications \textbf{70}, Birkh\"{a}user, Basel, 2007; erratum available at http://www.ann.jussieu.fr/serfaty/publis.html.


\bibitem[SJdG]{SJdG} \textsc{D. Saint-James, P.G. de Gennes}, Onset of superconductivity in decreasing fields, \textit{Phys. Lett.} \textbf{7}, 306--308 (1963).

\bibitem[Tin]{Ti} \textsc{M. Tinkham}, {\it Introduction to Superconductivity}, Dover Publications, 2004.

\end{thebibliography}
\end{document}